\newcommand{\paramBox}[1]{\ensuremath{\Box_{#1}}}
\newcommand{\lmplus}[1]{\ensuremath{\mathsf{ML}^{#1}}}
\newcommand{\lmA}[1]{\ensuremath{\mathsf{ML}^{A(#1)}}}
\newcommand{\lmAp}[1]{\ensuremath{\mathsf{ML}^{A(#1)^+}}}
\newcommand{\node}[2]{\ensuremath{\langle\mathbb #1\circ\mathbb #2\rangle}}
\newcommand{\angNode}[2]{\ensuremath{\langle#1\circ#2\rangle}}
\newcommand{\boxlabels}[1]{\ensuremath{\mathop{labels}_\Box(#1)}}
\newcommand{\depth}[1]{\ensuremath{\mathop{depth}(#1)}}
\newcommand{\md}[1]{\ensuremath{\mathop{md}(#1)}}
\newcommand{\dcard}[1]{\ensuremath{\card{\card{#1}}}}
\newcommand{\altword}[1]{\variableLAltword{\ell}{#1}}
\newcommand{\variableLAltword}[2]{\ensuremath{a^{#1}_{#2}}}
\newcommand{\basemodel}[3]{\ensuremath{\mathbb{#1}^{#2}_{#3}}}
\newcommand{\starmodel}[3]{\ensuremath{\mathbb{#1}^{#2,*}_{#3}}}
\newcommand{\baseA}{\basemodel{A}{\ell}{i}}
\newcommand{\baseB}{\basemodel{B}{\ell}{s}}
\newcommand{\starA}{\starmodel{A}{\ell}{i}}
\newcommand{\starB}{\starmodel{B}{\ell}{s}}
\newcommand{\wtrap}{\ensuremath{w_{\mathtext{trap}}}}
\newcommand{\modelClass}[3]{\ensuremath{\mathbb{#1}^{#2,*}_{#3}}}
\newcommand{\modelClassA}{\modelClass{A}{\ell}{i}}
\newcommand{\modelClassB}{\modelClass{B}{\ell}{i}}
\newcommand{\oneModelClassA}{\modelClass{A}{\ell}{1}}
\newcommand{\oneModelClassB}{\modelClass{B}{\ell}{1}}
\newcommand{\effRewrite}{\ensuremath{\leq_{poly}}}
\newcommand{\leqExpress}{\ensuremath{\leq_{expr}}}
\newcommand{\set}[1]{\ensuremath\left\{#1\right\}}
\newcommand{\mathtext}[1]{\ensuremath{\mathrm{\text{#1}}}}
\newcommand{\card}[1]{\left| #1 \right|}
\newcommand{\ssf}{\ensuremath{O}}
\newcommand{\setSsf}{\ensuremath{\mathcal O}}
\newcommand{\complexityclassname}[1]{\ensuremath{\mathrm{#1}}}
\newcommand{\PTIME}{\complexityclassname{P}}
\newcommand{\PSPACE}{\complexityclassname{PSPACE}}
\newcommand{\paramParseTree}[3]{\ensuremath{T^{#1}(\angNode{#2}{#3})}}
\newcommand{\parseTree}{\paramParseTree{\psi}{\modelClassA}{\modelClassB}}
\newcommand{\lmF}{\ensuremath{\lmplus{\mathcal F}}}
\newcommand{\lmG}{\ensuremath{\lmplus{\mathcal G}}}
\title{The Relative Succinctness and Expressiveness of Modal Logics Can Be Arbitrarily Complex}
\author{Henning Schnoor}
\institute{Institut f\"ur Informatik, Christian-Albrechts-Universit\"{a}t zu Kiel, 24098 Kiel, Germany \email{henning.schnoor@email.uni-kiel.de}}
\titlerunning{Relative Succinctness and Expressiveness Can Be Arbitrarily Complex}
\begin{document}

\maketitle

\setcounter{secnumdepth}{5}
\setcounter{tocdepth}{5}

\begin{abstract}
We study the relative succinctness and expressiveness of modal logics, 
and prove that these relationships can be as complex as any
countable partial order. For this, we use two uniform formalisms to define 
modal operators, and obtain results on succinctness and expressiveness in these
two settings. Our proofs are based on formula size games introduced by
Adler and Immerman and bisimulations.
\end{abstract}

\section*{Introduction}

Modal logics of different flavours play an important role in computer science, especially as specification languages (\cite{hamo92,FaginHalpernMosesVardi-REASONING-ABOUT-KNOWLEDGE-MITPRESS-1995,AucherBoellavdTorre-PRIVACY-WITH-MOCAL-LOGIC-DEON-2010,BlackburnDeRijkeVenama-MODAL-LOGIC-BOOK-2001}). Therefore, the study of expressiveness and succinctness of modal and other logics have received much attention: In~\cite{GogicKautzPapadimitriouSelman-COMPARATIVE-LINGUISTICS-KNOWLEDGE-REPRESENTATION-IJCAI-1995}, the succinctness of different formalisms to define knowledge bases was compared. In~\cite{Wilke-CTL-SUCCINCTNESS-FSTTCS-1999}, it was shown that CTL$^+$ is exponentially more succinct than CTL, i.e., in the translation from CTL$^*$ to CTL, an exponential blow-up in formula size cannot be avoided. This result was later strengthened in~\cite{AdlerImmerman-LOWER-BOUND-FORMULA-SIZE-TOCL-2003}. The techniques of the latter paper, \emph{formula size games}, were applied to modal logics in \cite{FrenchVanDerHoekIlievKooi-SUCCINCTNESS-MODAL-LOGIC-AI-2013} and \cite{VanderhoekIliev-RELATIVE-SUCCINCTNESS-AAMAS-2014-ACM-ENTRY}. They show that, in particular, there are modal logics $\mathcal L_1$ and $\mathcal L_2$ such that $\mathcal L_1$ is exponentially more succinct than $\mathcal L_2$ and vice versa. (This seemingly contradictory statement says that some properties are more efficiently expressed in $\mathcal L_1$, and some in $\mathcal L_2$).

This result raises several questions: Are there arbitrary large sets of modal logics, where each logic is exponentially more succinct than all of the others? Are there arbitrarily long sequences of modal logics of strictly increasing succinctness? More generally, can the ``succinctness''-relationships between modal logics be arbitrarily complex? 

Formally, let $\effRewrite$ be the relation between modal logics such that $\mathcal L_1\effRewrite\mathcal L_2$ if for every $\mathcal L_1$-formula, there is an equivalent $\mathcal L_2$-formula of polynomial size. The results from the above-mentioned~\cite{VanderhoekIliev-RELATIVE-SUCCINCTNESS-AAMAS-2014-ACM-ENTRY} imply that $\effRewrite$ is not a linear order, but clearly, $\effRewrite$ is reflexive and transitive. Does $\effRewrite$ have any other standard properties in addition to reflexivity and transitivity?

We answer the above questions by showing that $\effRewrite$ can be as complex as any countable partial order. More precisely, for any partial order $\leq_S$ on a countable set $S$, we exhibit a family of modal logics $(\mathcal L_s)_{s\in S}$, all equally expressive, such that $\effRewrite$ on $(\mathcal L_s)_{s\in S}$ behaves exactly like $\leq_S$ in the following sense: If $s_1\leq_Ss_2$, then $\mathcal L_{s_1}\effRewrite\mathcal L_{s_2}$ and $\mathcal L_{s_1}$ is exponentially more succinct than $\mathcal L_{s_2}$ otherwise. In particular, there is indeed an infinite set of modal logics where each logic is exponentially more succinct than every other, and there is an infinite sequence of modal logics, each strictly more succinct than the previous one. For the related question of expressiveness, we get analogous results: If $\leqExpress$ is defined as $\mathcal L_1\leqExpress\mathcal L_2$ if for every $\mathcal L_1$-formula, there is some equivalent $\mathcal L_2$-formula (regardless of the size), then $\leqExpress$ can be as complex as any countable partial order in exact same sense as above.

To prove our results, we use a uniform way to define modal logics. We consider two different ways to define generalized (multi-)modal operators:
 \begin{enumerate}
  \item ``One-Step'' modal operators, similar to the ones defined in~\cite{GargovPassyTinchev-MODAL-ENVIRONMENT-BOOLEAN-SPECULATIONS-MLIA-1987}, only ``look one step ahead in the structure.'' Such an operator $\paramBox f$ is given by the Boolean function $f$ that ``selects'' a successor world $w'$ of $w$ based on the $R_i$-relationships between $w$ and $w'$ for each accessibility relation $R_i$. As there are only finitely many Boolean functions of a given arity, this only allows to prove our main result for finite partial orders $S$. We also obtain a complete characterization of relative expressiveness and succinctness of modal logics defined in this framework.
  \item ``Several-Step'' operators address worlds that can be reached in arbitrarily many steps. For our result, it suffices to study operators defined by a language $L$ over $\set{1,\dots,n}$: The formula $\paramBox{L}\varphi$, evaluated in a world $w$, requires $\varphi$ to be true in all worlds $w'$ that can be reached from $w$ on a path whose labels form a word in $L$. We show that using \emph{alternation languages} suffices to get arbitrarily complex expressiveness- and succinctness relationships.
 \end{enumerate}

Most of our proofs use formula size games for modal logic as introduced in~\cite{FrenchVanDerHoekIlievKooi-SUCCINCTNESS-MODAL-LOGIC-AI-2013}, based on Adler-Immerman games defined in~\cite{AdlerImmerman-LOWER-BOUND-FORMULA-SIZE-TOCL-2003}. These techniques allow to use games similar to Ehrenfeucht–Fra\"{\i}ss\'e-games to obtain lower bounds on formula size instead of quantifier depth. We adept these games to our generalized settings in the natural way. To the two techniques for establishing lower bounds in Adler-Immerman games mentioned in~\cite{FrenchVanDerHoekIlievKooi-SUCCINCTNESS-MODAL-LOGIC-AI-2013} (namely, \emph{Diverging Pairs} and \emph{Weight Function}), we add a third technique, which is based on a pigeon-hole principle argument. 

The paper is structured as follows: Section~\ref{section:preliminaries} contains the classical definitions of syntax and semantics for modal logics. Section~\ref{section:main results} contains our main results as outlined above. These results are based on a more detailed study of expressiveness and succinctness in the two settings we use, which forms the remainder of the paper: After reviewing formula size games for modal logic introduced in~\cite{FrenchVanDerHoekIlievKooi-SUCCINCTNESS-MODAL-LOGIC-AI-2013} in Section~\ref{section:formula size games}, we present our results on ``One-Step'' and ``Several-Step'' operators in Sections~\ref{proofs:single step operators} and~\ref{proofs:arbitrary step operators}, respectively. We conclude in Section~\ref{sect:conclusion}. All proofs can be found in the appendix.

\section{Preliminaries}\label{section:preliminaries}

We fix an infinite set $V$ of propositional variables. A Kripke model with $n$ modalities is a tuple $M=(W,R_1,\dots,R_n,\Pi)$, where $W$ is a non-empty set of worlds and for each $i\in\set{1,\dots,n}$, $R_i$ is a subset of $W\times W$, and $\Pi\colon P\rightarrow 2^W$ is a propositional assignment. We often simply write $w\in M$ for a world $w\in W$, and $S\subseteq M$ for $S\subseteq W$. A \emph{pointed model} is a pair $(M,w)$ where $M$ is a Kripke model and $w$ is a world of $M$. We also call $w$ the \emph{root} of the pointed model.

The semantics of a modal operator is characterized by the set of worlds that the operator ``adresses'' when evaluated in a pointed model. We formalize this as a \emph{successor selection function}, which is a function $\ssf$ that for each pointed model $(M,w)$ with $n$ modalities returns a set a set $S\subseteq M$. We call $n$ the \emph{arity} of $\ssf$. (Our definition does not rule out mal-formed successor selection functions that do not respect the relational character of modal logic, however all operators we study in this paper are ``well-behaved'' in that sense.)

A successor selection function $\ssf$ naturally defines a modal operator $\paramBox{\ssf}$, by replacing the ``all successors'' of the classical $\Box$-operator with ``all worlds returned by $\ssf$'' (see the formal semantics below). Each set of successor selection functions defines a modal logic as follows (we identify a modal logic with the set of its formulas, as the satisfaction definition will always be standard).

\begin{definition}
 Let $\setSsf$ be a set of successor selection functions. The modal logic $\lmplus{\setSsf}$ is generated by the following grammar:

$$\varphi:=p\ \ \vert \ \ \neg\varphi\ \ \vert \ \ \varphi\vee\varphi \ \ \vert \ \ \paramBox{\ssf}\varphi,$$
where $p\in P$ and $\ssf\in\setSsf$.
\end{definition}

The \emph{size} of a modal formula $\varphi$, denoted $\card\varphi$, is the number of nodes in its tree representation. The semantics definition of $\lmplus{\setSsf}$ is the natural one: 

\begin{definition}
 Let $\varphi$ be an $\lmplus{\setSsf}$-formula, and let $(M,w)$ be a pointed model, where $M=(W,R_1,\dots,R_n,\Pi)$. We define when $\varphi$ is \emph{satisfied} in $w$, written as $M,w\models\varphi$:
 \begin{itemize}
  \item $M,w\models p$ if and only if $w\in\Pi(p)$,
  \item $M,w\models\varphi\vee\psi$ if and only if $M,w\models\varphi$ or $M,w\models\psi$,
  \item $M,w\models\neg\varphi$ if and only if $M,w\not\models\varphi$
  \item $M,w\models\paramBox{\ssf}\varphi$ if $M,w'\models\varphi$ for all $w'\in\ssf(M,w)$.
 \end{itemize}
\end{definition}

For a set $\mathbb M$ of pointed models and a modal formula $\varphi$, we write $\mathbb M\models\varphi$ if $M,w\models\varphi$ for each $(M,w)\in\mathbb M$. 
Formulas $\varphi$ and $\psi$ are \emph{equivalent} if for every pointed model $(M,w)$, we have that $M,w\models\varphi$ if and only if $M,w\models\psi$.

We now define when one modal logic is more expressive or succinct than another. We only state these definitions as far as relevant for this paper, and refer the reader to~\cite{FrenchVanDerHoekIlievKooi-SUCCINCTNESS-MODAL-LOGIC-AI-2013} for an in-depth discussion of these notions.

\begin{definition}
 Let $\setSsf_1$ and $\setSsf_2$ be sets of successor selection functions.
 \begin{itemize}
  \item $\lmplus{\setSsf_2}$ is \emph{at least as expressive} (at least as succinct) as $\lmplus{\setSsf_1}$, written as $\lmplus{\setSsf_1}\leqExpress\lmplus{\setSsf_2}$ ($\lmplus{\setSsf_1}\effRewrite\lmplus{\setSsf_2}$), if for every $\lmplus{\setSsf_1}$-formula $\varphi$, there is an equivalent $\lmplus{\setSsf_2}$-formula $\psi$ (and $\card\psi\leq p(\card\psi)$ for a fixed polynomial $p$). If $\lmplus{\setSsf_1}\leqExpress\lmplus{\setSsf_2}$ and 
  $\lmplus{\setSsf_2}\leqExpress\lmplus{\setSsf_1}$, then $\lmplus{\setSsf_1}$ and $\lmplus{\setSsf_2}$ are \emph{equally expressive}.
  \item If $\lmplus{\setSsf_1}$ and $\lmplus{\setSsf_2}$ are equally expressive, then $\lmplus{\setSsf_1}$ is \emph{exponentially more succinct} than $\lmplus{\setSsf_2}$, if there is a sequence $(\varphi_i)_{i\in\mathbb N}$ of $\lmplus{\setSsf_1}$-formulas such that $\card{\varphi_i}$ is linear in $i$, and there is some $c>1$ such that for each $i$, each $\lmplus{\setSsf_2}$-formula equivalent to $\varphi_i$ has size at least $c^i$.
  \end{itemize}
\end{definition}

Clearly, $\lmplus{\setSsf_1}\not\leqExpress\lmplus{\setSsf_2}$ does not imply that $\lmplus{\setSsf_1}$ is more expressive than $\lmplus{\setSsf_2}$, since $\lmplus{\setSsf_2}\not\leqExpress\lmplus{\setSsf_1}$ can hold simultaneously. If $\lmplus{\setSsf_1}$ is exponentially more succinct than $\lmplus{\setSsf_2}$, then an exponential blow-up in the translation from $\lmplus{\setSsf_1}$ to $\lmplus{\setSsf_2}$ cannot always be avoided, on the other hand, if $\lmplus{\setSsf_1}\effRewrite\lmplus{\setSsf_2}$, then every $\lmplus{\setSsf_1}$-formula can be \emph{succinctly} rewritten into a $\lmplus{\setSsf_2}$-formula. 

One needs to be careful when proving succinctness result via a complexity argument: Unless $\PSPACE=\PTIME$, there is no polynomial-time algorithm converting every closed QBF-formula into a constant formula. However, since each closed QBF-formula is equivalent to either \emph{true} or \emph{false}, the class of closed QBF-formulas is certainly not more succinct than the class of constant formulas. We do not discuss these issues further, since in this paper we will always have that if we compare $\mathcal L_1$ and $\mathcal L_2$ that are equally expressive, then either $\mathcal L_1\effRewrite\mathcal L_2$ and the translation can be computed by a polynomial-time algorithm, or $\mathcal L_1$ is exponentially more succinct than $\mathcal L_2$ in the above, strict sense.

\section{Main Results}\label{section:main results}

We prove that the expressiveness- and succinctness relationships between modal logics can be as complex as any partial order. We show versions of this result in two settings: 
\begin{inparaenum}
 \item For logics defined by successor selection functions $\ssf$ such that whether $w'\in\ssf(M,w)$ only depends on whether $(w,w')\in R_i$ for each accessibility relation $R_i$,
 \item for logics defined by successor selection functions considering paths of arbitrary (finite) length in the model.
\end{inparaenum}

A simple counting argument shows that in the first setting, there is only a finite number of different modal operators, hence for these operators we show that the relationships can be as complex as any finite partial order. In the second setting, we then obtain relationships as complex as any countable partial order.

\subsection{Single Step Operators}\label{sect:single step operators}

In order to prove that the relationships between different modal logics can be arbitrarily complex, we first define a large class of modal logics. All of our logics will be extensions of the classical multi-modal logic $\mathsf{ML}_n$. As a starting point, consider the following modal operators (see also~\cite{VanderhoekIliev-RELATIVE-SUCCINCTNESS-AAMAS-2014-ACM-ENTRY}): For a set $I\subseteq\set{1,\dots,n}$,
\begin{itemize}
 \item $[\forall_I]\varphi$ is true in $w$ if $\varphi$ is true in all $w'$ such that $(w,w')\in R_i$ for some $i\in I$.
 \item $[\cap_I]\varphi$ is true in $w$ if $\varphi$ is true in all $w'$ such that $(w,w')\in R_i$ for all $i\in I$.
\end{itemize}

The first of these operators can be expressed with standard multimodal logic, since $[\forall_I]\varphi$ is equivalent to $\wedge_{i\in I}\Box_i\varphi$. The second one cannot be expressed, since in the standard modal language, we cannot ``address'' a world $w'$ based on whether, for example, $(w,w')\in R_1$ \emph{and} $(w,w')\in R_2$ both hold at the same time. In this section, we consider successor selection functions $\ssf$ that can do exactly this: Whether $w'\in\ssf(M,w)$ depends on the $R_i$-relationships between $w$ and $w_i$ for all $i$ \emph{simultaneously}.

More precisely, we consider $n$-ary successor selection functions $\ssf$ for which the question whether $w'\in\ssf(M,w)$ is described as a Boolean combination of  whether $(w,w')\in R_i$ for each relevant $i$. Such an $\ssf$ is is characterized by a Boolean function $f\colon\set{0,1}^n\rightarrow\set{0,1}$ as follows: For worlds $w,w'$ of a model $M$, we say that $w'$ is an $f$-successor of $w$ if $f(r_1,\dots,r_n)=1$, where $r_i=1$ if $(w,w')\in R_i$, and $r_i=0$ otherwise. Then $f$ yields a successor selection function in the obvious way:

\begin{definition}
 Let $f\colon\set{0,1}^n\rightarrow\set{0,1}$. Then for a pointed model $(M,w)$ with $M=(W,R_1,\dots,R_n,\Pi)$, $\ssf_f(M,w)=\set{w'\ \vert\ w'\mathtext{ is an }f\mathtext{-successor of }w}$.
\end{definition}

We often identify a Booelan function $f$ and the successor selection function $\ssf_f$ defined by $f$. Hence for a set $\mathcal F$ of Boolean functions, we use $\lmF$ to denote the modal logic $\lmplus{\set{\ssf_f\ \vert\ f\in\mathcal F}}$, write $\paramBox{f}$ instead of $\paramBox{\ssf_f}$, etc. The usual multi-model logic with $n$ modalities is obtained as $\mathsf{ML}_n=\lmplus{\set{r_1,\dots,r_n}}$ (we identify a Boolean function with the propositional formula over the variables $\set{r_1,\dots,r_n}$ representing it, hence using the notation above, $\paramBox{r_i}\varphi$ is equivalent to $\Box_i\varphi$). As an example, the above operator $[\forall_I]$ corresponds to the successor selection function $\ssf_{\forall_I}(r_1,\dots,r_n)=\vee_{i\in I}r_i$: It addresses all worlds $w'$ that are an $i$-successor of $w$ for \emph{some} $i\in I$. The operator $[\cap_I]$ similarly corresponds to the successor selection function $\ssf_{\cap_I}(r_1,\dots,r_n)=\wedge_{i\in I}r_i$, as it selects all worlds $w'$ such that $(w,w')\in R_i$ for \emph{all} $i\in I$. 

We now state our main result for modal logics of the form $\lmF$: The expressiveness and succinctness relationships between logics $\lmF$ can be as complex as any finite partial order. 

\begin{restatable}{theorem}{theoremsinglestepmainresult}\label{theorem:single step main result}
 Let $S$ be a finite set, and let $\leq_S$ be a partial order on $S$. Then there exist families of sets of $\lceil\log_2(\card S+1)\rceil$-ary Boolean functions $(\mathcal F_s)_{s\in S}$ and $(\mathcal G_s)_{s\in S}$ such that for each $s,t\in S$, the following holds:
 \begin{enumerate}
  \item $\lmplus{\mathcal F_s}\leqExpress\lmplus{\mathcal F_t}$ if and only if $\lmplus{\mathcal G_s}\effRewrite\lmplus{\mathcal G_t}$ if and only if $s\leq_S t$.
  \item All logics $\lmplus{\mathcal G_s}$ are equally expressive, and if $s\not\leq_S t$, then $\lmplus{\mathcal G_s}$ is exponentially more succinct that $\lmplus{\mathcal G_t}$.
 \end{enumerate}
\end{restatable}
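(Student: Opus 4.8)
The plan is to encode the partial order by \emph{tagging} the elements of $S$ with bit-patterns. Since $k:=\lceil\log_2(\card{S}+1)\rceil$ satisfies $2^k\ge\card{S}+1$, we can fix $\card{S}+1$ pairwise distinct patterns $\set{b_s\mid s\in S}\cup\set{b_\star}$ in $\set{0,1}^k$. For $b\in\set{0,1}^k$ write $f_b$ for the Boolean function true exactly on $b$, so that $\paramBox{f_b}\varphi$ means ``$\varphi$ holds in every successor $w'$ whose pattern of $R_i$-relationships to $w$ is exactly $b$''. For $b\ne b'$ the operators $\paramBox{f_b}$ and $\paramBox{f_{b'}}$ address disjoint sets of worlds, and — since a singleton cannot be written as a union of other singletons — no $\paramBox{f_b}$ is expressible from operators $\paramBox{f_{b'}}$ with $b'\ne b$. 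Put $\mathcal F_s:=\set{f_{b_v}\mid v\leq_S s}$. For the succinctness family, fix a base $\mathcal B$ containing all singleton operators $f_b$, introduce for each $v$ a ``fast'' operator $g_v:=f_{\set{b_v,b_\star}}$ (so that $\paramBox{g_v}\varphi\equiv\paramBox{f_{b_v}}\varphi\wedge\paramBox{f_{b_\star}}\varphi$), and put $\mathcal G_s:=\mathcal B\cup\set{g_v\mid v\leq_S s}$. The displayed equivalence shows every $\paramBox{g_v}$ is already expressible in $\lmplus{\mathcal B}$, so all $\lmplus{\mathcal G_s}$ are equally expressive.

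The ``if'' directions are immediate: if $s\leq_S t$ then the down-set of $s$ is contained in that of $t$, so $\mathcal F_s\subseteq\mathcal F_t$ and $\mathcal G_s\subseteq\mathcal G_t$ as \emph{sets} of functions, and the identity translation witnesses $\lmplus{\mathcal F_s}\leqExpress\lmplus{\mathcal F_t}$ and $\lmplus{\mathcal G_s}\effRewrite\lmplus{\mathcal G_t}$. For the converse with $\mathcal F$, suppose $s\not\leq_S t$, so no $v\leq_S t$ equals $s$. Let $(M_1,w)$ have root $w$ with a single successor $w'$ connected by exactly the relations indicated by $b_s$, agreeing with $w$ on all propositional variables, and let $(M_2,w)$ be $M_1$ with $w'$ deleted. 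Every operator of $\mathcal F_t$ addresses patterns $b_v\ne b_s$, so the singleton relation $\set{(w,w)}$ is a bisimulation suitable for $\lmplus{\mathcal F_t}$ between $M_1$ and $M_2$, whence the two models satisfy the same $\lmplus{\mathcal F_t}$-formulas. But $\neg\paramBox{f_{b_s}}(p\wedge\neg p)$ — ``$w$ has a successor of pattern $b_s$'' — holds in $M_1$ and fails in $M_2$, so $\lmplus{\mathcal F_s}\not\leqExpress\lmplus{\mathcal F_t}$. (Applying this for $s$ fixed and $t$ arbitrary settles the ``only if'' direction, using $s\leq_S s$.)

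It remains, for $s\not\leq_S t$, to show $\lmplus{\mathcal G_s}$ is exponentially more succinct than $\lmplus{\mathcal G_t}$; note $g_s\notin\mathcal G_t$. The witnesses are the $\lmplus{\mathcal G_s}$-formulas $\varphi_i:=\paramBox{g_s}^i q$ for a fixed variable $q$, of size $i+1$, and the claim is that every equivalent $\lmplus{\mathcal G_t}$-formula has size at least $2^i$. This is proved with a formula size game in the style of Adler--Immerman and French--van der Hoek--Iliev--Kooi, adapted to these operators and played on families of binary-branching tree models of depth $i$ whose edges carry only the patterns $b_s$ and $b_\star$. On such models every surviving fast operator $g_v\in\mathcal G_t$ with $v\ne s$ collapses to $\paramBox{f_{b_\star}}$, and every singleton operator $f_b$ with $b\notin\set{b_s,b_\star}$ is vacuous, so the only operators of $\mathcal G_t$ that distinguish the two branch types are $\paramBox{f_{b_s}}$ and $\paramBox{f_{b_\star}}$; hence any subformula that must continue past a branching node requires two separate continuations, and iterating down $i$ levels forces at least $2^i$ leaves. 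Quantifying this uses the Diverging Pairs and Weight Function techniques together with the new pigeonhole argument: if the $\lmplus{\mathcal G_t}$-formula had fewer than $2^i$ leaves, two distinct depth-$i$ addresses would be served by one subformula, and flipping the $q$-value at one of them produces a model pair on which the formula and $\varphi_i$ disagree. This yields the succinctness separation, in particular $\lmplus{\mathcal G_s}\not\effRewrite\lmplus{\mathcal G_t}$; together with the easy direction and equal expressiveness this establishes both parts of the theorem.

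The main obstacle is this last formula size game: choosing the model families so that $\varphi_i$ is genuinely incompressible in $\lmplus{\mathcal G_t}$, and in particular establishing the \emph{non-interference} property — that neither the auxiliary fast operators $g_v$ with $v\ne s$ nor the singleton operators of $\mathcal B$ applied to other patterns provide any shortcut for simulating $\paramBox{g_s}$. The remaining ingredients — the tagging bookkeeping, the $\mathcal F$-bisimulation, and the equal-expressiveness observation for $\mathcal G$ — are routine once the construction is fixed.
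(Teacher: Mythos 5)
Your overall architecture is the same as the paper's: index each logic by the down-set $\set{v\mid v\leq_S s}$, obtain the expressiveness direction from the fact that $\paramBox{g}$ is expressible in $\lmF$ exactly when $g$ is a disjunction of members of $\mathcal F$ (the paper's Theorem~\ref{theorem:1 step expressiveness -- or equivalence}), and obtain the succinctness separation from a formula-size-game lower bound for $\paramBox{g}^i$-type formulas. The paper simply instantiates its two general characterizations (Theorems~\ref{theorem:1 step expressiveness -- or equivalence} and~\ref{theorem:finite step succinctness}) with disjunctions of projections $f_s=\vee_{j\in i(s)}r_j$ (resp.\ XORs of projections for the expressiveness family), whereas you use minterms $f_{b_v}$ and two-element disjunctions $f_{b_v}\vee f_{b_\star}$. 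Your expressiveness half and the ``easy'' inclusion directions are essentially fine. The problem is the succinctness half, which you yourself flag as the main obstacle but do not carry out, and whose sketch in fact fails for your specific choice of $\mathcal B$.

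Concretely: you put \emph{every} singleton $f_b$ into $\mathcal B\subseteq\mathcal G_t$, including $f_{0^k}$, the minterm of the all-zeros pattern. On a tree model, every world that is not a child of $w$ (including $w$ itself and all worlds two or more levels below) is a $0^k$-successor of $w$, so $\paramBox{f_{0^k}}$ reaches the depth-$i$ leaves of your models in a single step; since ``being a leaf'' is definable by a constant-size formula (no $f_{b_s}$- and no $f_{b_\star}$-successor), the constant-size $\lmplus{\mathcal G_t}$-formula saying ``every $0^k$-successor is a non-leaf or satisfies $q$'' already separates your two model families, so the claimed $2^i$ game-tree lower bound is false on those models. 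Repairing this is not cosmetic: you must keep $f_{0^k}$ out of $\mathcal B$ and make $b_\star$ and all $b_v$ nonzero (or eliminate all $0^k$-pairs from the models), but with $k=\lceil\log_2(\card S+1)\rceil$ there are only $2^k-1$ nonzero patterns, which is fewer than the $\card S+1$ you need whenever $\card S+1$ is a power of two. The paper sidesteps all of this by using only disjunctions of nonempty sets of projections (which vanish on the all-zeros input) together with explicit ``false-path'' edges that make every unwanted operator unusable in a closed game tree, and it carries out the pigeonhole leaf-count in full in the proof of Theorem~\ref{theorem:finite step succinctness}. You should either switch to that encoding and invoke Theorems~\ref{theorem:1 step expressiveness -- or equivalence} and~\ref{theorem:finite step succinctness} directly, or give a complete game argument for a corrected version of your construction.
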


In particular, if $s$ and $t$ are not comparable with respect to $\leq_S$, then $\lmplus{\mathcal F_s}$ is exponentially more succinct than $\lmplus{\mathcal F_t}$ and vice versa, and there are formulas expressible in $\lmplus{\mathcal G_s}$ but not in $\lmplus{\mathcal G_t}$ and vice versa. To prove Theorem~\ref{theorem:single step main result}, we study the expressivity- and succinctness relationship between $\lmF$ and $\lmG$ for different sets $\mathcal F$ and $\mathcal G$ in detail, and obtain a complete characterization that for each $\mathcal F$ and $\mathcal G$ determines the precise relationship between $\lmF$ and $\lmG$ in terms of $\leqExpress$, $\effRewrite$, and exponential succinctness. These results can be found in Section~\ref{proofs:single step operators}.

\subsection{Arbitrary Step Operators}\label{sect:main result:arbitrary step operators}

In this section, we obtain an ``infinite version'' of Theorem~\ref{theorem:single step main result}. As argued above, for a fixed arity $n$, there is only a finite number of modal logics of the form $\lmF$ on Kripke models with $n$ modalities. Hence we consider logics outside of the above framework, i.e., successor selection functions $\ssf$ where whether $w'\in\ssf(M,w)$ does not only depend on whether $(w,w')\in R_i$ for each $i$, but also on longer paths in the model. Natural functions of this form are, e.g., ones returning all worlds reachable on a path of a certain maximal length, or on a path of arbitrary length (which allows to express the transitive closure of the accessibility relations). For our result, it suffices to consider operators of a simple structure, which for a Kripke model with $n$ modalities are given by languages over $\set{1,\dots,n}$. For a word $s=s_1\dots s_l\in\set{1,\dots,n}^*$, we say that a world $w'$ is an $s$-successor of a world $w$ in a model $M$ if there are worlds $w=w_0,w_1,\dots,w_l=w'$ such that for each $i\in\set{1,\dots,l}$, we have that $(w_{i-1},w_i)\in R_{s_i}$. In this case we say that there is an $s$-path from $w$ to $w'$ in $M$, and refer to the $s_i$ as the \emph{labels} of this path. (We omit the model when clear from the context).

A language $L\subseteq\set{1,\dots,n}^*$ defines the successor selection function $\ssf_L(M,w)=\set{w'\in M\ \vert\ w'\mathtext{ is an }s\mathtext{-successor of }w\mathtext{ for some }s\in L}$. Again, we identify a language $L$ and the successor selection function $\ssf_L$, e.g., we write $\paramBox{L}$ instead of $\paramBox{\ssf_L}$, and $\lmplus{\mathcal L}$ for $\lmplus{\set{\ssf_L\ \vert\ L\in\mathcal L}}$, etc. The usual multi-model logic with $n$ modalities is obtained as $\mathsf{ML}_n=\lmplus{\set{\set{1},\dots,\set{n}}}$.

In the sequel, we only consider finite languages. Clearly, for a set $\mathcal L$ of finite languages, every $\lmplus{\mathcal L}$-formula is equivalent to some $\mathsf{ML}_n$-formula, since $\paramBox{L}\varphi$ is equivalent to $\bigwedge_{s=s_1s_2\dots s_k\in L}\Box_{s_1}\Box_{s_2}\dots\Box_{s_k}\varphi$ for a finite language $L$.

Our main result for logics of the form $\lmplus{\mathcal L}$ is an ``infinite version'' of Theorem~\ref{theorem:single step main result}: The succinctness- and expressiveness- relationships between modal logics of the form $\lmplus{\mathcal L}$ can be as complex as any \emph{countable} partial order. For the result, it suffices to consider the bimodal case, i.e., models $(W,R_1,R_2,\Pi)$ with two accessibility relations, and languages over the alphabet $\set{1,2}$. 

\begin{restatable}{theorem}{theoremarbitrarystepmainresult}\label{theorem:arbitrary step main result}
 Let $S$ be a countable set, and let $\leq_S$ be a partial order on $S$. Then there exist families of languages $(\mathcal L_s)_{s\in S}$ and $(\mathcal K_s)_{s\in S}$ over the alphabet $\set{1,2}$ such that for each $s,t\in S$, the following holds:
 \begin{enumerate}
  \item $\lmplus{\mathcal K_s}\leqExpress\lmplus{\mathcal K_t}$ if and only if $\lmplus{\mathcal L_s}\effRewrite\lmplus{\mathcal L_t}$ if and only if $s\leq_S t$.
  \item All logics $\lmplus{\mathcal L_s}$ are equally expressive, and if $s\not\leq_St$, then $\lmplus{\mathcal L_s}$ is exponentially more succinct that $\lmplus{\mathcal L_t}$.
 \end{enumerate}
\end{restatable}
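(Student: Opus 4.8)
The plan is to reduce Theorem~\ref{theorem:arbitrary step main result} to the following ``building block'' statement, which should be proved separately and presumably is the content of Section~\ref{proofs:arbitrary step operators}: for each $s\in S$ there is a single finite language $L_s$ over $\set{1,2}$ (an \emph{alternation language}, in the paper's terminology) such that the operator $\paramBox{L_s}$ is ``hard to simulate'' by the operators $\set{\paramBox{L_t}\ \vert\ t\neq s}$ together with the basic modalities $\Box_1,\Box_2$ --- concretely, there is a sequence of formulas $(\varphi^s_i)_i$ built from $\paramBox{L_s}$ of size linear in $i$ whose shortest equivalent in $\lmplus{\set{\set{1},\set{2}}\cup\set{L_t\ \vert\ t\neq s}}$ has size at least $c^i$, while of course $\paramBox{L_s}\varphi$ is always equivalent to a (possibly exponential) $\mathsf{ML}_2$-formula. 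Granting such a family of mutually ``independent'' languages, one sets
\[
\mathcal K_s=\set{\set{1},\set{2}}\cup\set{L_t\ \vert\ t\leq_S s},\qquad
\mathcal L_s=\mathcal K_s
\]
(using the same family for both; the distinction between $\mathcal K$ and $\mathcal L$ in Theorem~\ref{theorem:single step main result} is only needed there because in the finite-arity setting one cannot simultaneously separate in expressiveness and in succinctness, whereas here finite languages give equiexpressiveness for free, so a single family works --- if a genuine split is needed one takes $\mathcal L_s$ as above and $\mathcal K_s$ a variant where the $L_t$ are replaced by infinite/regular versions that are strictly more expressive, mirroring the $\mathcal F$ vs.\ $\mathcal G$ dichotomy).

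With these definitions the three equivalences in part~(1) are checked as follows. If $s\leq_S t$ then $\mathcal K_s\subseteq\mathcal K_t$, so trivially every $\lmplus{\mathcal L_s}$-formula is already a $\lmplus{\mathcal L_t}$-formula, giving $\lmplus{\mathcal L_s}\effRewrite\lmplus{\mathcal L_t}$ and hence $\lmplus{\mathcal K_s}\leqExpress\lmplus{\mathcal K_t}$. Conversely, suppose $s\not\leq_S t$. Then $L_s\in\mathcal L_s$ but $L_s\notin\mathcal L_t$, and every language in $\mathcal L_t$ is either a basic singleton or some $L_u$ with $u\neq s$; so $\lmplus{\mathcal L_t}\subseteq\lmplus{\set{\set{1},\set{2}}\cup\set{L_u\ \vert\ u\neq s}}$, and the building-block statement applied to $(\varphi^s_i)_i$ shows this logic cannot $\effRewrite$-simulate $\lmplus{\mathcal L_s}$ --- indeed it witnesses exponential succinctness, which is exactly part~(2). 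The expressiveness direction for $\mathcal K$ is symmetric: if the $\mathcal K$-variant languages are chosen to be strictly more expressive, $L_s\in\mathcal K_s\setminus$ (anything definable in $\lmplus{\mathcal K_t}$) gives $\lmplus{\mathcal K_s}\not\leqExpress\lmplus{\mathcal K_t}$; if instead one reuses the finite family everywhere, all $\lmplus{\mathcal L_s}$ are equiexpressive with $\mathsf{ML}_2$ and the $\leqExpress$ statement degenerates appropriately --- this point needs to be handled with the same care the authors flag in the QBF remark, and is where I would be most careful.

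The main obstacle is the building-block statement itself, i.e.\ constructing, for an arbitrary countable index set, an \emph{infinite} family of finite languages $L_s$ over a \emph{two-letter} alphabet that are pairwise ``succinctness-independent'': removing $L_s$ from the toolbox must cost an exponential blow-up on some linear-size family of formulas, no matter which other $L_t$'s and basic modalities remain available. This is the real work, and it is where the three lower-bound techniques for Adler--Immerman formula size games enter --- in particular the new pigeon-hole technique, which I expect is what forces the exponential gap uniformly over the (infinitely many) competing operators, since the older Diverging Pairs and Weight Function arguments seem tailored to a bounded number of alternatives. The combinatorial heart is presumably a clever encoding: alternation words $a^\ell_j$ of the form $1^{k_1}2^{k_2}1^{k_3}\cdots$ whose block structure encodes $s$, so that a path realizing the $L_s$-pattern cannot be ``faked'' by concatenating sub-patterns coming from the other $L_t$, and then a model construction (the $\mathbb T$, $\mathbb A$, $\mathbb B$ models hinted at by the macros) on which the game is played. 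Once that lemma is in hand, the derivation of Theorem~\ref{theorem:arbitrary step main result} sketched above is essentially bookkeeping: pick the down-set $\set{t\ \vert\ t\leq_S s}$, take the union of the corresponding languages with the basic modalities, and invoke the lemma.
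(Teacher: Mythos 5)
Your overall strategy matches the paper's: enumerate $S=(s_\ell)_{\ell\in\mathbb N}$ compatibly with $\leq_S$, assign to each $s$ a down-set of pairwise ``independent'' alternation languages, and reduce everything to a separation lemma (the paper's Theorem~\ref{theorem:main alternation result: subset equivalence}, proved via the pigeonhole technique on formula size games) stating that if $\ell\notin I$ then $\paramBox{A_\ell}^ip$ has no subexponential equivalent over the remaining operators. Your bookkeeping for the succinctness half (part~2 and the $\effRewrite$ claim in part~1) is correct given that lemma, and your identification of where the real work lies is accurate.

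However, there is a genuine gap in the expressiveness half. You set $\mathcal K_s=\mathcal L_s=\set{\set1,\set2}\cup\set{L_t\ \vert\ t\leq_S s}$. Since every $\mathcal K_s$ then contains the singleton languages $\set1$ and $\set2$ and all the $L_t$ are finite, every $\lmplus{\mathcal K_s}$ is equally expressive as $\mathsf{ML}_2$ (each $\paramBox{L}\varphi$ unfolds into a conjunction of $\Box_{j_1}\dots\Box_{j_k}\varphi$). Hence $\lmplus{\mathcal K_s}\leqExpress\lmplus{\mathcal K_t}$ holds for \emph{all} $s,t$, and the required equivalence with $s\leq_S t$ fails in the ``only if'' direction whenever $s\not\leq_S t$. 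You notice this tension but your proposed repair (replacing the $L_t$ by infinite or regular languages) is both speculative and unnecessary. The paper's fix is simpler and goes the other way: $\mathcal K_s$ is obtained by \emph{omitting} the basic modalities, i.e.\ $\mathcal K_s=\set{A_\ell\ \vert\ \ell\in I_s}$ (the logic $\lmA{I_s}$) while $\mathcal L_s=\set{A_\ell\ \vert\ \ell\in I_s}\cup\set{\set1,\set2}$ (the logic $\lmAp{I_s}$). Without $\Box_1,\Box_2$ the logic $\lmA{I}$ genuinely cannot express $\paramBox{A_\ell}p$ for $\ell\notin I$ (the paper's Theorem~\ref{theorem:alternation expressiveness}, which exploits that after an operator $\paramBox{A_{\ell'}}$ with $\ell'$ not a multiple of $\ell$ the game is forced into the trap unless a classical operator is available), which is exactly what makes the $\leqExpress$ characterization work. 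A minor further point: your guess that the alternation words have a block structure $1^{k_1}2^{k_2}\cdots$ encoding $s$ is not how the paper's $A_\ell$ are built --- they are simply the two strictly alternating words $1212\dots$ and $2121\dots$ of length $\ell$, and the encoding of $s$ happens in the models $\starB$, not in the languages.
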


We will give an overview of the proof in Section~\ref{proofs:arbitrary step operators}.

\section{Formula Size Games}\label{section:formula size games}

Our succinctness proofs use modal formula size games introduced in~\cite{FrenchVanDerHoekIlievKooi-SUCCINCTNESS-MODAL-LOGIC-AI-2013} building on Adler-Immerman games~\cite{AdlerImmerman-LOWER-BOUND-FORMULA-SIZE-TOCL-2003}. We review these games in Section~\ref{section:game trees}, and state a variation of their formula-size theorem in Section~\ref{section:fsg theorem}. In Section~\ref{section:pigeonhole}, we introduce the pigeonhole-technique to prove lower bounds on the size of game trees (which then translate to lower bounds on formula size).

\subsection{Game Trees}\label{section:game trees}

The following definition is taken from~\cite{FrenchVanDerHoekIlievKooi-SUCCINCTNESS-MODAL-LOGIC-AI-2013}, except for the straight-forward extension to $\paramBox\ssf$-moves. A game tree represents a formula, where each node $v$ corresponds to a subformula $v_\varphi$ in the natural way. A node $v$ has labels of two kinds: The first label is of the form $\node AB$, where $\mathbb A$ and $\mathbb B$ are classes of pointed models such that $\mathbb A\models v_\varphi$ and $\mathbb B\models\neg v_\varphi$. The second label contains the outmost operator of the formula $v_\varphi$. We simply refer to both labels as ``label,'' it will always be clear whether we refer to the models or the operators. In the following definition, the goal of the single player ``Spoiler'' is to find a formula that is true on all models in $\mathbb A$, and false on all models in $\mathbb B$. Successful plays of Spoiler (called \emph{closed game trees}) directly correspond to such formulas.

\begin{definition}[\cite{FrenchVanDerHoekIlievKooi-SUCCINCTNESS-MODAL-LOGIC-AI-2013}]
 The formula-size game for a set $\setSsf$ of successor selection functions (FSG$(\setSsf)$) on two sets of pointed models $\mathbb A$ and $\mathbb B$ is played as follows: The game begins with a tree containing only the root labelled $\node AB$. In each move of the game, the player (Spoiler) chooses a leaf that is labelled  $\node CD$ for classes $\mathbb C$ and $\mathbb D$ of pointed models and not labelled with a variable, and plays one of the following moves:
 \begin{description}
  \item[atomic move] Spoiler labels the leaf $p$ for a propositional variable $p$ such that $\mathbb C\models p$ and $\mathbb D\models\neg p$.
  \item[not move] Spoiler labels the leaf with $\neg$ and adds a new leaf $\node DC$ as successor.
  \item[or move] Spoiler labels the leaf with $\vee$ and chooses two subsets $\mathbb C_1,\mathbb C_2\subseteq\mathbb C$ with $\mathbb C=\mathbb C_1\cup\mathbb C_2$, then adds successor nodes labelled $\langle \mathbb C_1\circ\mathbb D\rangle$ and $\langle\mathbb C_2\circ\mathbb D\rangle$.
  \item[$\paramBox{\ssf}$-move] Spoiler labels the leaf with $\paramBox{\ssf}$ for some $\ssf\in\setSsf$ and chooses a set $\mathbb D_1$ such that for each $(M,w)\in\mathbb D$, there is some $(M,w')\in\mathbb D_1$ with $w'\in\ssf(M,w)$. A new successor node $\langle\mathbb C_1\circ\mathbb D_1\rangle$ is added to the tree, where $\mathbb C_1=\set{(M,w')\ \vert\ (M,w)\in\mathbb C, w'\in\ssf(M,w)}$.
 \end{description}
 A game tree is \emph{closed} if all of its leafs are labelled with variables.
\end{definition}

By definition, Spoiler cannot play an $\paramBox{\ssf}$-move on a node $\node CD$ if there is some $(M,w)\in\mathbb D$ with $\ssf(M,w)=\emptyset$ (this reflects that $M,w\models\paramBox\ssf\varphi$ for all $\varphi$ in this case). The set $\mathcal T_\setSsf(\node AB)$ contains all closed game trees of FSG$(\setSsf)$ with a root labelled $\node AB$. Spoiler wins the FSG$(\setSsf)$ starting at $\node AB$ in $n$ moves if there is some $T\in\mathcal T_\setSsf(\node AB)$ with exactly $n$ nodes. We usually only write $\mathcal T(\node AB)$ instead of $\mathcal T_\setSsf(\node AB)$ if the set $\setSsf$ is clear from the context. 

\subsection{Formula Size Game Theorem}\label{section:fsg theorem}

The proof of Thoerem~1 from~\cite{FrenchVanDerHoekIlievKooi-SUCCINCTNESS-MODAL-LOGIC-AI-2013} can be generalized in a straight-forward way to give the following result (for completeness, we give the complete proof in Appendix~\ref{sect:proof of prop:fsg theorem}).

\begin{restatable}{theorem}{theoremfsgtheorem}\label{theorem:fsg theorem}{\upshape{\textbf{\cite{FrenchVanDerHoekIlievKooi-SUCCINCTNESS-MODAL-LOGIC-AI-2013}}}}
 Spoiler wins the FSG$(\setSsf)$ starting with $\node AB$ in $k$ moves if and only if there is a formula $\varphi\in\lmplus{\setSsf}$ with $\card\varphi=k$ such that $\mathbb A\models\varphi$ and $\mathbb B\models\neg\varphi$.
\end{restatable}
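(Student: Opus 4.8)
The plan is to prove both directions of the equivalence by induction, mirroring the classical proof of the Adler--Immerman formula-size theorem (Theorem~1 of~\cite{FrenchVanDerHoekIlievKooi-SUCCINCTNESS-MODAL-LOGIC-AI-2013}), and checking that the single new move type, the $\paramBox{\ssf}$-move, behaves exactly as the semantics of $\paramBox{\ssf}$ demands. The key structural fact, which I would prove as an auxiliary lemma by induction on the structure of game trees, is: for every closed game tree $T$ with root labelled $\node CD$, the formula $v_\varphi$ read off the root satisfies $\mathbb C\models v_\varphi$, $\mathbb D\models\neg v_\varphi$, and $\card{v_\varphi} = $ (number of nodes of $T$); and conversely, for every $\varphi\in\lmplus{\setSsf}$ with $\mathbb C\models\varphi$ and $\mathbb D\models\neg\varphi$, there is a closed game tree on $\node CD$ with exactly $\card\varphi$ nodes. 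Theorem~\ref{theorem:fsg theorem} is then the special case $\mathbb C=\mathbb A$, $\mathbb D=\mathbb B$.

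First I would do the ``game tree $\Rightarrow$ formula'' direction. Given a closed $T$, define $v_\varphi$ recursively from the leaves up, following the operator label: a leaf labelled $p$ gives $v_\varphi = p$; a $\neg$-node with child $u$ gives $\neg u_\varphi$; a $\vee$-node with children $u_1,u_2$ gives $u_{1,\varphi}\vee u_{2,\varphi}$; a $\paramBox{\ssf}$-node with child $u$ gives $\paramBox{\ssf}u_\varphi$. The size bookkeeping is immediate since the tree representation of $v_\varphi$ has exactly one node per node of $T$. For correctness one checks, at each node $v$ labelled $\node CD$, that $\mathbb C\models v_\varphi$ and $\mathbb D\models\neg v_\varphi$, using the inductive hypothesis on the children together with the move's defining conditions. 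The only genuinely new case is the $\paramBox{\ssf}$-move: for $\mathbb C\models\paramBox{\ssf}u_\varphi$ we use that the child's left component is $\mathbb C_1=\set{(M,w')\mid (M,w)\in\mathbb C,\ w'\in\ssf(M,w)}$, so $\mathbb C_1\models u_\varphi$ gives, for every $(M,w)\in\mathbb C$ and every $w'\in\ssf(M,w)$, that $M,w'\models u_\varphi$, i.e.\ $M,w\models\paramBox{\ssf}u_\varphi$; and for $\mathbb D\models\neg\paramBox{\ssf}u_\varphi$ we use that every $(M,w)\in\mathbb D$ has a witness $(M,w')\in\mathbb D_1$ with $w'\in\ssf(M,w)$ and $M,w'\not\models u_\varphi$ (from $\mathbb D_1\models\neg u_\varphi$), so $M,w\not\models\paramBox{\ssf}u_\varphi$. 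The side condition that Spoiler may not play this move when some $(M,w)\in\mathbb D$ has $\ssf(M,w)=\emptyset$ is exactly what is needed here, since then $M,w\models\paramBox{\ssf}\psi$ for every $\psi$ and $\mathbb D\models\neg\paramBox{\ssf}u_\varphi$ would be impossible.

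For the converse, ``formula $\Rightarrow$ game tree,'' I would induct on $\varphi$ and build the tree top-down. If $\varphi = p$: since $\mathbb C\models p$ and $\mathbb D\models\neg p$, Spoiler plays the atomic move. If $\varphi=\neg\psi$: play the not move, obtaining child $\node DC$, and note $\mathbb D\models\psi$, $\mathbb C\models\neg\psi$, so apply induction. If $\varphi=\psi_1\vee\psi_2$: for each $(M,w)\in\mathbb C$ at least one disjunct holds there, so set $\mathbb C_j=\set{(M,w)\in\mathbb C\mid M,w\models\psi_j}$; these cover $\mathbb C$, and $\mathbb C_j\models\psi_j$, $\mathbb D\models\neg\psi_j$, so play the or move and recurse on each child. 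If $\varphi=\paramBox{\ssf}\psi$: since $\mathbb D\models\neg\paramBox{\ssf}\psi$, every $(M,w)\in\mathbb D$ has some $w'\in\ssf(M,w)$ with $M,w'\not\models\psi$ (in particular $\ssf(M,w)\neq\emptyset$, so the move is legal); pick one such $w'$ for each and let $\mathbb D_1$ be the set of these; with $\mathbb C_1$ as prescribed by the move, $\mathbb C\models\paramBox{\ssf}\psi$ gives $\mathbb C_1\models\psi$ and by construction $\mathbb D_1\models\neg\psi$, so play the $\paramBox{\ssf}$-move and recurse. In every case the number of nodes added matches the size contribution of the corresponding node of $\varphi$, so the resulting closed tree has exactly $\card\varphi$ nodes.

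I do not expect a serious obstacle; this is a routine adaptation. The one point requiring care — and the ``hard part'' in the sense of where a careless argument could go wrong — is the handling of the $\paramBox{\ssf}$-move's asymmetry between the $\mathbb C$-side (which is forced to be the full image $\set{(M,w')\mid w'\in\ssf(M,w)}$) and the $\mathbb D$-side (where Spoiler chooses one witness per model): one must verify that this asymmetry is precisely what the universal quantifier in the semantics of $\paramBox{\ssf}$ requires, and that the emptiness side condition is correctly matched. Once that alignment is checked, both inductions go through verbatim as in~\cite{FrenchVanDerHoekIlievKooi-SUCCINCTNESS-MODAL-LOGIC-AI-2013}. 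For completeness the full argument is given in Appendix~\ref{sect:proof of prop:fsg theorem}.
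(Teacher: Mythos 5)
Your proposal is correct and follows essentially the same route as the paper's proof in Appendix~\ref{sect:proof of prop:fsg theorem}: one induction per direction, with the $\paramBox{\ssf}$-move handled exactly as you describe (full image on the $\mathbb C$-side, one chosen witness per model on the $\mathbb D$-side, and the emptiness side condition guaranteeing legality). No gaps.
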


\subsection{Pigeonhole Principle Technique}\label{section:pigeonhole}

The Formula Size Theorem (Theorem~\ref{theorem:fsg theorem}) allows to prove lower bounds on a $\lmplus{\setSsf}$-formula $\varphi$ by showing a lower bound on the smallest game tree in $\mathcal T_{\setSsf}(\node AB)$, where $\mathbb A\models\varphi$ and $\mathbb\models\neg\varphi$. However, proving a lower bound for game trees is a nontrivial task itself. In~\cite{FrenchVanDerHoekIlievKooi-SUCCINCTNESS-MODAL-LOGIC-AI-2013}, two techniques for proving such a lower bound are mentioned, namely, \emph{Diverging Pairs} and using a \emph{Weight Function}. For our result, we use a  different technique, which is based on a Pigeonhole-like counting argument. The idea is to show that each branch of a formula can only ``cover'' a certain number $c$ of models from $\mathbb A$. From this it then easily follows that the formula must have at least $\frac{\card{\mathbb A}}c$ nodes.

The result uses that formula size games allow the classes of models ``covered'' by each branch of a closed tree (corresponding to a formula) to be simply read off the labels of the leaf of the branch. For a tree $T\in\mathcal T(\node AB)$ and a node $v$ of $T$ labelled $\node CD$, we say that $\mathbb C$ ($\mathbb D$) is the class \emph{corresponding to $\mathbb A$}, if there is an even (odd) number of negations on the path from $T$'s root to $v$, and the class \emph{corresponding to $\mathbb B$} otherwise. 

\begin{restatable}{theorem}{theorempigeonhole}
 \label{theorem:pigeonhole}
 Let $\setSsf$ be a set of successor selection functions. Let $\varphi$ be a formula, let $\mathbb A$ and $\mathbb B$ be sets of pointed models such that $\mathbb A\models\varphi$ and $\mathbb B\models\neg\varphi$. If for every nontrivial leaf $u$ of every closed game tree $T_{\setSsf}\in\mathcal T(\node AB)$, the class of models corresponding to $\mathbb A$ ($\mathbb B$) has size at most $c$, then every $\lmplus{\setSsf}$-formula equivalent to $\varphi$ has size at least $\frac{\card{\mathbb A}}c$ ($\frac{\card{\mathbb B}}c$).
\end{restatable}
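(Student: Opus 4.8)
The plan is to use the Formula Size Game Theorem (Theorem~\ref{theorem:fsg theorem}) to reduce the problem about arbitrary equivalent formulas to a problem about closed game trees, and then apply the hypothesis on leaf labels together with a counting argument. First I would take an arbitrary $\lmplus{\setSsf}$-formula $\psi$ equivalent to $\varphi$; since $\mathbb A\models\varphi$ and $\mathbb B\models\neg\varphi$, equivalence gives $\mathbb A\models\psi$ and $\mathbb B\models\neg\psi$. By Theorem~\ref{theorem:fsg theorem} (the ``only if'' direction, applied with $k=\card\psi$), Spoiler wins $\mathrm{FSG}(\setSsf)$ starting with $\node AB$ in $\card\psi$ moves, so there is a closed game tree $T\in\mathcal T(\node AB)$ with exactly $\card\psi$ nodes. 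It therefore suffices to show that every closed $T\in\mathcal T(\node AB)$ has at least $\frac{\card{\mathbb A}}{c}$ nodes (and symmetrically at least $\frac{\card{\mathbb B}}{c}$).

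The core combinatorial step is to argue that the classes corresponding to $\mathbb A$ attached to the leaves of $T$ together ``cover'' all of $\mathbb A$. Concretely, I would prove by induction on the structure of $T$ (working from the root downwards, or equivalently by a downward induction on nodes) the invariant that for every node $v$ labelled $\node CD$, the class corresponding to $\mathbb A$ at $v$ is a subset of the ``$\mathbb A$-class'' of $v$'s parent unless an \textbf{or move} was played at the parent — and in the \textbf{or move} case the two children's $\mathbb A$-classes union to the parent's. Tracking this through each move type: for an \textbf{atomic move} the node is a leaf and nothing to check; for a \textbf{not move} the single child swaps the roles of the two classes but the new class corresponding to $\mathbb A$ equals the old one (the parity of negations flips, and the label components swap, so they compensate); for an \textbf{or move} on a node whose $\mathbb A$-class is $\mathbb C$, Spoiler chooses $\mathbb C_1,\mathbb C_2$ with $\mathbb C_1\cup\mathbb C_2=\mathbb C$, so the union is preserved; for a $\paramBox{\ssf}$-move the $\mathbb A$-class of the child may change (it becomes the set of $\ssf$-successors), so the invariant is stated only about preservation/covering along $\vee$-branching, and what actually matters is that \emph{the part of $\mathbb A$ sitting at the root} gets distributed only at \textbf{or} nodes. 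Hence, by an easy induction on the number of leaves, the union of the $\mathbb A$-classes over all \emph{nontrivial} leaves contains the $\mathbb A$-class of the root, which is $\mathbb A$ itself. Since each nontrivial leaf's $\mathbb A$-class has size at most $c$ by hypothesis, and a closed tree has all leaves labelled with variables, the number of leaves is at least $\frac{\card{\mathbb A}}{c}$, so $T$ has at least $\frac{\card{\mathbb A}}{c}$ nodes. A word of care: one must handle ``trivial'' leaves (those whose $\mathbb A$-class is empty, which can arise after an \textbf{or move} with $\mathbb C_1=\emptyset$ or the like) — these contribute nothing to the cover, which is exactly why the hypothesis is phrased about \emph{nontrivial} leaves and why excluding them does no harm.

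The main obstacle I anticipate is getting the bookkeeping around negations exactly right: the labels $\node CD$ always list ``models satisfying the subformula'' first and ``models satisfying its negation'' second, but the correspondence to the \emph{original} $\mathbb A$ and $\mathbb B$ depends on the parity of negations along the branch, as spelled out just before the statement. So in the \textbf{not move} case one has to check carefully that ``the class corresponding to $\mathbb A$'' is genuinely continuous across the edge despite the components physically swapping in the label. Once that is pinned down, each move type is routine, and the counting conclusion is immediate; the symmetric bound for $\mathbb B$ follows by running the identical argument with the roles of the two coordinates exchanged (equivalently, by considering $\neg\varphi$, which has the same size as $\varphi$ up to the game-tree correspondence, with $\mathbb A$ and $\mathbb B$ swapped).
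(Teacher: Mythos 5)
Your proposal is correct and follows essentially the same route as the paper's proof: translate the equivalent formula $\psi$ into the closed game tree it induces via Theorem~\ref{theorem:fsg theorem}, argue that every model of $\mathbb A$ is represented in the $\mathbb A$-corresponding class of some leaf (distribution happens only at $\vee$-nodes, while $\neg$- and $\paramBox{\ssf}$-moves preserve which underlying models are tracked), and divide $\card{\mathbb A}$ by the per-leaf bound $c$ to lower-bound the number of leaves and hence $\card\psi$. The only difference is one of detail: you spell out the covering induction and the negation-parity bookkeeping that the paper merely asserts by reference to the proof of Theorem~\ref{theorem:fsg theorem}.
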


\section{Succinctness and Expressiveness for Single-Step Operators}\label{proofs:single step operators}

In this section, we study the expressiveness- and succinctness relationships between logics of the form $\lmF$ for classes $\mathcal F$ of Boolean functions. In particular, these results allow us to prove the above Theorem~\ref{theorem:single step main result}. We first consider expressiveness. The following result completely answers the question in which case $\lmG\leqExpress\lmF$ holds: 

\begin{restatable}{theorem}{theoremonestepexpressivenessorequivalence}\label{theorem:1 step expressiveness -- or equivalence}
 Let $\mathcal F$ and $\mathcal G$ be sets of $n$-ary Boolean functions. Then the following are equivalent:
 \begin{enumerate}
  \item\label{theorem:1 step expressiveness -- or equivalence:expressiveness} $\lmG\leqExpress\lmF$,
  \item\label{theorem:1 step expressiveness -- or equivalence:or} for each $g\in\mathcal G$, there is a set $S\subseteq\mathcal F$ such that $g\equiv\bigvee_{f\in S}f$.
 \end{enumerate}
\end{restatable}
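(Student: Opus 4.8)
The plan is to prove the two implications separately. Condition~(2)$\Rightarrow$(1) is a straightforward syntactic translation: if $g\equiv\bigvee_{f\in S}f$, then for any Kripke model $M$ and worlds $w,w'$, the world $w'$ is a $g$-successor of $w$ iff it is an $f$-successor of $w$ for some $f\in S$, i.e.\ $\ssf_g(M,w)=\bigcup_{f\in S}\ssf_f(M,w)$; hence $\paramBox g\varphi$ is equivalent to $\bigwedge_{f\in S}\paramBox f\varphi$. Using this I define a translation $t$ from $\lmG$-formulas to $\lmF$-formulas commuting with $p$, $\neg$, $\vee$ and sending $\paramBox g\varphi$ to $\bigwedge_{f\in S_g}\paramBox ft(\varphi)$ for a fixed $S_g\subseteq\mathcal F$ with $g\equiv\bigvee_{f\in S_g}f$, and prove $t(\varphi)\equiv\varphi$ by induction on $\varphi$, using that equivalence is a congruence for all connectives and modal operators.

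Condition~(1)$\Rightarrow$(2) I prove by contraposition. It suffices to work with the canonical candidate $S=\set{f\in\mathcal F\mid f^{-1}(1)\subseteq g^{-1}(1)}$: if $g\equiv\bigvee_{f\in S'}f$ for \emph{any} $S'\subseteq\mathcal F$, then $S'\subseteq S$ and $\bigvee_{f\in S'}f\le\bigvee_{f\in S}f\le g$, so already $g\equiv\bigvee_{f\in S}f$. So assume $g\not\equiv\bigvee_{f\in S}f$; since $\bigvee_{f\in S}f\le g$ always, there is a vector $\vec r^*$ with $g(\vec r^*)=1$ and $f(\vec r^*)=0$ for all $f\in S$. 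The crucial consequence: any $f\in\mathcal F$ with $f(\vec r^*)=1$ lies outside $S$, hence $f^{-1}(1)\not\subseteq g^{-1}(1)$, so we may fix $\vec s_f$ with $f(\vec s_f)=1$ and $g(\vec s_f)=0$. Set $R_{\mathrm{false}}=\set{\vec r^*}\cup\set{\vec s_f\mid f\in\mathcal F,\ f^{-1}(1)\not\subseteq g^{-1}(1)}$. I will exhibit two pointed models satisfying the same $\lmF$-formulas but disagreeing on $\paramBox gp\in\lmG$; since $\mathcal F$-bisimilar worlds satisfy the same $\lmF$-formulas (an easy induction whose $\paramBox f$-step uses only the back-and-forth clauses), it is enough to make the two roots $\mathcal F$-bisimilar.

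The construction uses a ``uniform gadget'' $U^\pm$ with worlds indexed by $\set{0,1}^n$, with $p$ true everywhere in $U^+$ and false everywhere in $U^-$, and edges arranged so that the relationship vector from $u_{\vec t}$ to $u_{\vec t'}$ is exactly $\vec t'$; since well-behaved successor selection functions address only genuine neighbours (we use $f(\vec 0)=0$), all worlds of $U^+$ are $\mathcal F$-bisimilar to one another, and likewise for $U^-$. The model $M$ has a root $w_0$ (with $p$ true), a copy of $U^+$, and a copy of $U^-$, with $w_0$ linked by relationship vector $\vec t$ to the $U^+$-world $u_{\vec t}$ for every $\vec t$, and to the $U^-$-world $u_{\vec t}$ exactly for $\vec t\in R_{\mathrm{false}}$. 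The model $M'$ is analogous, except $w_0'$ is linked to the $U^-$-world $u_{\vec t}$ exactly when $g(\vec t)=0$. Then $M,w_0\not\models\paramBox gp$ (the $p$-false $U^-$-world reached via $\vec r^*$ is a $g$-successor), while $M',w_0'\models\paramBox gp$ (every $g$-successor of $w_0'$ lies in its copy of $U^+$). One verifies that the relation pairing the two roots, every $U^+$-world of $M$ with every $U^+$-world of $M'$, and every $U^-$-world of $M$ with every $U^-$-world of $M'$, is an $\mathcal F$-bisimulation; the single delicate case is a $p$-false $f$-successor of $w_0$ reached via $\vec r^*$ (so $f(\vec r^*)=1$), where the crucial consequence gives $f\notin S$, so $w_0'$ has the $p$-false $f$-successor reached via $\vec s_f$, and pairing these two worlds is legitimate precisely because all $U^-$-worlds are $\mathcal F$-bisimilar. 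Hence $\paramBox gp$ has no equivalent $\lmF$-formula, so $\lmG\not\leqExpress\lmF$.

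The main obstacle is designing $M$ and $M'$ for the contrapositive: one must realise (i) that $\vec r^*$ should be chosen to avoid \emph{all} of $S$ rather than merely to witness that one particular representation $\bigvee_{f\in S'}f$ fails, and (ii) that the $p$-false reachable parts of the two models must be made interchangeable --- hence the uniform gadgets --- so that the mismatch between ``$w_0$ reaches a $p$-false world via $\vec r^*$'' and ``$w_0'$ reaches one via $\vec s_f$'' can be absorbed into the freedom of the bisimulation. The remaining ingredients (the syntactic translation, the bisimulation-invariance of $\lmF$, and checking that the explicit relation is an $\mathcal F$-bisimulation) are routine, with minor extra bookkeeping for the all-zero relationship vector if functions with $f(\vec 0)=1$ are admitted.
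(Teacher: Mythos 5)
Your proof is correct and follows essentially the same route as the paper: the easy direction via the translation $\paramBox g\varphi\equiv\bigwedge_{f\in S}\paramBox f\varphi$, and the hard direction by choosing the canonical $S$ of functions implying $g$, the witnesses $\vec r^*$ and $\vec s_f$ (the paper's $I_g$ and $I_i$), and an explicit $\mathcal F$-bisimulation between two depth-one models that disagree on $\paramBox gp$. Your uniform gadgets are a cosmetically different (slightly more elaborate) way of making the successor worlds interchangeable --- the paper simply uses dead-end leaves paired by $p$-value --- but the key combinatorial step, re-routing an $f$-successor reached via $\vec r^*$ to a $p$-false $f$-successor via $\vec s_f$, is identical.
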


For example, the theorem implies the result mentioned in Section~\ref{sect:single step operators} that $[\forall_I]$ can be expressed with the standard operators $\Box_1$ and $\Box_2$, but $[\cap_I]$ cannot (recall that $[\forall_I]$ corresponds to $\vee_{i\in I}r_i$, and $[\cap_I]$ to $\wedge_{i\in I}r_i$).

Theorem~\ref{theorem:1 step expressiveness -- or equivalence} is proved using standard bisimulation techniques (see Appendix~\ref{sect:proof theorem:1 step expressiveness -- or equivalence}), which show that a specific formula cannot be expressed in a logic $\lmF$. We now consider succinctness. The following theorem says that, given sets $\mathcal F$ and $\mathcal G$ of Boolean functions such that $\lmF$ and $\lmG$ are equally expressive, $\lmG$ is \emph{always} exponentially more succinct than $\lmF$, except for the trivial case when $\mathcal G\subseteq\mathcal F$.

The proof of the theorem indeed shows the slightly stronger result that even if $\lmF$ and $\lmG$ are not equally expressive, but $\mathcal G$ contains a function that is a disjunction of functions in $\mathcal F$ but is not an element of $\mathcal F$ itself (and hence, due to Theorem~\ref{theorem:1 step expressiveness -- or equivalence}, $\paramBox{g}$ is not expressible in $\lmF$), then $\lmG$ is exponentially more succinct than $\lmF$ (with a slightly more general definition of this notion that also covers modal languages with different expressive power). This implies that the relation $\effRewrite$ restricted to logics of the form $\lmF$ is antisymmetric, and hence a partial order.

A special case of our result was shown in~\cite{FrenchVanDerHoekIlievKooi-SUCCINCTNESS-MODAL-LOGIC-AI-2013}, where the authors prove that the logic $[\forall_{1,2}]\mathsf{ML}$ is exponentially more succinct than $\mathsf{ML}_2$. Using our notation, these logics are $\lmG$ and $\lmF$ with $\mathcal F=\set{r_1,r_2}$ and $\mathcal G=\set{r_1,r_2,r_1\vee r_2}$. 

\begin{restatable}{theorem}{theoremfinitestepsuccinctness}\label{theorem:finite step succinctness}
 Let $\mathcal F$ and $\mathcal G$ be sets of Boolean functions such that $\lmF$ and $\lmG$ are equally expressive and $\mathcal G\nsubseteq\mathcal F$. Then $\lmG$ is exponentially more succinct than $\mathcal F$.
\end{restatable}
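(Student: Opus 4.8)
The plan is to use the Formula Size Theorem (Theorem~\ref{theorem:fsg theorem}) together with the pigeonhole technique (Theorem~\ref{theorem:pigeonhole}). By hypothesis $\mathcal G\nsubseteq\mathcal F$, so fix $g\in\mathcal G\setminus\mathcal F$. Since $\lmF$ and $\lmG$ are equally expressive, $\paramBox g$ must be expressible in $\lmF$, and by Theorem~\ref{theorem:1 step expressiveness -- or equivalence} there is a set $S\subseteq\mathcal F$ with $g\equiv\bigvee_{f\in S}f$; note $\card S\geq 2$, since $g\notin\mathcal F$ (if $\card S\leq 1$ then $g$ would equal a single $f\in\mathcal F$ or be constant, and the constant case can be handled separately or excluded by a mild normalization). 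The key point is that the $\paramBox g$-operator addresses \emph{one} world that is an $f$-successor for \emph{some} $f\in S$ simultaneously, whereas in $\lmF$ the only way to constrain such a world is via several separate $\paramBox f$-operators, one per $f\in S$, each of which also drags along all the \emph{other} $f$-successors.

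The core of the argument is to build, for each $i\in\mathbb N$, a family $\varphi_i\in\lmG$ of size linear in $i$ together with sets of pointed models $\mathbb A_i$ and $\mathbb B_i$ with $\card{\mathbb A_i}$ (or $\card{\mathbb B_i}$) exponential in $i$, such that $\mathbb A_i\models\varphi_i$, $\mathbb B_i\models\neg\varphi_i$, and such that in every closed FSG$(\mathcal F)$-tree on $\node{A_i}{B_i}$ each nontrivial leaf covers only a bounded number $c$ of models on the relevant side. The formulas $\varphi_i$ should be a "chain" of $i$ nested $\paramBox g$-operators over a set of propositional variables, forcing at each of the $i$ levels a binary choice; the model sets $\mathbb A_i$ and $\mathbb B_i$ should encode, along $g$-paths of length $i$, all $2^i$ possible bit-strings, with $\mathbb A_i$ and $\mathbb B_i$ differing by a single controlled "trap" or mismatch so that $\varphi_i$ separates them. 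The design of the models must exploit the fact that $g=\bigvee_{f\in S}f$ with $\card S\geq 2$: a single edge in the model that is an $f$-successor for one $f\in S$ but not for another $f'\in S$ is "seen" by $\paramBox g$ but treated differently by $\paramBox f$ versus $\paramBox{f'}$, so that any $\lmF$-formula trying to trace a $g$-path must branch and loses track of individual models.

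Then I would analyze an arbitrary closed game tree $T\in\mathcal T_{\mathcal F}(\node{A_i}{B_i})$. The only moves are atomic, $\neg$, $\vee$, and $\paramBox f$ for $f\in\mathcal F$. The $\vee$-moves only split the class corresponding to $\mathbb A_i$ (up to negations), and the pigeonhole theorem already accounts for their contribution to formula size; the real combinatorial content is showing that each $\paramBox f$-move, applied along what should be a $g$-step, cannot keep the relevant $\mathbb A_i$-class (respectively $\mathbb B_i$-class) large while still being consistent with the $\mathbb B_i$-side constraint imposed by the $\paramBox\ssf$-move rule. The construction should be arranged so that after following $i$ levels of $\paramBox f$-moves, any branch that has correctly "tracked" the relevant part of $\mathbb B_i$ can correspond to at most $c$ models of $\mathbb A_i$ for some constant $c$ independent of $i$; Theorem~\ref{theorem:pigeonhole} then yields a lower bound of $\card{\mathbb A_i}/c\geq c'^{\,i}$ for the size of any $\lmF$-formula equivalent to $\varphi_i$, which is exactly exponential succinctness of $\lmG$ over $\lmF$.

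The main obstacle I anticipate is the precise design of $\mathbb A_i$ and $\mathbb B_i$ so that the pigeonhole invariant — "every nontrivial leaf covers at most $c$ models of one side" — actually holds against \emph{every} closed $\lmF$-game tree, not just the "obvious" ones. Because the $\vee$-move lets Spoiler partition the $\mathbb A_i$-side arbitrarily, one cannot forbid splitting; instead the models must be built so that no matter how the $\mathbb A_i$-side is partitioned and no matter which $\paramBox f$-moves are played, a branch that is still "alive" (i.e., still separating its $\mathbb A_i$-class from its $\mathbb B_i$-class) is forced to have made enough $\paramBox f$-commitments that it has narrowed the $\mathbb A_i$-class down to size $\leq c$. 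Getting this invariant to be robust typically requires the models to have a rich, symmetric "fan-out" structure at each level (so each choice of $f\in S$ genuinely loses information) together with a carefully placed single asymmetry between $\mathbb A_i$ and $\mathbb B_i$ that the linear-size $\varphi_i$ pinpoints but that any short $\lmF$-formula must "hunt for" along exponentially many paths. I would expect to reuse a trap-model construction analogous to $\paramTrapModel{\ell}{i}{A}$ / $\paramTrapModel{\ell}{i}{B}$ referenced elsewhere in the paper, adapted so that the trap is reachable via $g$-paths but is "invisible" to any single $\paramBox f$.
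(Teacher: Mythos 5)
Your high-level plan matches the paper's strategy — translate via Theorem~\ref{theorem:fsg theorem} to game trees, build tree-shaped models with exponential fan-out on one side and a singleton on the other, block the ``wrong'' $\paramBox{f}$-moves with cross-edges, and finish with Theorem~\ref{theorem:pigeonhole} — but the proposal stops exactly where the proof's real content begins, and the invariant you propose to establish is not the one that actually holds. You explicitly defer ``the precise design of $\mathbb A_i$ and $\mathbb B_i$'' as an anticipated obstacle; that design is the theorem. The missing idea is the following: writing $\mathcal F_1=\set{f\in\mathcal F\ \vert\ f\leq g}$ (so that $g=\vee_{f\in\mathcal F_1}f$), one chooses a \emph{minimal} set of assignments $\overrightarrow{\beta_1},\dots,\overrightarrow{\beta_t}$ with $g(\overrightarrow{\beta_j})=1$ for all $j$ but such that no single $f\in\mathcal F_1$ satisfies all of them (minimality forces $t\ge 2$ because $g\notin\mathcal F$). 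The models are $t$-ary trees of depth $i$ whose edges at each node are labelled by $\overrightarrow{\beta_1},\dots,\overrightarrow{\beta_t}$; every child is a $g$-successor, yet every available $\paramBox{f}$ with $f\leq g$ necessarily misses at least one of the $t$ children. Your proposal instead organizes the branching around the functions $f\in S$ themselves and around a ``binary choice'' per level; it is the assignments $\overrightarrow{\beta_j}$, not the disjuncts of $g$, that drive the construction, and the branching degree is $t$, not $2$.

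Consequently, the bound you aim for — each leaf covers at most a \emph{constant} $c$ of the $\card{\mathbb A_i}$ models, independent of $i$ — is not what the construction delivers and is not needed. The correct invariant is a multiplicative per-level loss: each $\paramBox{f}$-move with $f\in\mathcal F_1$ retains at most $t-1$ of the $t$ branches at that level, so after $i$ levels a leaf covers at most $(t-1)^i$ of the $t^i$ models $A_s$, and Theorem~\ref{theorem:pigeonhole} gives the lower bound $(t/(t-1))^i$. Note that Theorem~\ref{theorem:pigeonhole} as stated does not require $c$ to be constant. Two further points need care that your sketch glosses over: (i) operators $\paramBox{f}$ for $f\in\mathcal F\setminus\mathcal F_1$ (those not below $g$) must be excluded separately, which the paper does by adding $\overrightarrow{\alpha_j}$-labelled cross-edges between the $A_s$-models and $B$ so that any such move makes the two classes at the successor node intersect, contradicting Proposition~\ref{prop:closed game tree disjoint sets} — this is a different mechanism from the reflexive-singleton traps of the alternation-language section that you point to; and (ii) the separating formula is the diamond chain $\neg\paramBox{g}^i\neg p$ over a single variable, with $p$ true at exactly one leaf $w_s$ of $A_s$ and nowhere in $B$, not a chain of boxes over several variables.
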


The proof of Theorem~\ref{theorem:finite step succinctness} uses an extension of the technique used to prove the above-mentioned result in~\cite{FrenchVanDerHoekIlievKooi-SUCCINCTNESS-MODAL-LOGIC-AI-2013}. The main additions we make to their construction are ``false paths'' in the models that stop Spoiler from using operators of the form $\paramBox{f}$ where $f$ is not one of the functions from $\mathcal F$ appearing in the disjunctive definition of $g$, and a generalization of edges labelled with $1$ and $2$ to edges labelled with appropriate Boolean combinations of the involved modalities. Finally, instead of the diverging pairs technique, we use the pigeonhole technique to prove the lower bound on the game tree size. The proof can be found in Appendix~\ref{appendix:proof of theorem:finite step succinctness} (In Section~\ref{proofs:arbitrary step operators}, we give a more detailed presentation of an application of the pigeonhole technique.)

\section{Succinctness and Expressiveness for Arbitrary-Step Operators}\label{proofs:arbitrary step operators}

In this section, we give an overview of the proof of Theorem~\ref{theorem:arbitrary step main result}. In particular, we define the sets of languages $\mathcal L_s$ mentioned in the statement of the theorem as sets of \emph{alternation languages} (Section~\ref{sect:alternation language introduction}). We then study the relationships between $\lmplus{\mathcal L_1}$ and $\lmplus{\mathcal L_2}$ for sets $\mathcal L_1$ and $\mathcal L_2$ of alternation languages in detail. Due to the page limit, we only give the construction (Section~\ref{sect:model construction}) and state its main technical properties (Section~\ref{sect:formula size games alternation}), and the consequences for expressiveness (Section~\ref{section:alternation expressiveness}) and succinctness (Section~\ref{section:alternation succinctness}). The technical proofs are deferred to Appendix~\ref{appendix:arbitrary step proofs}.

To prove Theorem~\ref{theorem:arbitrary step main result}, it is enough to consider bimodal logics, i.e., models with two accessibility relations and thus languages over the alphabet $\set{1,2}$. We therefore only consider this case in the remainder of this section.

\subsection{Alternation Languages}\label{sect:alternation language introduction}

Let $\ell\ge1$ be a natural number. A word $s=s_1\dots s_\ell\in\set{1,2}^\ell$ is \emph{alternating} if for each $i\in\set{1,\dots,\ell-1}$, $s_i\neq s_{i+1}$. There are exactly two alternating words of length $\ell$, namely $\altword 1$, starting with $1$, and $\altword 2$, starting with $2$. The \emph{alternation language of length $\ell$}, denoted with $A_\ell$, is the set $\set{\altword 1,\altword 2}$. 

Following the definitions in Section~\ref{sect:main result:arbitrary step operators}, the language $A_\ell$ defines the modal operator $\paramBox{A_\ell}$, where $\paramBox{A_\ell}\varphi$ requires $\varphi$ to be true in all worlds reachable on a path whose labels form an alternating word of length $\ell$. This operator is natural in an epistemic setting, where it can be read as ``$A$ knows that $B$ knows that $A$ knows that $B$ knows \dots'' and vice versa, to the $\ell$-th degree.

The iterated application of the operator $\paramBox{A_\ell}$, denoted as usual with $\paramBox{A_\ell}^i$, addresses all worlds accessible on a path whose labels form a sequence of $i$ words from $A_\ell$. To be able to address the specific alternating words in this sequence, we extend the notation $\altword 1$ and $\altword 2$ above:  For a word $s=s_1\dots s_i\in\set{1,2}^*$, with $\altword s$ we denote the word $\altword{s_1}\altword{s_2}\dots\altword{s_i}$, i.e., the word consisting of $i$ alternating words of length $\ell$, where the $j$-th of these words starts with $s_j$. 

For a set $I\subseteq\mathbb N$, let $\lmA I$ denote the logic $\lmplus{\set{A_\ell\ \vert\ \ell\in I}}$, and let $\lmAp I$ denote the logic $\lmplus{ \set{A_\ell\ \vert\ \ell\in I}\cup\set{\set1,\set2}}$. Hence in the logic $\lmA I$, all operators $\paramBox{A_\ell}$ with $\ell\in I$ are allowed, the logic $\lmAp I$ additionally allows the classical operators $\Box_1$ and $\Box_2$. Since all logics $\lmAp I$ contain $\Box_1$ and $\Box_2$, and all involved languages are finite, all $\lmAp I$ have the same expressive power, namely that of classical bimodal logic. This is not true for the logics $\lmA I$, as we will see below.

Our main result on alternation languages states that if $I_1$ is not a subset of $I_2$, then $\lmAp{I_1}$ is exponentially more succinct than $\lmAp{I_2}$, and $\lmA{I_1}$ contains formulas that are not expressible in $\lmA{I_2}$. This result is the key step to proving Theorem~\ref{theorem:arbitrary step main result}.

\begin{restatable}{theorem}{theoremmainalternationresultsubsetequivalence}\label{theorem:main alternation result: subset equivalence}
 Let $I_1,I_2\subseteq\mathbb N$ with $I_1\nsubseteq I_2$. Then 
 \begin{enumerate}
  \item\label{enum:succinctness} $\lmAp{I_1}$ is exponentially more succinct than $\lmAp{I_2}$, and
  \item\label{enum:expressiveness} there is an $\lmA{I_1}$-formula for which there is no equivalent $\lmA{I_2}$-formula.
 \end{enumerate}
\end{restatable}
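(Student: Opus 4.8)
The plan is to exhibit, for a fixed $\ell \in I_1 \setminus I_2$, a sequence of formulas $(\varphi_m)_{m\in\mathbb N}$ in $\lmAp{I_1}$ (indeed already in $\lmplus{\{A_\ell\}}\subseteq\lmA{I_1}$ up to the classical operators) of linear size, such that any equivalent $\lmAp{I_2}$-formula must be exponentially large, and similarly a single $\lmA{I_1}$-formula with no $\lmA{I_2}$-equivalent at all. The natural candidate is $\varphi_m := (\paramBox{A_\ell})^m\, p$ for a propositional variable $p$, or a small Boolean combination thereof: this addresses exactly the worlds reachable along a path whose label sequence is a concatenation of $m$ alternating blocks of length $\ell$, and the point is that in $\lmAp{I_2}$ one cannot ``chunk'' the path into such blocks — one is forced to spell out the $\Box_1,\Box_2$ steps (or $A_{\ell'}$ steps for $\ell'\in I_2$, whose block length $\ell'\neq\ell$ is the wrong ``grain'') and thereby pay for every one of the roughly $2^m$ relevant worlds separately.

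The technical heart is the model construction of Section~\ref{sect:model construction} together with the formula-size-game machinery: I would build sets of pointed models $\modelClassA$ (on which $\varphi_m$ is true) and $\modelClassB$ (on which it is false) so that (i) $\card{\modelClassA}$ is exponential in $m$, (ii) the models look locally indistinguishable except along the specific alternating paths of length $\ell m$, and (iii) crucially, \emph{false paths} or ``trap'' worlds (the $\wtrap$ / $\paramTrapModel{\ell}{i}{A}$ gadgets flagged in the macro list) are inserted so that any $\paramBox{A_{\ell'}}$-move with $\ell'\neq\ell$, or any attempt by Spoiler to use $\Box_1,\Box_2$ moves to regroup, either fails outright (because some model in the current $\mathbb D$-class has empty successor set under that operator) or loses track of the distinction between $\modelClassA$-models. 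Then I would invoke Theorem~\ref{theorem:fsg theorem} to reduce the $\lmAp{I_2}$-formula-size lower bound to a lower bound on closed game trees in $\mathcal T_{\setSsf}(\node AB)$ for $\setSsf$ the $I_2$-operator set, and apply the pigeonhole technique (Theorem~\ref{theorem:pigeonhole}): the design goal of the construction is that every nontrivial leaf of every such closed tree has an $\modelClassA$-class of size at most some constant $c$ independent of $m$, forcing formula size $\ge \card{\modelClassA}/c$, which is exponential in $m$. For part~\ref{enum:expressiveness}, the same separation of paths feeds a bisimulation argument in the style of Theorem~\ref{theorem:1 step expressiveness -- or equivalence}: I would exhibit two pointed models that are bisimilar with respect to every operator available in $\lmA{I_2}$ (so agree on all $\lmA{I_2}$-formulas) but are distinguished by $\paramBox{A_\ell}$, using that without $\Box_1,\Box_2$ the logic $\lmA{I_2}$ genuinely cannot see individual edges.

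The main obstacle I anticipate is verifying condition (iii) in full generality — that is, proving that \emph{no} combination of the $I_2$-operators (including $\Box_1,\Box_2$ in the $\lmAp{I_2}$ case) lets Spoiler cover more than boundedly many $\modelClassA$-models per branch. The subtlety is that Spoiler can freely interleave $\vee$-moves (splitting $\mathbb C$) with sequences of $\paramBox{}$-moves, and the alternating-path structure must be robust enough that every such strategy, when it reaches the ``payload'' worlds carrying $p$, has necessarily committed to a particular one (or $O(1)$ many) of the exponentially many branches of the path tree; the false/trap paths are exactly what enforces this by punishing any box-move that would otherwise aggregate many worlds. Getting the trap gadgets calibrated so that they block $A_{\ell'}$ for all $\ell'\in I_2$ (note $I_2$ can be infinite, so this must be uniform) while not interfering with the intended $A_\ell$-behaviour is where the real care is needed; everything else — the reduction via Theorems~\ref{theorem:fsg theorem} and~\ref{theorem:pigeonhole}, and the bisimulation for expressiveness — is then comparatively routine bookkeeping.
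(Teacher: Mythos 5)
Your overall skeleton matches the paper's: fix $\ell\in I_1\setminus I_2$, take $\varphi_m=\paramBox{A_\ell}^m p$, build the trap-augmented models of Section~\ref{sect:model construction}, and combine Theorem~\ref{theorem:fsg theorem} with the pigeonhole bound of Theorem~\ref{theorem:pigeonhole}. But there is a genuine quantitative flaw in your key design goal (iii): you ask that every leaf of every closed game tree cover at most a \emph{constant} number $c$ of models, independent of $m$. That is unattainable. If, say, $2\ell\in I_2$, then $\paramBox{A_{2\ell}}^{m/2}p$ is a legitimate $\lmAp{I_2}$-formula equivalent to $\varphi_m$ whose single branch covers $2^{m/2}$ of the models (one for each $s\in\set{11,22}^{m/2}$); the paper explicitly notes that $2^{m/2}$ per branch is tight (Lemma~\ref{lemma:restriction on sets by one branch}). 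The correct accounting is not ``traps make wrong-grain moves fail outright'' (the traps are reflexive singletons, so no successor set is ever empty; they only make such branches useless in a minimal tree, Lemma~\ref{lemma:minimal formula gives trap avoiding tree}), but rather a cost argument: wrong-grain operators can only be applied at depths that are multiples of $\ell$ (Lemma~\ref{lemma:alternation:splitting only at multilpes of l}), and each such application, or each classical $\Box_j$, at least halves the set of covered strings per $2\ell$ of depth (Lemmas~\ref{lemma:l' not multiple of l forces constant} and~\ref{lemma:l' > l forces identities}). Pigeonhole then gives $\card{\modelClassB}/2^{m/2}=2^{m/2}$, which is still exponential — so your plan is repairable, but as stated the bound you aim to verify is false. (Minor point: in the paper it is $\modelClassB$, not $\modelClassA$, that has exponentially many models; $\modelClassA$ is a singleton. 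Theorem~\ref{theorem:pigeonhole} is symmetric, so this is cosmetic.)

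For part~\ref{enum:expressiveness} you propose a bisimulation argument: two pointed models $\set{A_{\ell'}\mid\ell'\in I_2}$-bisimilar but distinguished by $\paramBox{A_\ell}$. The paper instead argues syntactically through the same game-tree machinery: by Lemma~\ref{lemma:alternation:splitting only at multilpes of l}, after a first operator $A_{\ell'}$ with $\ell'$ not a multiple of $\ell$, the next modal operator on a covering branch would have to be a classical $\Box_j$, which $\lmA{I_2}$ does not have. Your bisimulation route is plausible but not obviously routine when $I_2$ is infinite — you would need a single pair of finite models respecting the back-and-forth conditions for \emph{all} $A_{\ell'}$ with $\ell'\in I_2$ simultaneously while breaking $A_\ell$ — whereas the paper's syntactic argument gets this uniformity for free from the already-established lemmas.
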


Note that, in contrast to the situation for logics of the form $\lmF$ for a set of Boolean functions $\mathcal F$ (see Section~\ref{proofs:single step operators}), we do not get the corresponding result that the relation $\effRewrite$, restricted to logics of the form $\lmplus{\mathcal L}$ is antisymmetric. The reason for this is that by adding an operator $\paramBox{L}$ for a \emph{singleton} language to a logic containing both classical operators $\Box_1$ and $\Box_2$ changes neither expressiveness nor succinctness of the logic; hence an arbitrary number of logics equivalent to $\mathsf{ML}_n$ in expressiveness and succinctness can be defined in this way.

\subsection{Model Construction}\label{sect:model construction}

Our main result about alternation languages, and the main ingredient to the proof of Theorem~\ref{theorem:main alternation result: subset equivalence}, is that if $\ell\notin I$, then every $\lmAp I$-formula $\psi$ equivalent to $\paramBox{A_\ell}^ip$ is exponentially large (in $i$), and there is no $\lmA I$-formula equivalent to $\paramBox{A_\ell}p$. In the following discussion, we focus on the more involved succinctness result.

We start by defining the models on which we play the formula size game: For each $i$ and $\ell\ge1$, we define classes of pointed models $\modelClassA$ and $\modelClassB$ such that $\modelClassA\models\paramBox{A_\ell}^ip$, and $\modelClassB\models\neg\paramBox{A_\ell}^ip$.

These classes are defined in three steps:
\begin{enumerate}
 \item For each $i\in\mathbb N$, we define a \emph{base model} $\baseA$, and for each $s\in\set{1,2}^i$, a \emph{base model} $\baseB$.
 \item We then extend the models $\baseA$ and $\baseB$ to $\starA$ and $\starB$. The extension consists of adding a ``trap'' to the models which the Spoiler candidate may never choose as successors in the $\Box_\ssf$-step of the formula size games. This 
 \item for each $i\in\mathbb N$, we define the above-mentioned classes. Since for each such $i$, there is only a single model $\starA$, we wimply identify the model $\starA$ with the class $\set{\starA}$ and write only $\modelClassA$ for this class. On the other hand, the class $\modelClassB$ contains all models $\starB$ with $s\in\set{1,2}^i$.
\end{enumerate}

The main idea of the definition of our models is to ensure that each branch of each tree $T\in\mathcal T_\setSsf(\angNode{\modelClassA}{\modelClassB})$ corresponding to a smallest formula $\psi$ as above can only cover a restricted number of models (namely, at most $2^{\frac i2}$ models). This allows us to use the pigeonhole technique (Theorem~\ref{theorem:pigeonhole}) to prove our succinctness result.

For $i,\ell\in\mathbb N$, let $s=s_1\dots s_i\in\set{1,2}^i$. The main idea of the following models is that in order to ``cover'' all models $\starB$, Spoiler needs to exhibit, for each of them, an $A_\ell^i$-successor of the root in which the variable $p$ is false. The ``traps'' in the models $\starB$ ensure that the only path for which this is true is the word $\altword s$. Therefore, Spoiler needs to play a strategy that covers at least all strings of the form $\altword s$. The model $\starA$ forces Spoiler to only cover strings of this form. Hence Spoiler must cover \emph{exactly} all strings $\altword s$ where $s\in\set{1,2}^i$, which cannot be done succinctly without the operator $\paramBox{A_\ell}$. The ``base models'' models $\baseA$ and $\baseB$ are defined as in the following picture:  
 
                 \begin{tikzpicture}
                    \node at (-1,0)         {$\baseA$:};
		    \node at (0,0)  (w0)    {$(w^A_0)$};
		    \node at (2,0)  (w1)    {$(w_1)$};
		    \node at (4,0)  (w2)    {$(w_2)$};
		    \node at (6,0)  (dots1) {$\dots$};
		    \node at (6.5,0)  (dots2) {$\dots$};
		    \node at (8.5,0) (wi)    {$(w_i)$};
  
		    \draw (w0)    edge[bend right,->] node [auto=right] {$\altword 2$} (w1);
		    \draw (w0)    edge[bend left,->]  node [auto=left]  {$\altword 1$} (w1);
		    \draw (w1)    edge[bend right,->] node [auto=right] {$\altword 2$} (w2);
		    \draw (w1)    edge[bend left,->]  node [auto=left]  {$\altword 1$} (w2);
		    \draw (w2)    edge[bend right,->] node [auto=right] {$\altword 2$} (dots1);
		    \draw (w2)    edge[bend left,->]  node [auto=left]  {$\altword 1$} (dots1);
		    \draw (dots2) edge[bend right,->] node [auto=right] {$\altword 2$} (wi);
		    \draw (dots2) edge[bend left,->]  node [auto=left]  {$\altword 1$} (wi);
                 \end{tikzpicture} \\
                 
                 \begin{tikzpicture}
		    \node at (-1,0)         {$\baseB$:};
		    \node at (0,0)  (w0)    {$(w^B_0)$};
		    \node at (2,0)  (w1)    {$(w_1)$};
		    \node at (4,0)  (w2)    {$(w_2)$};
		    \node at (6,0)  (dots1) {$\dots$};
		    \node at (6.5,0)  (dots2) {$\dots$};
		    \node at (8.5,0) (wi)    {$(w_i)$};
  
		    \draw (w0)    edge[->] node [auto=right] {$\altword{s_1}$} (w1);
		    \draw (w1)    edge[->] node [auto=right] {$\altword{s_2}$} (w2);
		    \draw (w2)    edge[->] node [auto=right] {$\altword{s_3}$} (dots1);
		    \draw (dots2) edge[->] node [auto=right] {$\altword{s_i}$} (wi);
	         \end{tikzpicture}

 An edge labelled $\altword j$ for $j\in\set{1,2}$ between $w_m$ and $w_{m+1}$ indicates that $w_{m+1}$ is an $\altword j$-successor of $w_m$. This is achieved by intermediate worlds (not shown in the picture) $w_m=u_0,\dots,u_\ell=w_{m+1}$ such that $(u_{p-1},u_p)\in R_{\altword j[p]}$ for each relevant $p$. The propositional variable $p$ is false in all worlds, except for the world $w_i$ of $\baseA$. The root of $\baseA$ ($\baseB$) is $w_0^A$ ($w_0^B$). By construction, each world $u$ in $\baseA$ and $\baseB$ has a unique distance from the model's root. We denote this distance with $\depth u$.
 
 We obtain the ``extended'' models $\starA$ and $\starB$ from $\baseA$ and $\baseB$ as follows: To both models, we add a new node $\wtrap$, which is a reflexive singleton (i.e., a world with an $1$- and a $2$-edge to itself). In $\starA$, the variable $p$ is false in $\wtrap$, in $\starB$, the variable is true in $\wtrap$. For each world $w$ of $\baseA$ ($\baseB$) that does not have a $j$-successor for some $j\in\set{1,2}$, we add a $j$-edge leading to the world $\wtrap$ of the respective model. These edges are ``false paths,'' since the reflexive singleton of $\starA$ does not allow Spoiler to prove that all relevant paths end in a world satisfying $p$, and the singleton in $\starB$ does not allow Spoiler to find a path to a world where $p$ is false. Hence these ``false paths'' are never taken in a closed game tree that corresponds to a minimal formula.
  
 Our classes of models now contain all models constructed in the above way: For $i\ge1$, we identify $\modelClassA$ with the singleton $\set{\starA}$, and define  $\modelClassB=\set{\starB\ \vert\ s\in\set{1,2}^i}$. Then $\modelClassA\models\paramBox{A_\ell}^ip$, and $\modelClassB\models\neg\paramBox{A_\ell}^ip$:

\begin{itemize}
 \item Every path made up of $i$ alternating words of length $\ell$ starting at $w^A_0$ in $\starA$ leads to the world $w_i$ of $\baseA$, where $p$ is true (no such path ends in the reflexive singleton).
 \item For $s=s_1\dots s_i\in\set{1,2}^i$, the world $w_i$, in $\starB$, is an $\altword s$-successor of $w^B_0$ and does not satisfy $p$.
\end{itemize}

\subsection{Formula Size Games on our Models}\label{sect:formula size games alternation}

We now state a few technical results on formula size games on closed game trees in $\mathcal T(\angNode\modelClassA\modelClassB)$. Essentially, these results say that Spoiler indeed needs to play a strategy as intended by the definition of our models. Recall that our goal is to show that if $\ell\notin I$, then every $\lmAp{I}$-formula $\psi$ equivalent to $\paramBox{A_\ell}^ip$ must be of exponential size. In the following, we fix a smallest such formula $\psi$, and consider the game tree that corresponds to the evaluation of $\psi$ on the classes of models $\modelClassA$ and $\modelClassB$ in the following way:
For a formula $\psi$ and classes $\mathbb A$ and $\mathbb B$ of pointed models with $\mathbb A\models\psi$ and $\mathbb B\models\neg\psi$, let $\paramParseTree{\psi}{\mathbb A}{\mathbb B}$ be the closed game tree obtained from following the strategy corresponding to $\psi$ on the starting node $\node AB$. Clearly, $\paramParseTree{\psi}{\mathbb A}{\mathbb B}\in\mathcal T_\setSsf(\node AB)$ if $\setSsf$ contains at least all succesor selection functions appearing in $\psi$.

We first show that the formula $\psi$ indeed must indeed avoid the ``traps'' added to the models, as intended:

\begin{restatable}{lemma}{lemmaminimalformulagivestrapavoidingtree}\label{lemma:minimal formula gives trap avoiding tree}
 Let $I\subseteq\mathbb N$, let $\psi$ be a minimal $\lmAp{I}$-formula equivalent to $\paramBox{A_\ell}^ip$, let $v$ be a node of 
 $\parseTree$ labelled $\node AB$, and let $(M,w)\in\mathbb A\cup\mathbb B$. Then $w\neq\wtrap$.
\end{restatable}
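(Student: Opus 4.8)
\textbf{Proof plan for Lemma~\ref{lemma:minimal formula gives trap avoiding tree}.}
The plan is to argue by contradiction: suppose some node $v$ of $\parseTree$ labelled $\node AB$ contains a pointed model whose world is $\wtrap$, and show that $\psi$ could then be replaced by a strictly smaller equivalent formula, contradicting minimality. First I would observe that the root of $\parseTree$ is labelled $\angNode{\modelClassA}{\modelClassB}$, and no world in $\modelClassA\cup\modelClassB$ is $\wtrap$ at the root, so any occurrence of $\wtrap$ in some node's label must first be introduced by a $\paramBox{A_m}$-move (atomic, $\neg$- and $\vee$-moves only copy or split existing pointed models; they never produce new worlds). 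So the relevant thing to analyse is: along a branch, the first $\paramBox{A_m}$-move whose successor node's label contains a pointed model $(M,\wtrap)$.

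The key structural facts I would establish are the following. In $\starA$, the world $\wtrap$ satisfies $\neg p$, it is a reflexive singleton (so every $A_m$-successor of $\wtrap$ is again $\wtrap$), and in $\starB$ the world $\wtrap$ satisfies $p$ and is likewise a reflexive singleton. Hence once a pointed model $(M,\wtrap)$ enters the $\mathbb A$-side of a label, it can never be ``closed'' by an atomic move that separates it from the $\mathbb B$-side together with the genuine models, because on $(M,\wtrap)$ with $M$ an $A$-model the only variable that holds is determined and does not match the $B$-side singleton; more importantly, I would argue that carrying $(M,\wtrap)$ around never \emph{helps} Spoiler: since $\wtrap$ is a reflexive singleton with a fixed propositional type, the constant subformula ($p$ or $\neg p$, or more precisely whatever atomic move eventually labels that branch) that handles $\wtrap$ is already ``for free'' — any branch that must cover $(M,\wtrap)$ on the $\mathbb A$-side and separate it from $\mathbb B$-side models needs the same atomic move it would need anyway for the genuine deepest worlds $w_i$. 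The cleanest way to make this precise is: take the tree $\parseTree$, find a $\paramBox{A_m}$-move that introduces $(M,\wtrap)$, and show that Spoiler could instead have chosen the $\mathbb D_1$-set (and correspondingly the induced $\mathbb C_1$) \emph{omitting} all pointed models rooted at $\wtrap$ — this is legitimate because $\wtrap$ is never the \emph{unique} $A_m$-successor witnessing falsity of the relevant subformula for any genuine $B$-model (the genuine $B$-model $\starB$ has the world $w_i$ as an honest $\altword s$-successor where $p$ fails, so the "false path" into $\wtrap$ is never needed), and on the $\mathbb A$-side dropping $\wtrap$-rooted models only shrinks the class, which never hurts. Then the subtree below that move, pruned of the now-absent $\wtrap$-models, is still closed and no larger, and iterating over all such moves yields a closed game tree in $\mathcal T(\angNode{\modelClassA}{\modelClassB})$ of size $\le\card\psi$ in which no node label mentions $\wtrap$; by Theorem~\ref{theorem:fsg theorem} this gives an equivalent formula of size $\le\card\psi$ all of whose intermediate classes avoid $\wtrap$, and if the pruning was ever strict we get a contradiction with minimality, while if it was never strict we are already done.

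The main obstacle I anticipate is the bookkeeping needed to show that pruning $\wtrap$-rooted models from a $\paramBox{A_m}$-move's target sets preserves \emph{closedness} of the whole subtree below it. The $\vee$-moves are the delicate point: when Spoiler splits a class $\mathbb C$ into $\mathbb C_1\cup\mathbb C_2$, removing a model from $\mathbb C$ higher up means it should be removed from whichever of $\mathbb C_1,\mathbb C_2$ it landed in, and the constraint $\mathbb C=\mathbb C_1\cup\mathbb C_2$ is still satisfiable after removal. For the $\mathbb B$-side (which is the "$\mathbb D$" side up to negation parity), the constraint in a $\paramBox{A_m}$-move is that \emph{every} $\mathbb D$-model must have an $A_m$-successor in $\mathbb D_1$; here the point is that if a $\mathbb D$-model's only witnessing successor were a $\wtrap$-rooted model, that $\mathbb D$-model would itself have had to be $\wtrap$-rooted (by the "false path" analysis: in $\starB$, following a label into $\wtrap$ only happens from a world with no genuine successor, but then all deeper successors are $\wtrap$ too, and $p$ holds at $\wtrap$ so such a path can never witness $\neg\paramBox{A_m}\cdots$) — so removing $\wtrap$-rooted models never destroys the witnessing property for a genuine $\mathbb D$-model. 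Establishing this last claim rigorously — that a $\wtrap$-rooted model appears on the $\mathbb B$-side of a label only as a consequence of an earlier $\wtrap$-rooted model being there, which by induction traces back to the root where there are none — is the crux, and it rests on the reflexive-singleton structure of $\wtrap$ together with the fact that the genuine $B$-models always have an honest falsifying path of the required shape.
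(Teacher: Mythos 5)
Your proposal has two genuine gaps, and it misses the idea the paper's proof actually turns on.

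First, the central ``pruning'' step is not available to Spoiler on the forced side of a $\paramBox{L}$-move. By the rules of the game, when Spoiler plays $\paramBox{\ssf}$ on a node $\node CD$, the new class $\mathbb C_1$ is \emph{required} to be the set of \emph{all} $\ssf$-successors of all models in $\mathbb C$; only $\mathbb D_1$ is chosen freely. So if the formula $\psi$ applies some $\paramBox{L}$ at a point where a covered world of $\starA$ has, for some word of $L$, only a ``false path'' continuation into $\wtrap$, then $(\starA,\wtrap)$ necessarily enters the label, and no alternative choice of game moves for the \emph{same} formula can omit it. What has to be ruled out is a property of the formula, not of Spoiler's choices. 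The paper gets this from an observation you never establish: since $\psi$ is equivalent to $\paramBox{A_\ell}^ip$ and minimal, it contains only positive occurrences of $p$ and no other variable, so every leaf of $\parseTree$ is an atomic $p$-move with the true/false sides aligned with $\modelClassA$/$\modelClassB$. A bottom-up induction over the tree (through $\neg$, $\vee$ and $\paramBox{L}$) then shows that every world labelled on the side corresponding to $\modelClassA$ must have a descendant where $p$ is true, and every world on the side corresponding to $\modelClassB$ a descendant where $p$ is false; the reflexive singleton $\wtrap$ violates both (in $\starA$ it and all its descendants falsify $p$, in $\starB$ they all satisfy $p$). Your argument for the $\mathbb D$-side (``$p$ holds at $\wtrap$ so such a path can never witness $\neg\paramBox{A_m}\cdots$'') tacitly assumes exactly this positivity fact about the leaves; without it, the subformula being falsified under the box need not be (a positive occurrence of) $p$, and the claim fails.

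Second, your minimality contradiction does not fire. Removing pointed models from node labels does not change the number of nodes of the tree, hence not the size of the corresponding formula — indeed the pruned tree represents the \emph{same} formula $\psi$ with different model labels. So ``strict pruning'' yields no smaller equivalent formula and no contradiction with minimality. At best your construction would show that \emph{some} closed game tree of size $\card\psi$ avoids $\wtrap$; but the lemma asserts the property for $\parseTree$ itself, and the subsequent lemmas (e.g.\ Lemma~\ref{lemma:alternation:splitting only at multilpes of l}) need precisely that the canonical tree of the minimal formula avoids the traps.
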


Our next result is that in the formula $\psi_i$, operators $\paramBox{A_{\ell'}}$ can only appear in depths that are multiples of $\ell$. The proof uses that each path that is not a prefix of a word in $(A_\ell)^i$ leads to $\wtrap$ in the models from $\modelClassA$ and $\modelClassB$.
For a node $v$ of a tree $T$, with $\boxlabels v$, we denote the sequence of successor selection functions appearing in $\paramBox{.}$ operators on the path from $T$'s root to $v$, excluding the label of $v$ itself. (We do not make $T$ explicit in the notation, this will always be clear from the context, and again identify $L$ and $\ssf_L$ for a language $L$).

We say that a language $\emptyset\neq L\subseteq\set{1,2}^*$ is \emph{length}-uniform if there is some $i$ such that $L\subseteq\set{1,2}^i$, i.e., all words in $L$ have the same length. We denote this length $i$ with $\dcard L$. Clearly, the class of length-uniform languages is closed under concatenation, and all languages $L$ we consider in this section (the alternating languages $A_\ell$ and the languages $\set1$ and $\set2$) are length-uniform. For a node $v\in\parseTree$, and a string $s\in\set{1,2}^i$, we say that $v$ \emph{covers} $s$, if one of the classes of models with which $v$ is labelled contains a model $(\starB,w)$ for some $w\in\starB$. As discussed before, we will show that each leaf $v\in\parseTree$ can only cover a restricted number of strings $s$.

\begin{restatable}{lemma}{lemmaalternationsplittingonlyatmultiplesofl}\label{lemma:alternation:splitting only at multilpes of l}
 Let $\psi$ be a minimal $\lmAp{I}$-formula equivalent to $\paramBox{A_\ell}^ip$. Let $v\in\parseTree$. Let $\boxlabels{v}=L_1\dots L_{m-1}A_{\ell'}$, where each $L_i$ is length-uniform and $\ell'\in\mathbb N$. Then $\dcard{L_1\circ\dots\circ L_{m-1}}$ is a multiple of $\ell$.
\end{restatable}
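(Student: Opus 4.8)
The statement to prove is Lemma~\ref{lemma:alternation:splitting only at multilpes of l}: if $\psi$ is a minimal $\lmAp{I}$-formula equivalent to $\paramBox{A_\ell}^ip$, $v\in\parseTree$, and $\boxlabels{v}=L_1\dots L_{m-1}A_{\ell'}$ with each $L_j$ length-uniform, then $\dcard{L_1\circ\dots\circ L_{m-1}}$ is a multiple of $\ell$. The plan is to argue by contradiction: suppose the prefix length $d:=\dcard{L_1\circ\dots\circ L_{m-1}}$ is not a multiple of $\ell$, and show that the $\paramBox{A_{\ell'}}$-move that labels $v$ cannot be legal in the game on $\modelClassA$ and $\modelClassB$, because it would force Spoiler to select either $\wtrap$ or no successor at all, contradicting Lemma~\ref{lemma:minimal formula gives trap avoiding tree} (or, in the latter case, contradicting that $v$ is not a leaf labelled with a variable in a closed tree, i.e.\ that the $\paramBox{}$-move was actually played).

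\textbf{Key steps.} First I would set up the ``reachability bookkeeping'': by the definition of $\paramParseTree{\psi}{\cdot}{\cdot}$ and the semantics of the $\paramBox{\ssf}$-move, every pointed model $(M,w)$ occurring in a label of $v$ is obtained from a root of $\modelClassA\cup\modelClassB$ by following a path whose label-word lies in $L_1\circ\dots\circ L_{m-1}$; since all the $L_j$ are length-uniform, that word has length exactly $d$. Next I would use the structure of the base models: in $\baseA$ and $\baseB$, the only paths of length a multiple of $\ell$ from the root stay inside the ``spine'' (the $w_0,w_1,\dots$ worlds), reaching $w_{d/\ell}$ along a word of the form $\altword t$; every path whose length is \emph{not} a multiple of $\ell$ — in particular any path of length $d$ when $\ell\nmid d$ — is not a prefix of any word in $(A_\ell)^i$, hence in $\starA$/$\starB$ it must at some point have used one of the ``false'' edges added in the construction, landing in $\wtrap$, which is reflexive and absorbing. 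So every world $w$ appearing in $v$'s label is $\wtrap$. But Lemma~\ref{lemma:minimal formula gives trap avoiding tree} says no world in a label of a node of $\parseTree$ equals $\wtrap$; contradiction. Strictly, I should first rule out that $v$'s label classes are empty: if $\mathbb D=\emptyset$ the $\paramBox{}$-move is trivially possible but then the leaf can never be closed with a variable move unless $\mathbb C$ is handled, and a minimal $\psi$ would not contain such a vacuous subformula — this is a short additional argument, or can be folded into the observation that the corresponding subformula of $\psi$ could be deleted, contradicting minimality. Once every world of every label of $v$ is $\wtrap$ or the labels are trivial, the desired contradiction is immediate.

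\textbf{Main obstacle.} I expect the main subtlety to be the careful case analysis showing that a length-$d$ path from the root of $\starA$ or $\starB$ with $\ell\nmid d$ \emph{necessarily} passes through $\wtrap$ — one must check that in the base models no world has outgoing edges other than those drawn (so the only way to extend a partial spine-path beyond its natural end is via the added false edge to $\wtrap$), and that the intermediate worlds $u_0,\dots,u_\ell$ realizing an $\altword j$-edge have no ``shortcuts'' that would let a non-multiple-of-$\ell$ path stay on the spine. This is really a structural property of the construction in Section~\ref{sect:model construction} and should follow from ``each world $u$ has a unique depth $\depth u$'' together with the explicit edge list, but it needs to be stated cleanly. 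A secondary point is bookkeeping with the length-uniformity: one must invoke that length-uniform languages are closed under concatenation so that $L_1\circ\dots\circ L_{m-1}$ is itself length-uniform with $\dcard{\cdot}=d$, and that on the $\modelClassB$ side each model $\starB$ individually contributes paths of length exactly $d$ (not merely ``at most $d$''), which is where the uniformity, as opposed to mere finiteness, of the alternation languages is used.
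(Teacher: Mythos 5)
There is a genuine gap, and it sits at the heart of your argument. You claim that any path of length $d$ from the root with $\ell\nmid d$ ``is not a prefix of any word in $(A_\ell)^i$'' and therefore must have used a false edge and landed in $\wtrap$, so that every world labelling $v$ is already $\wtrap$. This is false. The $\altword j$-edges between consecutive spine worlds $w_m$ and $w_{m+1}$ are realized by $\ell-1$ intermediate worlds, and a path of length $d=q\ell+r$ with $0<r<\ell$ can follow the spine for $q$ complete alternating blocks and then $r$ further steps into the next block, ending at an intermediate world $u_r$ --- a perfectly legitimate non-trap world whose label word \emph{is} a proper prefix of a word in $(A_\ell)^i$. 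A strong sign that your argument proves too much: it nowhere uses the hypothesis that the operator at $v$ is $\paramBox{A_{\ell'}}$, so if it worked it would show that \emph{no} node of the game tree can sit at a depth that is not a multiple of $\ell$; but the minimal formulas in the succinctness analysis do use $\Box_1,\Box_2$ at exactly such depths (cf.\ Lemma~\ref{lemma:l' not multiple of l forces constant}, where $L_{i+1}=\set{\alpha}$ is applied at a non-multiple depth).

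The correct argument, and the one the paper gives, locates the contradiction one level down. By Proposition~\ref{prop:depth of covered worlds} the worlds labelling $v$ all have depth $d$; if $\ell\nmid d$ these are intermediate worlds, which by construction have exactly one of a $1$- or a $2$-successor inside the base model, the other label leading via a false edge to $\wtrap$. Now the hypothesis that $v$ is labelled $\paramBox{A_{\ell'}}$ is essential: $A_{\ell'}$ contains both $\variableLAltword{\ell'}{1}$ (starting with $1$) and $\variableLAltword{\ell'}{2}$ (starting with $2$), and in the $\paramBox{\ssf}$-move the class $\mathbb C_1$ at $v$'s successor consists of \emph{all} $A_{\ell'}$-successors, so it necessarily contains a pointed model whose world is $\wtrap$ (reached via the missing label). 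That contradicts Lemma~\ref{lemma:minimal formula gives trap avoiding tree} applied to the child of $v$, not to $v$ itself. Your ``proof plan'' paragraph gestures at this (``the $\paramBox{A_{\ell'}}$-move \dots{} would force Spoiler to select either $\wtrap$ or no successor''), but the argument you then actually develop in ``Key steps'' abandons it for the incorrect claim above, so the proof as written does not go through.
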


The next result, again following from the ``false paths'' in the construction, is that in $\mathcal T(\angNode{\starA}{\starB})$, Spoiler indeed needs to play the intended strategy, namely, following exactly the path $\altword{s}$ in the model $\starB$. This follows from the above, since there is only one path avoiding $\wtrap$ in model $\starB$, namely the path $\altword s$. Hence Spoiler has to play a sequence of languages covering $\altword s$ for each $s\in\set{1,2}^i$.

\begin{restatable}{lemma}{lemmaalternationalsalwaysinlanguage}\label{lemma:alternation: a-l-s always in language}
 Let $\psi$ be a minimal $\lmAp{I}$-formula equivalent to $\paramBox{A_\ell}^ip$. Let $\boxlabels{v}=L_1\dots L_m$, where each $L_i$ is length-uniform. Let $L:=L_1\circ\dots\circ L_m$, let $d:=\dcard L$. Then $\altword s[1\dots d]\in L$ for each $s\in\set{1,2}^i$ such that $v$ covers $s$.
\end{restatable}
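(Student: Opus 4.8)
The plan is to induct along the path in $\parseTree$ from the root to $v$, keeping track of the root worlds of the $\starB$-models that occur at the nodes. The structural fact about the models $\starB$ that drives the argument is the following: since $\wtrap$ is a reflexive singleton it is absorbing, so a finite path in $\starB$ that does not end in $\wtrap$ never visits $\wtrap$ at all, and inside $\baseB$ every world other than $w_i$ has exactly one outgoing edge. Consequently, whenever, for a length-uniform language $L$, a world $w'\neq\wtrap$ is an $L$-successor in $\starB$ of a world $u$ of $\baseB$, we must have $\depth{w'}=\depth{u}+\dcard{L}\le i\ell$ and $\altword{s}[\depth{u}+1\dots\depth{w'}]\in L$: the path realising $w'$ cannot meet $\wtrap$, so it is the unique path of its length from $u$ staying inside $\baseB$, and that path's labels are exactly this subword of $\altword{s}$.

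Before the induction I would note that only the class corresponding to $\mathbb B$ can contain a $\starB$-model: at the root the class corresponding to $\mathbb A$ is $\modelClassA=\set{\starA}$, and no move changes the underlying Kripke model of a pointed model (atomic moves end branches, not-moves swap the two labels, or-moves pass to subsets, and $\paramBox{\cdot}$-moves pass to successor worlds). Hence ``$v$ covers $s$'' means precisely that the class corresponding to $\mathbb B$ at $v$ contains $(\starB,w)$ for some $w$, and it suffices to establish the invariant: for every node $v'$ on the root-to-$v$ path, writing $\boxlabels{v'}=L_1\dots L_k$ and $d':=\dcard{L_1\circ\dots\circ L_k}$, if the class corresponding to $\mathbb B$ at $v'$ contains $(\starB,w)$, then $d'\le i\ell$, $w$ is the world of $\baseB$ at depth $d'$, and $\altword{s}[1\dots d']\in L_1\circ\dots\circ L_k$.

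The base case ($v'$ the root, $k=0$) is immediate, as $w^B_0$ has depth $0$ and the empty word is the only word of the empty concatenation. For the inductive step, a not-move or or-move leaves $\boxlabels{\cdot}$ unchanged and only swaps the two labels or passes to a subset, so the invariant is inherited from the parent. For a $\paramBox{L_{k+1}}$-move from $v'$ to its child $v''$: by Lemma~\ref{lemma:minimal formula gives trap avoiding tree}, no root world occurring at $v'$ or $v''$ equals $\wtrap$. Depending on the parity of the number of negations above $v'$, the class corresponding to $\mathbb B$ at $v'$ plays the $\mathbb C$-role of the move (its successor class is the set of \emph{all} $L_{k+1}$-successors of its members) or the $\mathbb D$-role (its successor class contains, for each member, some chosen $L_{k+1}$-successor of it). In both cases any $\starB$-model $(\starB,w')$ in the class corresponding to $\mathbb B$ at $v''$ is an $L_{k+1}$-successor of some $\starB$-model $(\starB,w)$ in the class corresponding to $\mathbb B$ at $v'$ --- for the $\mathbb C$-role this is immediate from the definition of the move, and for the $\mathbb D$-role it uses that, in the strategy associated with $\psi$ in the proof of Theorem~\ref{theorem:fsg theorem}, every element of the chosen set is an $L_{k+1}$-successor of some member of the preceding class. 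Applying the inductive hypothesis at $v'$ (so $w$ has depth $d'$ and $\altword{s}[1\dots d']\in L_1\circ\dots\circ L_k$) and then the structural fact, with $w$ in the role of $u$ and $w'\neq\wtrap$, gives $\depth{w'}=d'+\dcard{L_{k+1}}=\dcard{L_1\circ\dots\circ L_{k+1}}\le i\ell$ and $\altword{s}[d'+1\dots\depth{w'}]\in L_{k+1}$, whence $\altword{s}[1\dots\depth{w'}]\in L_1\circ\dots\circ L_{k+1}$. This is the invariant at $v''$, and instantiating it at $v'=v$ proves the lemma.

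I expect the $\paramBox{\cdot}$-move case to be the main obstacle, and within it the $\mathbb D$-role subcase: one must use that in $\parseTree$ --- as opposed to an arbitrary closed game tree --- the set Spoiler chooses consists only of $L_{k+1}$-successors of members of the preceding class, so that no spurious $\starB$-model whose root does not lie on $\altword{s}$ can enter a label. The other point requiring care is the bookkeeping of the negation parity, to decide which of the two game roles the class corresponding to $\mathbb B$ occupies at each box-move. Both become routine once Lemma~\ref{lemma:minimal formula gives trap avoiding tree} is in hand to keep $\wtrap$ out of the root positions, since ruling out $\wtrap$ is exactly what pins every relevant path onto the chain labelled $\altword{s}$.
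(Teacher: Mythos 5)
Your proof is correct and follows the same route as the paper's: rule out $\wtrap$ via Lemma~\ref{lemma:minimal formula gives trap avoiding tree}, observe that the covered world then lies at depth $d$ on the unique non-trap chain of $\baseB$ (so it is the unique $\altword s[1\dots d]$-successor of the root), and trace a chain of $\boxlabels{v}$-successors back to the root to conclude $\altword s[1\dots d]\in L$. The only difference is one of packaging: the paper delegates the path-tracing induction to Lemma~\ref{lemma:fsg path property} and the depth bookkeeping to Proposition~\ref{prop:depth of covered worlds}, whereas you carry out a single combined induction along the root-to-$v$ branch --- and, to your credit, you explicitly handle the $\mathbb D$-side of the $\paramBox{\cdot}$-move, where one must use that in $\parseTree$ the chosen successor set contains no spurious elements, a point the cited lemma's proof passes over silently.
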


The above two results can now be used to prove that we can indeed apply the pigeonhole technique (Theorem~\ref{theorem:pigeonhole}). For this we show, in the final two results of this section, that each branch of $\parseTree$ only covers a restricted number of strings $s\in\set{1,2}^i$. The reason for this is that the application of each available operator $\Box_1$, $\Box_2$ and $\paramBox{A_{\ell'}}$ for $\ell'\neq\ell$ comes with the ``cost'' of excluding a significant set of values $s$ that the corresponding branch of the formula covers.

The first of these two results addresses the case where a branch a the formula equivalent to $\paramBox{A_\ell}^i$ uses an operator $A_{\ell'}$, where $\ell'$ is not a multiple of $\ell$. Due to the above Lemma~\ref{lemma:alternation:splitting only at multilpes of l}, such occurrances are restricted to modal depths which themselves are a multiple of $\ell$. Therefore, \emph{immediately after} such an application, no operator $\paramBox{A_{\ell''}}$ can appear, and a classical operator $\Box_j$ for $j\in\set{1,2}$ must be used. Hence such a branch can only ``cover'' paths in the model that have the symbol $j$ at the next position, which is only true for half of the words in $(A_\ell)^i$. Hence each such application of a modal operator in a branch halves the number of strings $s$ covered with this branch. One can also derive the expressiveness part of Theorem~\ref{theorem:main alternation result: subset equivalence} from this lemma, since in the logic $\lmA I$, the required classical operators simply are not available.

\begin{restatable}{lemma}{lemmalprimenotmultipleoflforcesconstant}\label{lemma:l' not multiple of l forces constant}
 Let $\psi$ be a minimal $\lmAp{I}$-formula equivalent to $\paramBox{A_\ell}^ip$. Let $v\in\parseTree$ such that $v$ covers $s_1$ and $s_2$. Let $\boxlabels{v}=L_1\dots L_m$, and let $L_i=A_{f\cdot\ell+q}$ for $f\ge 0$ and $1\leq q<\ell$. Then $s_1[u]=s_2[u]$, where $u=\frac1\ell\cdot\dcard{L_1\circ\dots\circ L_{i-1}}+f+1$.
\end{restatable}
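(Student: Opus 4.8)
The plan is to pin down, one letter at a time, the unique path label $\altword{s}[1\dots d]$ that the node $v$ can ``realise'' in each covered model $\starmodel{B}{\ell}{s}$, and then to read the letter $s[u]$ directly off this label. We may assume $s_1\neq s_2$. Write $\boxlabels{v}=L_1\dots L_m$, $L_i=A_{\ell'}$ with $\ell'=f\ell+q$ and $1\le q<\ell$, and set $d=\dcard{L_1\circ\dots\circ L_m}$ and $p_0=\dcard{L_1\circ\dots\circ L_{i-1}}$, so that $u=\tfrac1\ell p_0+f+1$. First I would check that $u$ is a legitimate index: applying Lemma~\ref{lemma:alternation:splitting only at multilpes of l} to the ancestor of $v$ whose box-labels are $L_1\dots L_{i-1}A_{\ell'}$ gives that $p_0$ is a multiple of $\ell$, say $p_0=k\ell$, so $u=k+f+1\in\mathbb N$; and by Lemma~\ref{lemma:alternation: a-l-s always in language} we have $\altword{s_1}[1\dots d],\altword{s_2}[1\dots d]\in L_1\circ\dots\circ L_m$, whence $d\le i\ell$ while $d\ge\dcard{L_1\circ\dots\circ L_i}=(k+f)\ell+q$. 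Since $0<q<\ell$, this forces $k+f\le i-1$, so $1\le u\le i$ and $(k+f)\ell+q+1\le i\ell$.

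The crux is to show that the modal operator applied immediately below the $A_{\ell'}$-move must be a \emph{classical} $\Box_j$. If there were a further box-label $L_{i+1}$ and it were an alternation language $A_{\ell''}$, then Lemma~\ref{lemma:alternation:splitting only at multilpes of l} (applied at the ancestor with box-labels $L_1\dots L_iA_{\ell''}$) would make $\dcard{L_1\circ\dots\circ L_i}=(k+f)\ell+q$ a multiple of $\ell$, contradicting $0<q<\ell$; since $\set1$ and $\set2$ are the only non-alternation languages available in $\lmAp{I}$, $L_{i+1}$ must be classical, say $L_{i+1}=\set j$ with $j\in\set{1,2}$. It remains to see that such an $L_{i+1}$ exists, i.e. $i<m$. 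For this I would use trap-avoidance: the worlds appearing at $v$ all sit at depth $d=(k+f)\ell+q$, which is not a multiple of $\ell$ and is $<i\ell$; hence in $\starmodel{A}{\ell}{i}$ they are distinct from $w_i$, and in each covered $\starmodel{B}{\ell}{s}$ they are distinct from $\wtrap$ by Lemma~\ref{lemma:minimal formula gives trap avoiding tree}, so $p$ is false on both sides of $v$'s label. Since only a modal move can change the truth value of $p$, $v$ must lie strictly above some modal move on every branch carrying a covered model, and that move is again classical by the same depth argument; as $u$ depends only on the shared prefix $L_1\dots L_i$, we may pass to such a descendant, i.e. assume $i<m$ and $L_{i+1}=\set j$.

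With $L_{i+1}=\set j$ in hand, the rest is direct. Fix $s\in\set{s_1,s_2}$. As each $L_r$ is length-uniform, the membership $\altword{s}[1\dots d]\in L_1\circ\dots\circ L_m$ factors uniquely as $y_1\cdots y_m$ with $y_r\in L_r$ and $\card{y_r}=\dcard{L_r}$; the factor $y_{i+1}\in L_{i+1}=\set j$ has length $1$ and occupies position $p_0+\ell'+1$, so $\altword{s}[(k+f)\ell+q+1]=j$. That position lies inside the $u$-th length-$\ell$ block of $\altword{s}$ — the alternating word starting with $s[u]$ — at internal position $q+1\le\ell$; the $(q+1)$-st letter of an alternating word starting with $c$ is $c$ if $q$ is even and the opposite of $c$ if $q$ is odd. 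Hence $s[u]$ is determined by $j$ and the parity of $q$ alone, with no dependence on $s$, and applying this to $s_1$ and $s_2$ yields $s_1[u]=s_2[u]$.

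I expect the main obstacle to be the step $i<m$ for a genuinely arbitrary node $v$: one must rule out that $s_1$ and $s_2$ become separated by an or-move whose two branches then continue with \emph{different} classical operators $\Box_{j_1}\neq\Box_{j_2}$. In the application this does not arise, since the lemma is used only at the (nontrivial) leaves of $\parseTree$ in the pigeonhole argument (Theorem~\ref{theorem:pigeonhole}): at a leaf the atomic-move condition together with trap-avoidance forces $d=i\ell$, which is a multiple of $\ell$ and so immediately gives $i<m$ while keeping $s_1,s_2$ covered by the same leaf, making the analysis above apply verbatim.
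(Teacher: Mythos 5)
Your proof is correct and follows essentially the same route as the paper's: Lemma~\ref{lemma:alternation:splitting only at multilpes of l} forces the box-label immediately after $A_{f\cdot\ell+q}$ to be a classical $\set{j}$, Lemma~\ref{lemma:alternation: a-l-s always in language} puts $\altword{s_1}[1\dots d]$ and $\altword{s_2}[1\dots d]$ into $L_1\circ\dots\circ L_m$, and reading off the letter at position $\dcard{L_1\circ\dots\circ L_i}+1$ pins down $s_1[u]=s_2[u]$ via the parity of $q$. Your extra care about why $L_{i+1}$ exists is a genuine improvement in rigor: the paper dismisses this with a one-line parenthetical about full modal depth, whereas you correctly identify that it is the application at leaves of $\parseTree$ (where the modal depth is $i\cdot\ell$, a multiple of $\ell$, so the non-multiple $\dcard{L_1\circ\dots\circ L_i}$ cannot be terminal) that guarantees $i<m$ and keeps $s_1,s_2$ on the same branch.
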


Our last result in this section addresses the case that an operator $A_{\ell'}$ with some ``large'' $\ell'$ (i.e., larger than $\ell$) appears. In this case, the operator $A_{\ell'}$ only addresses worlds that are reachable on a path $\altword s$ that has a sequence of $\ell'$ consecutive alternations. This directly implies restrictions on the string $s$ as follows:

\begin{restatable}{lemma}{lemmalprimegreaterthanlforcesidentities}\label{lemma:l' > l forces identities}
 Let $\psi$ be a minimal $\lmAp{I}$-formula equivalent to $\paramBox{A_\ell}^ip$. Let $v\in\parseTree$ such that $v$ covers $s\in\set{1,2}^i$, let $\boxlabels{v}=L_1\dots L_m$, and let $L_i=A_{\ell'}$. Then, for $u=\frac1\ell\dcard{L_1\dots L_{i-1}}$ and all $j$ with $1\leq j<\frac{\ell'}\ell$, we have:
 
 \begin{itemize}
  \item If $\ell$ is even, then $s[u+j]=s[u+j+1]$.
  \item If $\ell$ is odd, then $s[u+j]=3-s[u+j+1]$.
 \end{itemize}
\end{restatable}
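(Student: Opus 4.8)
The plan is to read the claimed identities off the factor of the alternating word $\altword s$ that the language $L_i=A_{\ell'}$ contributes to a word forced into $L_1\circ\dots\circ L_m$ by the assumption that $v$ covers $s$. First I would apply Lemma~\ref{lemma:alternation: a-l-s always in language} to $v$: writing $L:=L_1\circ\dots\circ L_m$ and $d:=\dcard L$, we get $\altword s[1\dots d]\in L$. Since every $L_k$ is length-uniform, this membership has a \emph{unique} factorisation $\altword s[1\dots d]=w_1w_2\dots w_m$ with $w_k\in L_k$ (note $\card{w_k}=\dcard{L_k}$), namely $w_k$ is the factor of $\altword s$ on positions $\dcard{L_1\circ\dots\circ L_{k-1}}+1,\dots,\dcard{L_1\circ\dots\circ L_k}$. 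For this to make sense we need $d\le i\ell$, which follows from Lemma~\ref{lemma:minimal formula gives trap avoiding tree}: $v$ covers $s$ through a model $(\starB,w)$ with $w\neq\wtrap$, hence $w$ lies in $\baseB$, whose maximal depth is $i\ell$, and the length-uniform bookkeeping along the path to $v$ gives $d=\depth w\le i\ell$. In particular $w_i\in A_{\ell'}$, so $w_i$ is \emph{the} alternating word of length $\ell'$ occupying positions $P+1,\dots,P+\ell'$ of $\altword s$, where $P:=\dcard{L_1\circ\dots\circ L_{i-1}}$.

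Next I would align $w_i$ with the block structure of $\altword s$. Recall $\altword s=\altword{s[1]}\altword{s[2]}\dots\altword{s[i]}$ is the concatenation of the $\ell$-letter alternating blocks, the $k$-th block starting with the letter $s[k]$. Applying Lemma~\ref{lemma:alternation:splitting only at multilpes of l} to the child of the node on the root-to-$v$ path at which $\paramBox{A_{\ell'}}$ is played — its box-label sequence is exactly $L_1\dots L_{i-1}A_{\ell'}$ — shows that $P=\dcard{L_1\circ\dots\circ L_{i-1}}$ is a multiple of $\ell$, say $P=u\ell$ with $u=\frac1\ell\dcard{L_1\dots L_{i-1}}$. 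Hence $w_i$ begins at a block boundary of $\altword s$, and within $w_i$ the block $\altword{s[u+k]}$ occupies the relative positions $(k-1)\ell+1,\dots,k\ell$ (truncated if it runs past position $\ell'$).

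Finally I would use that $w_i$ alternates across each internal block boundary. Fix $j$ with $1\le j<\frac{\ell'}\ell$, so $j\ell<\ell'$; then both relative positions $j\ell$ (the last letter of block $\altword{s[u+j]}$) and $j\ell+1$ (the first letter of block $\altword{s[u+j+1]}$) lie inside $w_i$, and, $w_i$ being alternating, these two letters are distinct. The first letter of $\altword{s[u+j+1]}$ is $s[u+j+1]$; the last letter of the $\ell$-letter alternating word $\altword{s[u+j]}$, which starts with $s[u+j]$, equals $s[u+j]$ when $\ell$ is odd and $3-s[u+j]$ when $\ell$ is even. Comparing: for $\ell$ odd, $s[u+j]\neq s[u+j+1]$, i.e.\ $s[u+j]=3-s[u+j+1]$; for $\ell$ even, $3-s[u+j]\neq s[u+j+1]$, i.e.\ $s[u+j]=s[u+j+1]$, as claimed. (That $u+j+1\le i$, so this block is actually present, follows from $u\ell+\ell'=\dcard{L_1\circ\dots\circ L_i}\le d\le i\ell$.)

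I do not expect a genuine obstacle: once Lemmas~\ref{lemma:alternation: a-l-s always in language} and~\ref{lemma:alternation:splitting only at multilpes of l} are in hand, the argument is elementary positional bookkeeping inside $\altword s$. The one place to watch is the off-by-one analysis — verifying that the two block-boundary letters $j\ell$ and $j\ell+1$ both fall inside the length-$\ell'$ factor $w_i$ (this is precisely why the range is $1\le j<\ell'/\ell$ and not $\le$) and that $u+j+1\le i$ — all of which is immediate from $j\ell<\ell'\le d\le i\ell$.
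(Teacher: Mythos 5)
Your proposal is correct and follows essentially the same route as the paper's proof: apply Lemma~\ref{lemma:alternation: a-l-s always in language} to get $\altword s[1\dots d]\in L$, use Lemma~\ref{lemma:alternation:splitting only at multilpes of l} to place the $A_{\ell'}$-factor at a block boundary $u\ell$, observe that this factor is alternating across each internal boundary $(u+j)\ell,(u+j)\ell+1$, and conclude by the parity-of-$\ell$ case analysis on the first/last letters of the blocks $\altword{s[u+j]}$ and $\altword{s[u+j+1]}$. Your additional bookkeeping (uniqueness of the factorisation from length-uniformity, $d\le i\ell$ via the trap-avoidance lemma, and $u+j+1\le i$) is sound and if anything slightly more careful than the paper's version.
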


\subsection{Applications for Expressiveness}\label{section:alternation expressiveness}

We now obtain the expressiveness part of Theorem~\ref{theorem:arbitrary step main result}: If $\ell\notin I$, then the logic $\lmA I$ cannot express the formula $\paramBox{A_\ell}p$. (Recall that, unlike $\lmAp I$, the logic $\lmA I$ does not contain the standard modal operators $\Box_1$ and $\Box_2$.) As discussed earlier, the result  follows from Lemma~\ref{lemma:alternation:splitting only at multilpes of l} with a syntactic argument.

\begin{restatable}{theorem}{theoremalternationexpressiveness}\label{theorem:alternation expressiveness}
 Lei $I\subseteq\mathbb N$, let $\ell\in\mathbb N$ with $\ell\notin I$. Then there is no formula $\varphi\in\lmA I$ that is equivalent to $\paramBox{A_\ell}p$.
\end{restatable}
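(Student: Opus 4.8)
The plan is to prove this by contradiction using the structural information about game trees provided by Lemma~\ref{lemma:alternation:splitting only at multilpes of l}, but transferring it from the game-tree setting to a purely syntactic statement about $\lmA I$-formulas. Suppose $\varphi\in\lmA I$ is equivalent to $\paramBox{A_\ell}p$; I may take $\varphi$ to be of minimal size. The first step is to observe that $\varphi$, evaluated on the models $\modelClassA$ and $\modelClassB$ with $i=1$, yields a closed game tree $\paramParseTree{\varphi}{\modelClassA[1]}{\modelClassB[1]}\in\mathcal T_\setSsf(\angNode{\modelClassA[1]}{\modelClassB[1]})$, where $\setSsf$ contains the operators of $\varphi$; concretely these are among the $\paramBox{A_{\ell'}}$ with $\ell'\in I$. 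Since $\lmA I\subseteq\lmAp I$, the formula $\varphi$ is also a (not necessarily minimal) $\lmAp I$-formula equivalent to $\paramBox{A_\ell}^1p$, so the conclusions of Lemmas~\ref{lemma:minimal formula gives trap avoiding tree} and \ref{lemma:alternation:splitting only at multilpes of l} apply to its parse tree — but here I should be careful: those lemmas are stated for a \emph{minimal} $\lmAp I$-formula, so I will instead want a version valid for a minimal $\lmA I$-formula, or I re-examine the proofs and note that the only property used is that each path not a prefix of a word in $(A_\ell)^i$ leads to $\wtrap$, which forces any branch leaving such a trap to be worthless; this holds for any correct formula regardless of minimality with respect to which ambient logic. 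Let me phrase the argument to only use this qualitative fact.

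The core of the argument is then the following: since $\varphi\in\lmA I$, every modal operator occurring in $\varphi$ is of the form $\paramBox{A_{\ell'}}$ with $\ell'\in I$, and since $\ell\notin I$, none of these is $\paramBox{A_\ell}$. Consider a leaf $v$ of $\paramParseTree{\varphi}{\modelClassA[1]}{\modelClassB[1]}$ whose corresponding branch covers the string $s=1\in\set{1,2}^1$, i.e.\ whose label includes the model $(\starB[1],w)$ for the base world $w$ of $\baseB[1]$ with $p$ false (this leaf exists because $\modelClassB[1]\models\neg\paramBox{A_\ell}p$ must be witnessed). Write $\boxlabels v=L_1\dots L_m$ with each $L_j$ one of the $A_{\ell'}$ appearing in $\varphi$. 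By Lemma~\ref{lemma:alternation: a-l-s always in language}, the word $\altword 1$ (of length $\ell$) must be a prefix-of-length-$d$ member of $L=L_1\circ\dots\circ L_m$ where $d=\dcard L$; but $d$ is a sum of the lengths $\ell_1,\dots,\ell_m$ of the constituent alternation languages, each of which is a value $\ell'\in I$, none equal to $\ell$. The only way an alternating word of length exactly $\ell$ arises as the length-$d$ prefix of a word in $L$ is if $d\ge\ell$ and the concatenation produces an alternating prefix; but Lemma~\ref{lemma:alternation:splitting only at multilpes of l} (applied at the node where the last such operator is introduced, after converting it to the form $L_1\dots L_{k-1}A_{\ell'}$) forces the cumulative length $\dcard{L_1\circ\dots\circ L_{k-1}}$ before any $A_{\ell'}$-operator to be a multiple of $\ell$. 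Since no $L_j$ has length a multiple of $\ell$ smaller than… — here is the delicate point — I need $\ell\notin I$ together with the multiple-of-$\ell$ constraint to derive that $m=0$, i.e.\ $v$ is reached with no modal operators at all, forcing $v$'s class to still contain the \emph{root} world of $\starB[1]$, on which $p$ is false while $p$ is forced true on the corresponding $\modelClassA[1]$-side, contradicting that $v$ is a legal (variable-labelled) leaf.

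Concretely, the cleanest route: show by induction along the branch that after any prefix $L_1\dots L_k$ of $\boxlabels v$ the cumulative length $\dcard{L_1\circ\dots\circ L_k}$ is a multiple of $\ell$ \emph{and} at most $\ell$, hence either $0$ or $\ell$; combined with $\ell\notin I$ (so no single $L_j$ has length $\ell$) and each $\ell_j\ge 1$, this forces $k=0$. The multiple-of-$\ell$ part is exactly Lemma~\ref{lemma:alternation:splitting only at multilpes of l} applied at each modal node; the ``at most $\ell$'' part is because in the models $\modelClassA[1],\modelClassB[1]$ with $i=1$ every path of length $>\ell$ not reaching $\wtrap$ must go through $w_i$ and then… actually every path of length $>\ell$ either stays inside an alternating word of length $\ell$ extended by further steps — since the base models have depth exactly $\ell$, any deeper step leaves through a false path into $\wtrap$, and by Lemma~\ref{lemma:minimal formula gives trap avoiding tree} no node's class contains $\wtrap$, so the branch would carry an empty class and not cover $s=1$. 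Therefore $m=0$, the leaf $v$ carries on the $\modelClassB[1]$-side the root of $\starB[1]$ where $p$ is false, so no atomic move closing $v$ with variable $p$ is legal, and since $p$ is the only variable in our models this is impossible — contradiction.

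The main obstacle I anticipate is making rigorous the ``at most $\ell$'' bound and its interaction with Lemma~\ref{lemma:minimal formula gives trap avoiding tree}: I must verify that in $\modelClassA[1]$ and $\modelClassB[1]$ every path whose label is a length-$d$ prefix of the concatenation of alternation languages, for $d>\ell$, necessarily passes through $\wtrap$ (because the base models have finite depth $\ell$), and then invoke the trap-avoidance lemma to conclude the branch covers nothing — this needs a careful case analysis of the false-path edges added in the construction of $\starB[1]$, in particular checking that from $w_i=w_1$ in $\baseB[1]$ the only outgoing edges are the added false ones leading to $\wtrap$. A secondary concern is the minimality hypothesis in the cited lemmas: I will either restate them for minimal $\lmA I$-formulas (the proofs go through verbatim, as minimality is only used to exclude trap-visiting branches, and such branches can be pruned from any $\lmA I$-formula as well as any $\lmAp I$-formula), or explicitly note that the trap-avoidance argument is purely about correctness of the formula and does not need minimality with respect to a particular operator set.
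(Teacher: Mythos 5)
Your proposal is correct and follows essentially the same route as the paper: play the game on the $i=1$ models, observe that a branch covering $s=1$ must consist solely of operators $\paramBox{A_{\ell'}}$ with $\ell'\in I$, and use Lemma~\ref{lemma:alternation:splitting only at multilpes of l} (plus trap avoidance) to show no such branch can reach the unique $p$-distinguishing worlds at depth $\ell$, since $\ell\notin I$. Your worry about the minimality hypothesis in the cited lemmas being stated for $\lmAp I$ rather than $\lmA I$ is legitimate and is in fact glossed over in the paper's own proof; your proposed fix (the lemma proofs only use positivity of $p$ and trap avoidance, which hold for a minimal formula of either logic) is the right one.
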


\subsection{Applications for Succinctness}\label{section:alternation succinctness}

We now show that if $\ell\notin I$, then in every mimimal $\lmAp I$-formula $\psi$ equivalent to $\paramBox{A_\ell}^ip$ for an even number $i$, each branch can cover at most $2^{\frac i2}$ models. This upper bound is tight, since the formula $\paramBox{A_{2\ell}}^{\frac i2}p$ covers $2^{\frac i2}$ strings, namely each  $s\in\set{11,22}^{\frac i2}$.

\begin{restatable}{lemma}{lemmarestrictiononsetsbyonebranch}
\label{lemma:restriction on sets by one branch}
 Let $\psi$ be a minimal $\lmAp{I}$-formula equivalent to $\paramBox{A_\ell}^ip$, where $i$ is even and $\ell\notin I$. Let $v$ be a leaf of $\parseTree$, and let $S\subseteq\set{1,2}^i$ be the set of strings $s$ such that $v$ covers $s$. Then $\card S\leq 2^{\frac i2}$. 
\end{restatable}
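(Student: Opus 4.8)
### Proof Plan for Lemma~\ref{lemma:restriction on sets by one branch}

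The plan is to fix a leaf $v$ of $\parseTree$ and analyze the sequence $\boxlabels{v} = L_1 \dots L_m$ of successor-selection languages appearing on the branch from the root to $v$. Since $\psi$ is an $\lmAp{I}$-formula, each $L_k$ is one of the length-uniform languages $A_{\ell'}$ with $\ell' \in I$, or one of the singletons $\set{1}$, $\set{2}$. By Lemma~\ref{lemma:alternation: a-l-s always in language}, every string $s \in S$ satisfies $\altword{s}[1 \dots d] \in L_1 \circ \dots \circ L_m$, where $d = \dcard{L_1 \circ \dots \circ L_m}$; combined with the fact that $v$ covers some $s$ only if $v$ reaches $w_i$ in $\starB$ (so $d \le i\ell$, and in fact the branch must eventually ``reach depth $i\ell$'' to witness $\neg p$), this means the branch essentially traces out a prefix of $\altword{s}$. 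The idea is to count how many distinct $s \in \set{1,2}^i$ can have their associated alternating word $\altword{s}$ be compatible with a single such branch, and to show each modal step that is \emph{not} an application of $\paramBox{A_\ell}$ (or a power thereof aligned correctly) forces an identity among the coordinates of $s$.

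First I would establish that along the branch, by Lemma~\ref{lemma:alternation:splitting only at multilpes of l}, every language $L_k$ of the form $A_{\ell'}$ begins at a modal depth $\dcard{L_1 \circ \dots \circ L_{k-1}}$ that is a multiple of $\ell$; write this depth as $u_k \cdot \ell$. I would then partition the index set $\set{1,\dots,i}$ of coordinates of $s$ into the ``blocks'' of length $\ell$ corresponding to the $i$ alternating subwords of $\altword{s}$, and track which coordinates of $s$ are pinned down (forced to agree with another coordinate, or forced by the structure) as we move along the branch. The three structural lemmas do exactly this bookkeeping:
\begin{itemize}
 \item A step $L_k = \set{j}$ (a classical $\Box_j$): by Lemma~\ref{lemma:alternation: a-l-s always in language} the next symbol of $\altword{s}$ must be $j$; since this symbol sits at a position inside some block of the alternating word, and the block's first symbol determines the whole block, this forces the value of the corresponding coordinate $s[u+1]$ (when the step lands at a block boundary, i.e.\ right after an $A_{\ell'}$-step with $\ell' $ not a multiple of $\ell$, this is precisely the content of Lemma~\ref{lemma:l' not multiple of l forces constant}).
 \item A step $L_k = A_{\ell'}$ with $\ell' = f\ell + q$, $1 \le q < \ell$: Lemma~\ref{lemma:l' not multiple of l forces constant} forces an equality $s_1[u] = s_2[u]$ for any two covered strings, i.e.\ the coordinate at index $u = u_k + f + 1$ is constant across $S$.
 \item A step $L_k = A_{\ell'}$ with $\ell' \ge \ell$ a multiple of $\ell$, say $\ell' = r\ell$ with $r \ge 2$: Lemma~\ref{lemma:l' > l forces identities} forces, for $1 \le j < r$, the coordinates $s[u_k + j]$ and $s[u_k + j + 1]$ to be related (equal if $\ell$ even, complementary if $\ell$ odd); in either case, $r-1$ of the coordinates in the range $u_k+1,\dots,u_k+r$ are determined by the remaining one. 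So a block of $r$ consecutive coordinates contributes only one free bit rather than $r$.
\end{itemize}
The key point is that $\ell \notin I$, so the ``cheap'' operator $\paramBox{A_\ell}$ — which would let a single branch cover a whole block of coordinates freely — is simply unavailable. Every operator the branch \emph{can} use covers either a single block boundary step (hence at most one coordinate of $s$, but at a ``cost'') or a long $A_{r\ell}$ jump that collapses $r$ coordinates to one degree of freedom.

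The counting then proceeds as follows. Since $d \le i\ell$ and the branch must produce a prefix of $\altword{s}$ of length $d$ that is long enough to witness $\neg p$ in each covered $\starB$ — which, by the trap construction, forces $d = i\ell$, i.e.\ the branch reaches the world $w_i$ — the $i$ coordinates of $s$ are partitioned by the modal steps of the branch into consecutive groups, where each group of size $r$ (coming from an $A_{r\ell}$-step, with $r=1$ for a classical step at a block boundary or an $A_\ell$-free situation) contributes at most one free bit, except that steps of type $A_{f\ell+q}$ with $q \ne \ell$ and classical $\Box_j$-steps actively \emph{fix} a coordinate (zero free bits for that coordinate). Hence if the branch partitions $\set{1,\dots,i}$ into groups of sizes $r_1,\dots,r_t$ with $\sum r_a = i$, the number of strings covered is at most $2^t$, and since each $r_a \ge 2$ (because $\ell \notin I$ rules out the only mechanism that would give a group of size exactly one with a free bit, while a classical step pins its coordinate rather than freeing it), we get $t \le i/2$, so $\card S \le 2^{t} \le 2^{i/2}$. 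The remaining care is in verifying that there is no way to ``mix'' mechanisms to get more than one free bit per two coordinates — e.g.\ that a classical $\Box_j$-step really does cost a fixed coordinate and cannot be amortized.

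The main obstacle I expect is the bookkeeping at block boundaries and the interaction between the different step types: one must argue carefully that the modal steps of the branch induce a genuine partition of the coordinate indices $\set{1,\dots,i}$ (using Lemma~\ref{lemma:alternation:splitting only at multilpes of l} to know $A_{\ell'}$-steps align to block starts), and that along this partition every group of coordinates is either ``pinned'' (contributing a factor $1$ to $\card S$) or collapsed to a single free bit over a group of $\ge 2$ coordinates (contributing a factor $2$ over $\ge 2$ coordinates), with no group of size $1$ contributing a free factor $2$. The case $\ell$ odd versus $\ell$ even only changes whether the forced identities are equalities or complementations, which does not affect the counting, but one must handle both uniformly. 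Once the partition structure is in place, the bound $\card S \le 2^{i/2}$ is immediate, and the tightness remark follows by inspecting $\paramBox{A_{2\ell}}^{i/2} p$, all of whose branches cover exactly the $2^{i/2}$ strings $s \in \set{11,22}^{i/2}$.
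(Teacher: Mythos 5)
Your overall strategy is the same as the paper's: fix the branch to a leaf, use Lemma~\ref{lemma:alternation:splitting only at multilpes of l} to align every $\paramBox{A_{\ell'}}$-step with a multiple of $\ell$, use Lemma~\ref{lemma:alternation: a-l-s always in language} so that every covered $s$ traces the branch, and amortize the restrictions coming from Lemmas~\ref{lemma:l' not multiple of l forces constant} and~\ref{lemma:l' > l forces identities}. However, there is a genuine gap in your accounting of steps $L_k=A_{\ell'}$ with $\ell'=f\ell+q$, $f\ge1$, $1\le q<\ell$: you invoke Lemma~\ref{lemma:l' > l forces identities} only when $\ell'$ is a multiple of $\ell$, and for the non-multiple case you record only the single pinned coordinate $s[u_k+f+1]$ from Lemma~\ref{lemma:l' not multiple of l forces constant}. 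But such a step (together with the classical steps forced after it) consumes modal depth $(f+1)\ell$, i.e.\ it spans the $f+1$ coordinates $s[u_k+1],\dots,s[u_k+f+1]$, of which your constraints pin only the last one. For $f\ge2$ this group carries $f\ge2$ free bits over $f+1$ coordinates, which violates your claimed dichotomy (``fully pinned, or one free bit over at least two coordinates''). Concretely, if for instance $2\ell+1\in I$, a branch built from $i/3$ applications of $\paramBox{A_{2\ell+1}}$ would, under your bookkeeping, be allowed to cover up to $2^{2i/3}>2^{i/2}$ strings, so the stated bound does not follow from your counting.

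The repair is exactly the paper's case analysis: Lemma~\ref{lemma:l' > l forces identities} is stated for arbitrary $\ell'$, and for $\ell'=f\ell+q$ it yields one identity for each $j$ with $1\le j<\ell'/\ell$, hence $f$ identities linking $s[u_k+1],\dots,s[u_k+f+1]$. Charging these against the $(f+1)\ell$ of depth consumed (the paper uses the factor $2^{-f}$; combined with your pinned coordinate the group is in fact fully determined) restores the invariant that every $2\ell$ of modal depth costs at least a factor $\tfrac12$, and since the leaf has modal depth $i\ell$, this gives $\card S\le 2^{i/2}$. Two smaller points: your sentence ``each $r_a\ge2$, so $\card S\le 2^t\le 2^{i/2}$'' is not the right formulation (singleton groups pinned by a classical operator at a block boundary do occur and are harmless); your closing restatement --- a factor $2$ only for groups of size at least $2$ --- is the correct one. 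Also, a classical $\Box_j$ played at a block boundary pins its coordinate directly via Lemma~\ref{lemma:alternation: a-l-s always in language}, rather than being ``precisely'' Lemma~\ref{lemma:l' not multiple of l forces constant}, which concerns the classical step forced immediately after a non-aligned $A_{\ell'}$-step.
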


From Lemma~\ref{lemma:restriction on sets by one branch} and the pigeonhole technique, we directly obtain obtain the following result.

\begin{restatable}{theorem}{theoremalternationsuccinctness}\label{theorem:alternation succinctness}
Let $I\subseteq\mathbb N$ with $\ell\notin I$, and let $\psi\in\lmA I$ be equivalent to $\paramBox{A_\ell}^ip$, where $i$ is an even number. Then $\card\psi\ge2^{\frac i2}$.
\end{restatable}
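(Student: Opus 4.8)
The plan is to derive Theorem~\ref{theorem:alternation succinctness} as an almost immediate corollary of Lemma~\ref{lemma:restriction on sets by one branch} together with the pigeonhole technique (Theorem~\ref{theorem:pigeonhole}), exactly as the text announces. First I would fix $I\subseteq\mathbb N$ with $\ell\notin I$ and an even number $i$, and let $\psi\in\lmA I\subseteq\lmAp I$ be a formula equivalent to $\paramBox{A_\ell}^ip$; without loss of generality I take $\psi$ to be a \emph{minimal} such formula, since proving the lower bound $\card\psi\ge 2^{i/2}$ for the minimal one implies it for all. The two model classes to use are $\modelClassA$ and $\modelClassB$ from Section~\ref{sect:model construction}, for which we have $\modelClassA\models\paramBox{A_\ell}^ip$ and $\modelClassB\models\neg\paramBox{A_\ell}^ip$, hence $\modelClassA\models\psi$ and $\modelClassB\models\neg\psi$.

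The key observation is that $\card{\modelClassB} = 2^i$, since $\modelClassB = \set{\starB \mid s\in\set{1,2}^i}$ and the models $\starB$ are pairwise distinct (they differ in the labels $\altword{s_j}$ of their defining path — one can note that $\starB$ determines $s$ via the unique $\wtrap$-avoiding path). Now apply the pigeonhole technique with the class corresponding to $\mathbb B$: I need to check that for every nontrivial leaf $u$ of every closed game tree in $\mathcal T(\angNode{\modelClassA}{\modelClassB})$, the class of models corresponding to $\modelClassB$ has size at most $c:=2^{i/2}$. This is precisely the content of Lemma~\ref{lemma:restriction on sets by one branch}: that lemma bounds, for a leaf $v$ of $\parseTree$ (the tree corresponding to the minimal $\psi$), the set $S$ of strings $s$ that $v$ covers by $\card S\le 2^{i/2}$, and ``$v$ covers $s$'' means one of the classes labelling $v$ contains a model $(\starB,w)$. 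Since distinct strings $s$ give distinct models $\starB$, the number of \emph{root} models $\starB$ (i.e., elements of $\modelClassB$, or of the $\modelClassB$-descendants reachable in a leaf's class) represented in that class is at most $\card S \le 2^{i/2}$. Plugging $\card{\modelClassB}=2^i$ and $c=2^{i/2}$ into Theorem~\ref{theorem:pigeonhole} yields that every $\lmplus{\setSsf}$-formula equivalent to $\paramBox{A_\ell}^ip$ — in particular every $\lmA I$-formula, taking $\setSsf$ to be the successor selection functions occurring in such a formula — has size at least $\frac{2^i}{2^{i/2}} = 2^{i/2}$, which is the claim.

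One point needs a little care in matching up the statements: Theorem~\ref{theorem:pigeonhole} is phrased for \emph{all} closed game trees $T\in\mathcal T(\angNode{\modelClassA}{\modelClassB})$, whereas Lemma~\ref{lemma:restriction on sets by one branch} speaks about $\parseTree$, the specific tree corresponding to a minimal formula. But this is exactly how the pigeonhole technique is meant to be applied here: given an arbitrary $\lmA I$-formula $\varphi$ equivalent to $\paramBox{A_\ell}^ip$, we may assume $\varphi$ is minimal (else pass to a smaller equivalent one), consider its tree $\parseTree[\varphi]$, invoke Lemma~\ref{lemma:restriction on sets by one branch} for that tree, and conclude $\card\varphi\ge 2^{i/2}$; since this holds for every minimal such $\varphi$, it holds for every such $\varphi$. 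I do not expect any real obstacle: all the substantive work (controlling which operators can be used at which depths, bounding the coverage of a single branch) has been done in Lemmas~\ref{lemma:minimal formula gives trap avoiding tree}--\ref{lemma:restriction on sets by one branch}, and this theorem is the bookkeeping step that combines the branch bound $2^{i/2}$ with the total count $2^i$ via Theorem~\ref{theorem:pigeonhole}. The only thing to state explicitly is the counting identity $\card{\modelClassB}=2^i$ and the injectivity $s\mapsto\starB$.
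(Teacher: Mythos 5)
Your proposal is correct and follows essentially the same route as the paper's own proof: it fixes the classes $\modelClassA$ and $\modelClassB$ with $\card{\modelClassB}=2^i$, invokes Lemma~\ref{lemma:restriction on sets by one branch} to bound the number of strings (hence models of $\modelClassB$) covered by each leaf of the minimal formula's game tree by $2^{\frac i2}$, and concludes via the pigeonhole technique of Theorem~\ref{theorem:pigeonhole}. Your remark about reconciling the quantification over all closed game trees with the single tree $\parseTree$ of a minimal formula is exactly the (implicit) reduction the paper also makes, so no gap remains.
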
 

\section{Conclusion}\label{sect:conclusion}

We proved that the expressiveness- and succinctness relationships between modal logics can be as complex as any finite or countable partial order. In the first setting we studied logics of the form $\lmF$ for a set $\mathcal F$ of Boolean functions. Here we obtained a complete characterization of the relative expressiveness and succinctness of logics $\lmF$ and $\lmG$. It is an interesting open question to obtain a similar complete characterization for the second setting, i.e., to answer completely the question for which sets of languages $\mathcal L$ and $\mathcal K$ we have that $\lmplus{\mathcal L}$ is more succinct or more expressive than $\lmplus{\mathcal K}$.

\begin{appendix}
 
\section{Omitted Results and Definition}

\subsection{Facts about Formula Size Games}

In~\cite{FrenchVanDerHoekIlievKooi-SUCCINCTNESS-MODAL-LOGIC-AI-2013}, it was shown that if $v$ is a node of a closed game tree labelled $\node CD$, then $\mathbb C$ and $\mathbb D$ do not contain bisimilar pointed models. In particular, this implies the following:

\begin{proposition}\label{prop:closed game tree disjoint sets}
 Let $v$ be a node in a closed game tree $T$ labelled $\node CD$. Then $\mathbb C\cap\mathbb D=\emptyset$.
\end{proposition}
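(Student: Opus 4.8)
The statement to prove is Proposition~\ref{prop:closed game tree disjoint sets}: if $v$ is a node of a closed game tree $T$ labelled $\node CD$, then $\mathbb C\cap\mathbb D=\emptyset$. The plan is to derive this directly from the quoted fact (from~\cite{FrenchVanDerHoekIlievKooi-SUCCINCTNESS-MODAL-LOGIC-AI-2013}) that in a closed game tree, the two model-classes at any node contain no pair of bisimilar pointed models. Since bisimilarity is reflexive --- every pointed model is bisimilar to itself via the identity bisimulation --- a pointed model lying in both $\mathbb C$ and $\mathbb D$ would witness a bisimilar pair $(M,w)\in\mathbb C$, $(M,w)\in\mathbb D$, contradicting that fact. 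So the core of the argument is a one-line application of reflexivity of $\leftrightarrow$ to the cited lemma.

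To make this self-contained in case the cited lemma is only stated and not reproved here, I would alternatively argue semantically. The node $v$ corresponds to a subformula $v_\varphi$, and the game-tree invariant maintained through all four move types (atomic, not, or, $\paramBox\ssf$) is exactly that $\mathbb C\models v_\varphi$ and $\mathbb D\models\neg v_\varphi$ (for the class ``corresponding to $\mathbb A$'' being $\mathbb C$ etc.; the roles swap under negations but the invariant form is preserved). This invariant is established by induction on the construction of $T$: the root satisfies it by the hypothesis $\mathbb A\models\varphi$, $\mathbb B\models\neg\varphi$, and each move rule is designed precisely to propagate it to the children --- the \textbf{or} move splits $\mathbb C$ into $\mathbb C_1\cup\mathbb C_2$ with each $\mathbb C_j\models$ the corresponding disjunct, the \textbf{not} move swaps the roles, and the $\paramBox\ssf$ move picks $\mathbb D_1$ so that each $(M,w)\in\mathbb D$ has a witness successor in $\mathbb D_1$ falsifying the subformula under the box. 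Given the invariant at $v$, a model $(M,w)\in\mathbb C\cap\mathbb D$ would satisfy both $M,w\models v_\varphi$ and $M,w\models\neg v_\varphi$, which is impossible. Hence $\mathbb C\cap\mathbb D=\emptyset$.

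I do not expect a genuine obstacle here; the only thing to be careful about is bookkeeping of which class ``corresponds to'' $\mathbb A$ versus $\mathbb B$ depending on the parity of negations on the path from the root to $v$ --- but this parity does not affect the conclusion, since the invariant always has the shape ``one class models $v_\varphi$, the other models $\neg v_\varphi$,'' and no pointed model can do both. A minor point worth noting explicitly is that the proposition needs $T$ to be \emph{closed} only insofar as closedness is what makes the cited bisimilarity lemma applicable at every node; for the self-contained semantic argument, closedness is not even needed --- the disjointness holds at every node of every (partial) game tree, since the $\mathbb A/\mathbb B$-invariant is maintained from the start. I would phrase the final write-up using whichever of the two routes is shortest given what has already been proved in the appendix, most likely the one-line reflexivity-of-bisimulation argument.
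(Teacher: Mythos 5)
Your primary argument is exactly the paper's: the paper derives the proposition in one line from the cited fact of~\cite{FrenchVanDerHoekIlievKooi-SUCCINCTNESS-MODAL-LOGIC-AI-2013} that the two classes at a node of a closed game tree contain no bisimilar pair, combined with reflexivity of bisimilarity. Your alternative self-contained route via the invariant $\mathbb C\models v_\varphi$, $\mathbb D\models\neg v_\varphi$ is also sound (it is essentially the second direction of Theorem~\ref{theorem:fsg theorem}), but the paper does not take it.
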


The definition of formula-size games immediately leads to the following easy property:

\begin{restatable}{lemma}{lemmafsgpathproperty}\label{lemma:fsg path property}
 Let $T\in\mathcal T{\node AB}$, and let $v\in T$ be a node labelled with $\node CD$, and let $\boxlabels v=\ssf_1\dots\ssf_m$. Let $(M,w)\in\mathbb C$, and let $(M',w')\in\mathbb D$. Then there exist worlds $w_0,w_1,\dots,w_m=w\in M$ and $w_0',w_1',\dots,w_m'=w'\in M'$ such that for each $j\in\set{1,\dots,m}$, we have that $w_{j}\in\ssf_j(M,w_{j-1})$ and $w_{j}'\in\ssf_j(M',w_{j-1}')$ and
 \begin{enumerate}
  \item $(M,w_0)$ is an element of the class corresponding to $\mathbb A$,
  \item $(M',w_0')$ is an element of the class corresponding to $\mathbb B$.
 \end{enumerate}
\end{restatable}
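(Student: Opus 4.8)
The plan is to prove the statement by induction on the length $m$ of the sequence $\boxlabels v$, tracking the path from the root of $T$ to $v$ and extracting the required chains of worlds from the way the game-tree moves act on the model classes. For the base case $m=0$, the node $v$ has no $\paramBox{\ssf}$-operator on the path to it, so $v$ and the root are connected only by \emph{not}-moves and \emph{or}-moves. Neither of these changes the underlying worlds: an \emph{or}-move replaces a class $\mathbb C$ by a subset, and a \emph{not}-move swaps the two classes at a node. Hence $(M,w)\in\mathbb C$ is already present in the class corresponding to $\mathbb A$ at the root (namely $\mathbb A$ itself), and similarly $(M',w')\in\mathbb D$ lies in $\mathbb B$; taking $w_0=w$ and $w_0'=w'$ settles this case. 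The parity bookkeeping that defines ``corresponding to $\mathbb A$/$\mathbb B$'' is exactly designed so that \emph{not}-moves preserve membership in the correct class.

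For the inductive step, I would locate the \emph{last} $\paramBox{\ssf_m}$-move on the path from the root to $v$, occurring at some node $v'$ with successor $v''$, so that no further $\paramBox{}$-move occurs between $v''$ and $v$. By the structure of a $\paramBox{\ssf}$-move, if $v'$ is labelled $\langle\mathbb C'\circ\mathbb D'\rangle$ with $\mathbb C'$ corresponding to $\mathbb A$, then $v''$ is labelled $\langle\mathbb C''\circ\mathbb D''\rangle$ where $\mathbb C''=\set{(M,u')\mid (M,u)\in\mathbb C',\ u'\in\ssf_m(M,u)}$ and $\mathbb D''$ is chosen so that every $(M',u)\in\mathbb D'$ has some $(M',u')\in\mathbb D''$ with $u'\in\ssf_m(M',u)$. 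Since between $v''$ and $v$ there are only \emph{or}- and \emph{not}-moves, which (as in the base case) only pass to subsets and swap sides consistently with parity, the pointed model $(M,w)\in\mathbb C$ is already in the class of $v''$ corresponding to $\mathbb A$, i.e.\ in $\mathbb C''$; hence there is $(M,w_{m-1})$ in the class of $v'$ corresponding to $\mathbb A$ with $w=w_m\in\ssf_m(M,w_{m-1})$. Dually, $(M',w')\in\mathbb D$ lies in $\mathbb D''$, and by the defining property of the $\paramBox{\ssf_m}$-move there is $(M',w_{m-1}')$ in the class of $v'$ corresponding to $\mathbb B$ with $w'=w_m'\in\ssf_m(M',w_{m-1}')$. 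Here one must be slightly careful: the move's condition guarantees a \emph{predecessor} in $\mathbb D'$ for the chosen $(M',w_m')$, which is what we need, since we are building the chain backwards.

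Now apply the induction hypothesis to the node $v'$, whose $\boxlabels{v'}=\ssf_1\dots\ssf_{m-1}$, with the pointed models $(M,w_{m-1})$ (in the class of $v'$ corresponding to $\mathbb A$) and $(M',w_{m-1}')$ (in the class corresponding to $\mathbb B$): this yields worlds $w_0,\dots,w_{m-1}\in M$ and $w_0',\dots,w_{m-1}'\in M'$ with $w_j\in\ssf_j(M,w_{j-1})$ and $w_j'\in\ssf_j(M',w_{j-1}')$ for $j\le m-1$, and with $(M,w_0)$ in the class of the root corresponding to $\mathbb A$ and $(M',w_0')$ in the class corresponding to $\mathbb B$. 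Appending $w_m=w$ and $w_m'=w'$ gives the desired chains. The only genuinely delicate point — and the main thing to get right — is the parity/side bookkeeping: one has to check that ``corresponding to $\mathbb A$'' is preserved along \emph{or}-moves (obvious, subsets) and flipped in step with the negation count along \emph{not}-moves, and that the $\paramBox{\ssf}$-move does not flip sides, so that the induction hypothesis can be invoked on the correctly labelled classes at $v'$. This is routine once stated carefully, so the proof is essentially an unwinding of the definitions organized by induction on $m$.
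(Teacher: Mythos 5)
Your proof is correct and is essentially the same argument as the paper's: both are an induction that unwinds the three move types, with the $\vee$- and $\neg$-moves only shrinking or swapping classes and the $\paramBox{\ssf}$-move supplying the next link $w_{m-1}, w_{m-1}'$ in the chains. The only (cosmetic) difference is that you induct on the number $m$ of $\paramBox{}$-moves and absorb the intervening $\vee$/$\neg$-moves in one step, whereas the paper inducts on the length of the path from the root and treats each move individually.
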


\begin{proof}
 Let $r$ be the root of $T$. We show the claim by induction on the length of the path from $r$ to $v$. In the base case, we have that $r=v$ and therefore $\node AB=\node CD$, and the path contains an even number of negations. Hence the claim follows trivially.
 
 Now assume that the claim is true for the unique predecessor nove $v_0$ of $v$ in $T$, where $v_0$ is labelled with $\node{C_0}{D_0}$.
 
 We make a case distinction depending on the label of $v_0$. Note that since $v_0$ is not a leaf, $v_0$ cannot be labelled with an atomic proposition.
 
 \begin{itemize}
  \item If $v_0$ is labelled with $\neg$, then in particular, $\boxlabels{v_0}=\boxlabels{v}$, $\mathbb C_0=\mathbb D$, and $\mathbb D_0=\mathbb C$. The claim follows trivially by induction.
  \item If $v_0$ is labelled with $\vee$, then $\mathbb C\subseteq\mathbb C_0$ and $\mathbb D=\mathbb D_0$. 
  \item Finally, let $v_0$ be labelled with $\paramBox{\ssf_m}$. It then follows that $\boxlabels{v_0}=\ssf_1\dots\ssf_{m-1}$. 
  
   By definition of the game, we know that for each $(M,w)\in\mathbb C$, there is a world $w_{m-1}\in M$ with $(M,w_{m-1})\in\mathbb C_0$ and $w\in\ssf_m(M,w_{m-1})$, and for each $(M',w')\in\mathbb D$, there is a world $w'_{m-1}\in M'$ with $(M',w'_{m-1})\in\mathbb D_0$ and $w'\in\ssf_m(M',w'_{m-1})$. Due to induction, worlds $w_0,w_1,\dots,w_{m-2}\in M$ and $w'_0,w'_1,\dots,w'_{m-2}\in M'$ can be chosen with the required properties. This concludes the proof.
 \end{itemize}
\end{proof}

\subsection{Bisimulations}\label{sect:appendix:bisimulations}

\begin{definition}
 Let $M_1=(W^1,R^1_1,\dots,R^1_n,\Pi^1)$ and $M_2=(W^2,R^2_1,\dots,R^2_n,\Pi^2)$ be Kripke models, let $\setSsf$ be a set of successor selection functions. A relation $Z\subseteq W^1\times W^2$ is an \emph{\setSsf-bisimulation between $M_1$ and $M_2$} if for all $(w_1,w_2)\in Z$, the following holds:
 \begin{itemize}
  \item for all $p\in P$, we have that $w_1\in\Pi^1(p)$ if and only if $w_2\in\Pi^2(p)$,
  \item (forward condition) for all $\ssf\in\setSsf$ and all $w_1'\in\ssf(M_1,w_1)$, there is some $w_2'\in\ssf(M_2,w_2)$ such that $(w_1',w_2')\in Z$,
  \item (back condition) for all $\ssf\in\setSsf$ and all $w_2'\in\ssf(M_2,w_2)$, there is some $w_1'\in\ssf(M_1,w_1)$ such that $(w_1',w_2')\in Z$.
 \end{itemize}
\end{definition}

The following is easy to see:

\begin{proposition}\label{prop:invariance under f1 ... fk bisimulation}
 Let $M_1$ and $M_2$ be Kripke models, let $\setSsf$ be a set of successor selection functions, let $Z$ be a $\setSsf$-bisimulation between $M_1$ and $M_2$, and let $(w^1,w^2)\in Z$. Then for each formula $\varphi$ of $\lmplus{\setSsf}$, we have that $M_1,w^1\models\varphi$ if and only if $M_2,w^2\models\varphi$.
\end{proposition}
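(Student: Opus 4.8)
The plan is to prove this by structural induction on the formula $\varphi$, with the inductive hypothesis stated uniformly over \emph{all} pairs in $Z$: namely, that for every $(v^1,v^2)\in Z$ we have $M_1,v^1\models\psi$ iff $M_2,v^2\models\psi$. Stating the hypothesis in this quantified form is the only point that requires a little care, since the modal case needs to invoke the hypothesis at successor worlds rather than at the fixed pair $(w^1,w^2)$.

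First I would handle the base case $\varphi=p$: here $M_1,w^1\models p$ iff $w^1\in\Pi^1(p)$, which by the atomic clause in the definition of an $\setSsf$-bisimulation is equivalent to $w^2\in\Pi^2(p)$, i.e.\ to $M_2,w^2\models p$. The Boolean cases $\varphi=\neg\psi$ and $\varphi=\psi_1\vee\psi_2$ are immediate from the induction hypothesis and the semantics of $\neg$ and $\vee$, applied at the same pair $(w^1,w^2)$.

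The only substantive case is $\varphi=\paramBox{\ssf}\psi$ for some $\ssf\in\setSsf$. Suppose $M_1,w^1\models\paramBox{\ssf}\psi$; to show $M_2,w^2\models\paramBox{\ssf}\psi$, take an arbitrary $w_2'\in\ssf(M_2,w^2)$. By the back condition of the bisimulation there is some $w_1'\in\ssf(M_1,w^1)$ with $(w_1',w_2')\in Z$; from the assumption we get $M_1,w_1'\models\psi$, and the induction hypothesis applied to the pair $(w_1',w_2')\in Z$ yields $M_2,w_2'\models\psi$. As $w_2'$ was arbitrary, $M_2,w^2\models\paramBox{\ssf}\psi$. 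The reverse implication is symmetric, using the forward condition instead of the back condition.

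I do not expect any real obstacle here; the statement is the standard Hennessy--Milner-style invariance lemma transported to the generalized operators $\paramBox{\ssf}$, and the whole argument goes through because the forward and back conditions of $\setSsf$-bisimulation are tailored exactly to the semantics of $\paramBox{\ssf}$. The one thing to be deliberate about is phrasing the induction so that the modal step is allowed to apply the hypothesis at $Z$-related successor pairs.
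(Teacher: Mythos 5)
Your proof is correct and follows essentially the same route as the paper's: structural induction on the formula, with the atomic clause handling the base case, the Boolean cases following directly, and the modal case using the back condition to transfer a successor of $w^2$ to a $Z$-related successor of $w^1$ before applying the induction hypothesis at that pair. Your explicit remark that the induction hypothesis must be quantified over all pairs in $Z$ is a point the paper leaves implicit, but it is the same argument.
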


\begin{proof}
 As usual by induction on the formula. The base case where $\varphi$ is a propositional variable is trivial, the cases where $\varphi$ is a disjunction or a negation follow by induction. Hence let $\varphi=\paramBox{\ssf}\psi$, and let $M_1,w_1\models\varphi$. To show that $M_2,w_2\models\varphi$, let $w_2'\in\ssf(M_2,w_2)$. Since $(w_1,w_2)\in Z$ and $Z$ is a $\setSsf$-bisimulation, there is some world $w_1'\in\ssf(M_1,w_1)$ with $(w_1',w_2')\in Z$. Since $M_1,w^1\models\varphi$, it follows that $M_1,w_1'\models\psi$, and hence due to induction we have that $M_2,w_2'\models\psi$. Therefore, it follows that $M_2,w_2'\models\varphi$. The converse is symmetric.
\end{proof}
 
\section{Proofs of Results in Main Paper}

\subsection{Extensions of Formula Size Games}

\subsubsection{Proof of Theorem~\ref{theorem:fsg theorem}}\label{sect:proof of prop:fsg theorem}

 The proof of Theorem~\ref{theorem:fsg theorem} is an adaptation of the corresponding result in~\cite{FrenchVanDerHoekIlievKooi-SUCCINCTNESS-MODAL-LOGIC-AI-2013}, the extension to arbitrary modal operators is  straight-forward.
 
 \theoremfsgtheorem*
 
\begin{proof}
 First assume that there is a formula $\varphi$ of size $k$ such that $\mathbb A\models\varphi$ and $\mathbb B\models\neg\varphi$. We prove by induction on the construction of $\varphi$ that Spoiler can win the FSG starting with $\langle\mathbb A\circ\mathbb B\rangle$ in $k$ moves by using the strategy encoded in the formula $\varphi$.
 
 If $\varphi$ is a propositional variable $p$, then clearly Spoiler can win by playing the move $p$. 
 
 If $\varphi=\neg\psi$ for a modal formula $\psi$, then Spoiler plays the not-move, which results in a node labelled $\node BA$. Since $\mathbb A\models\varphi$ and $\mathbb B\models\neg\varphi$, it follows that $\mathbb A\models\neg\psi$ and $\mathbb B\models\psi$. Hence due to induction, Spoiler can win the game with starting node $\node BA$ for the formula $\psi$ with $\card\psi$ nodes, and thus wins the game for the formula $\varphi$ with $\card\varphi=\card\psi+1$ nodes.
 
 If $\varphi=\psi\vee\chi$, then Spoiler chooses sets $\mathbb A_1$ and $\mathbb A_2$ with $\mathbb A_1\cup\mathbb A_2=\mathbb A$ and $\mathbb A_1\models\psi$ and $\mathbb A_2\models\chi$. Clearly, $\mathbb B\models\neg\psi$ and $\mathbb B\models\neg\chi$. Therefore, by induction Spoiler can win the game for $\psi$ on $\node{A_1}B$ in $\card\psi$ moves, and can win the game for $\chi$ on $\node{A_2}B$ in $\card\chi$ moves. Therefore, Spoiler can win the game for $\varphi$ on $\node AB$ with $\card\psi+\card\chi+1=\card\varphi$ nodes as required.
 
 If $\varphi=\paramBox{\ssf}\psi$, then spoiler plays an $\ssf$ move as follows: Since $\mathbb B\models\neg\paramBox{\ssf}\psi$, Spoiler can choose a set $\mathbb B_1$ such that for each $(M,w)\in\mathbb B$ there is some $(M,w')$ with $w'\in\ssf(M,w)$ such that $M,w'\models\neg\psi$; it then follows that $\mathbb B_1\models\neg\psi$. On the other hand, since $\mathbb A\models\paramBox{\ssf}\psi$, for the set $\mathbb A_1=\set{(M,w')\ \vert\ (M,w)\in\mathbb A, w'\in\ssf(M,w)}$, we have that $\mathbb A_1\models\psi$. By induction, we therefore know that Spoiler can win the game on the mode $\node{A_1}{B_1}$ in $\card\psi$ moves, and hence can win the game on $\node AB$ for $\varphi$ in $\card\psi+1=\card\varphi$ moves as required.
 
 For the converse, assume that Spoiler can win the FSG($\setSsf$) starting with node $\node AB$ in $k$ moves; let $T$ be a corresponding tree with size $k$. Clearly, when we only consider the labels $p$, $\neg$, $\vee$ and $\ssf_i$, the tree $T$ represents a formula $\varphi$ from $\lmplus{\setSsf}$ with $\card\varphi=k$. By induction, we prove that for each node $v$ labelled with $\node AB$ in $T$, for the formula $\varphi_v$ represented by the subtree corresponding to $v$, we have that $\mathbb A\models\varphi_v$ and $\mathbb B\models\neg\varphi_v$.
 
 If $v$ is a leaf, then $v$ is labelled with a propositional variable $p$. Due to the winning condition, we know that $v$ is closed, hence $\mathbb A\models p$ and $\mathbb B\models\neg p$. Now assume that $v$ is not a leaf, then $v$ is labelled with $\neg$, $\vee$, or some $\ssf_i$ for $i\in\set{1,\dots,k}$.
 
 First assume that $v$ is labelled with $\neg$. Then $v$ has a single successor node $u$ labelled with $\node BA$, and $\varphi_v=\neg\varphi_{u}$. By induction, we know that $\mathbb B\models\varphi_{u}$, and $\mathbb A\models\neg\varphi_{u}$. Hence $\mathbb A\models\varphi_v$ and $\mathbb B\models\neg\varphi_v$ as required.
 
 Now assume that $v$ is labelled with $\vee$, then $v$ has two successor nodes $u_1$ and $u_2$ with $\varphi_v=\varphi_{u_1}\vee\varphi_{u_2}$ labelled with $\node{A_1}B$ and $\node{A_2}B$ with $\mathbb A_1\cup\mathbb A_2=\mathbb A$. By induction, we know that $\mathbb A_1\models\varphi_{u_1}$, $\mathbb A_2\models\varphi_{u_2}$, $\mathbb B\models\neg\varphi_{u_1}$ and $\mathbb B\models\neg\varphi_{u_2}$. Therefore, each pointed model $(M,w)\in\mathbb A$ satisfies $\varphi_{u_1}$ or $\varphi_{u_2}$, it follows that $\mathbb A\models\varphi_{u_1}\vee\varphi_{u_2}=\varphi$, and each pointed model $(M,w)\in\mathbb B$ satisfies $\neg\varphi_{u_1}$ and $\neg\varphi_{u_2}$, hence $\mathbb B\models\neg(\varphi_{u_1}\vee\varphi_{u_2})=\varphi$ as required.
 
 Finally assume that $v$ is labelled with $\ssf$. Then $v$ has a unique successor $u$, and $\varphi_v=\ssf\varphi_u$ for some $i\in\set{1,\dots,k}$, and $u$ is labelled with $\node{A_1}{B_1}$, where $\mathbb A_1=\set{(M,w')\ \vert\ (M,w)\in\mathbb A, w'\in\ssf_i(M,w)}$, and for each $(M,w)\in\mathbb B$, there is a pointed model $(M,w')\in\mathbb B_1$ with $w'\in\ssf(M,w)$. Due to induction, we know that $\mathbb A_1\models\varphi_u$, and $\mathbb B_1\models\neg\varphi_u$. By the choice of $\mathbb A_1$ and $\mathbb B_1$, it therefore follows that $\mathbb A\models\ssf\varphi_u=\varphi_v$, and $\mathbb B\models\neg\ssf\varphi_u=\neg\varphi_v$ as required.
\end{proof}

\subsubsection{Proof of Theorem~\ref{theorem:pigeonhole} (pigeonhole principle)}

\theorempigeonhole*

\begin{proof}
 Assume that this is not the case, and let there be a formula $\psi$ that is equivalent to $\varphi$. In particular, then $\mathbb A\models\psi$ and $\mathbb B\models\neg\psi$. Hence let $T^\psi_(\node AB)$ be the closed game tree $T\in\mathcal T(\node AB)$ that corresponds to playing the strategy $\psi$. Clearly, $T^\psi_(\node AB)$ is isomorphis to $\psi$.
 
 Then, by the proof of Theorem~\ref{theorem:fsg theorem}, each node of $\psi$ corresponds to a lead in $T^\psi_(\node AB)$, and each pointed model from $\mathbb A$ ($\mathbb B$) appeads in at least one leaf of $T^\psi_(\node AB)$. Since there are only $c$ pointed models in each leaf of $T^\psi_(\node AB)$, the tree $T^\psi_(\node AB)$ has at least $\frac{\card{\mathbb A}}c$ ($\frac{\card{\mathbb B}}c$). Therefore, $\varphi$ has at least this many leafs as well, and in particular, the size of $\varphi$ is at least $\frac{\card{\mathbb A}}c$ ($\frac{\card{\mathbb B}}c$) as claimed.
\end{proof}

\subsection{Proofs of Results in Section~\ref{proofs:single step operators} (Single Step Operators)}

\subsubsection{Proof of Theorem~\ref{theorem:1 step expressiveness -- or equivalence}}\label{sect:proof theorem:1 step expressiveness -- or equivalence}

\theoremonestepexpressivenessorequivalence*

\begin{proof}
 Since there are only a finite number of Boolean functions of each arity, let $\mathcal F=\set{f_1,\dots,f_k}$, and let $\mathcal G=\set{g_1,\dots,g_l}$. The direction \ref{theorem:1 step expressiveness -- or equivalence:or} to \ref{theorem:1 step expressiveness -- or equivalence:expressiveness} is trivial: If $g\equiv f_{i_1}\vee\dots\vee f_{i_t}$, then $M,w\models\paramBox{g}\varphi$ if and only if $M,w'\models\varphi$ for all worlds $w'$ such that $w'$ is an $I$-successor of $w$ for some $I$ with $f_{i_m}(I)=1$ for one of the $i_m$. Therefore, $\paramBox{g}\varphi$ is equivalent to $\paramBox{f_{i_1}}\varphi\wedge\dots\wedge\paramBox{f_{i_t}}\varphi$. 
 
 It remains to show that if one of the $g_i$s is not of this form, then there is a formula $\varphi$ of $\lmG$ that cannot be expressed in $\lmF$. A standard technique to prove such results are bisimulations, which we adapt to logics of this form (see details in Appendix~\ref{sect:appendix:bisimulations}).

 Hence assume indirectly that $\paramBox{g}p$ can be expressed in $\lmF$ and $g$ is not of the form $\bigvee_{i\in I}f_i$ for any set $I\subseteq\set{1,\dots,k}$. Let $S\subseteq\set{1,\dots,k}$ be the (possibly empty) set of indices $i$ such that $f_i$ implies $g$ (i.e., if $f_i(I)=1$, then $g(I)=1$). By choice of $S$, it follows that for each $i\notin S$, there is some assignment $I_i$ with $f_i(I_i)=1$ and $g(I_i)=0$. It also follows that $\bigvee_{i\in S}f_i$ implies $g$. Since we assumed that $g$ is not of the form $\bigvee_{i\in S}f_i$ for any $S$, it then follows that $g$ does not imply $\bigvee_{i\in S}f_i$. Therefore, there is an assignment $I_g$ such that $g(I_g)=1$, and for each $i\in S$, we have that $f_i(I_g)=0$. 

Now consider the following models $M_1$ on the left-hand side and $M_2$ on the right-hand side:
 
 \scalebox{0.5}{
\begin{tikzpicture}
 \node[draw,circle] at (0,0) (w1) {$w_1$};
 
 \node[draw,circle] at (-3,-3) (I1pos) {\phantom{$w_1$}};
 \node              at (-3,-3.75)       {$p$};
 
 \node[draw,circle] at (-2,-3) (I1neg) {\phantom{$w_1$}};
 \node              at (-2,-3.75)       {$\overline p$};
 
 \node              at (0,-3) {$\ldots\ldots$}; 

 \node[draw,circle] at (2,-3) (Ikpos) {\phantom{$w_1$}};
 \node              at (2,-3.75)       {$p$};
 
 \node[draw,circle] at (3,-3) (Ikneg) {\phantom{$w_1$}};
 \node              at (3,-3.75)       {$\overline p$};
 
 \node[draw,circle] at (5.5,-3) (Ig)    {$w^g_1$};
 \node              at (5.5,-3.75)       {$p$};

 \draw[->] (w1) edge node [auto=right] {$I_1$} (I1pos);
 \draw[->] (w1) edge node [auto=left] {$I_1$} (I1neg);

 \draw[->] (w1) edge node [auto=right] {$I_k$} (Ikpos);
 \draw[->] (w1) edge node [auto=left] {$I_k$} (Ikneg);
 
 \draw[->] (w1) edge node [auto=left] {$I_g$} (Ig);
\end{tikzpicture}}\ \ \ 
\scalebox{0.5}{
\begin{tikzpicture}
 \node[draw,circle] at (0,0) (w1) {$w_2$};
 
 \node[draw,circle] at (-3,-3) (I1pos) {\phantom{$w_1$}};
 \node              at (-3,-3.75)       {$p$};
 
 \node[draw,circle] at (-2,-3) (I1neg) {\phantom{$w_1$}};
 \node              at (-2,-3.75)       {$\overline p$};
 
 \node              at (0,-3) {$\ldots\ldots$}; 

 \node[draw,circle] at (2,-3) (Ikpos) {\phantom{$w_1$}};
 \node              at (2,-3.75)       {$p$};
 
 \node[draw,circle] at (3,-3) (Ikneg) {\phantom{$w_1$}};
 \node              at (3,-3.75)       {$\overline p$};
 
 \node[draw,circle] at (5.5,-3) (Ig)    {$w^g_2$};
 \node              at (5.5,-3.75)       {$\overline p$};

 \draw[->] (w1) edge node [auto=right] {$I_1$} (I1pos);
 \draw[->] (w1) edge node [auto=left] {$I_1$} (I1neg);

 \draw[->] (w1) edge node [auto=right] {$I_k$} (Ikpos);
 \draw[->] (w1) edge node [auto=left] {$I_k$} (Ikneg);
 
 \draw[->] (w1) edge node [auto=left] {$I_g$} (Ig);
\end{tikzpicture}}

Here, an arrow labelled $I_i$ between worlds $w$ and $w'$ represents that $(w,w')\in R_j$ for exactly those $j$ with $I_i(r_j)=1$. It is obvious that $M_1,w_1\models\paramBox{g}p$: Since $g(I_i)=0$ for all $i\in\set{1,\dots,k}$, the world $w^g_1$ is the only $g$-successor of $w_1$ in $M_1$, and by definition, $M_1,w^g_1\models p$. On the other hand, $M_2,w_2\models\neg\paramBox{g}p$, since the only $g$-successor of $w_2$ in $M_2$ is $w^g_2$, and by definition, $M_2,w^g_2\models\overline p$.

We define the relation $Z$ as follows: $Z$ contains the pair $(w_1,w_2)$ and all pairs of unnamed worlds in which $p$ has the same value. We show that $Z$ is an $\set{f_1,\dots,f_k}$-bisimulation. For the forward condition, let $w_1'$ be an $f_i$-successor of $w_1$. We distinguish two cases:

\begin{itemize}
 \item if $w_1'$ is \emph{not} the world $w^g_1$, then we can simply choose $w_2'$ to be the corresponding world $w_2'$ (i.e., the one in the same position in the picture) of model $M_2$, which is then an $f_i$-successor of $w_2$ with $(w_1',w_2')\in Z$.
 \item if $w_1'$ is the world $w^g_1$, then in particular, $w^g_1$ is an $f_i$-successor of $w_1$. It follows that $f_i(I_g)=1$, and therefore, $i\notin S$. By the choice of $I_i$, it follows that $f(I_i)=1$. Therefore, we can choose $w_2'$ as the $I_i$-successor $w_2$ in $M_2$ where $p$ is false.
\end{itemize}

The backward condition is shown analogously.

Now indirectly assume that there is a formula $\varphi$ of $\lmF$ which is equivalent to $\paramBox{g}p$. Then in particular it follows that $M_1,w_1\models\varphi$ and $M_2,w_2\not\models\varphi$. However, since the above-constructed bisimulation $Z$ contains the pair $(w_1,w_2)$ and $\varphi$ is a formula from $\lmplus{\set{f_1,\dots,f_n}}$, it follows from Proposition~\ref{prop:invariance under f1 ... fk bisimulation} that $M_1,w_1\models\varphi$ if and only if $M_2,w_2\models\varphi$. Hence we have a contradiction.
\end{proof}

\subsubsection{Proof of Theorem~\ref{theorem:finite step succinctness}}\label{appendix:proof of theorem:finite step succinctness}

\theoremfinitestepsuccinctness*

\begin{proof}
We prove the slightly stronger result that as soon as $\mathcal G$ contains a function $g$ which is a disjunction of functions in $\mathcal F$, but not an element of $\mathcal F$, then the formula $\neg\paramBox{g}^i\neg p$ needs exponential length when expressed as an $\lmF$-formula (an equivalent $\lmF$-formula does exist due to Theorem~\ref{theorem:1 step expressiveness -- or equivalence}).
 For $n$-ary Boolean functions $f_1$ and $f_2$, we write $f_1\leq f_2$ if $f_1(r_1,\dots,r_n)\leq f_2(r_1,\dots,r_n)$ for all $r_1,\dots,r_n\in\set{0,1}$. We define
 \begin{itemize}
  \item $\mathcal F_1=\set{f\in\mathcal F\ \vert\ f\leq g}$, and
  \item $\mathcal F_2=\mathcal F\setminus\mathcal F_1$.
 \end{itemize}
 
 Since $g$ is a disjunction of functions in $\mathcal F$, it then clearly follows that $g=\vee_{f\in\mathcal F_1}f$. Let $\mathcal F_1=\set{f_1,\dots,f_k}$, and let $\mathcal F_2={f_{k+1},\dots,f_m}$. For each $i\in\set{k+1,\dots,m}$, let $\overrightarrow{\alpha_i}=(r^i_1,\dots,r^i_n)$ be chosen such that $f_i(\overrightarrow{\alpha_i})=1$, and $g(\overrightarrow{\alpha_i})=0$. Such a sequence exists since $f_i\not\leq g$ for $i\in\set{k+1,\dots,m}$.
 
 Further, let a set of vectors $\set{\overrightarrow{\beta_1},\dots\overrightarrow{\beta_t}}$ be a smallest set chosen such that
 
 \begin{itemize}
  \item $g(\overrightarrow{\beta_1})=\dots=g(\overrightarrow{\beta_t})=1$,
  \item there is no $f\in\mathcal F_1$ with $f(\overrightarrow{\beta_1})=\dots=f(\overrightarrow{\beta_t})=1$.
 \end{itemize}
 
 Such a set exists, since for each $f\in\mathcal F_1$ we have that $f\leq g$ and $f\neq g$ (since $g\notin\mathcal F$). In particular, choosing $\overrightarrow{\beta_1},\dots\overrightarrow{\beta_t}$ as the set of all assignments $\overrightarrow\beta$ with $g(\overrightarrow\beta)=1$ satisfies the two conditions (although not minimality). Clearly, for the smallest such set, we still have that $t\ge2$, since for each $\overrightarrow\beta$ with $g(\overrightarrow\beta)=1$ there is some $f\in\mathcal F_1$ with $f(\overrightarrow\beta)=1$, since we know that $g=\vee_{f\in\mathcal F_1}f$.
 
 For each $i\in\mathbb N$, let $\varphi_i=\neg\paramBox{g}^i\neg p$. Clearly, $\varphi_i$ is a $\lmG$-formula (in fact even a $\lmplus{\set g}$-formula) and the length of $\varphi_i$ is linear in $i$. Since $g$ is a disjunction of functions in $\mathcal F$, due to Theorem~\ref{theorem:1 step expressiveness -- or equivalence}, it follows that for each $i$, there is some $\lmF$-formula $\psi_i$ of minimal length such that $\varphi_i$ and $\psi_i$ are equivalent. To prove the theorem, it suffices to show that the length of each $\psi_i$ is at least $(\frac{t}{t-1})^i$.
 
 To show this, we construct models similarly to the ones from the proof in~\cite{FrenchVanDerHoekIlievKooi-SUCCINCTNESS-MODAL-LOGIC-AI-2013}. Our models are based on trees of width $t$ and depth $i$, and are constructed as follows:
 
 \begin{itemize}
  \item Each tree $T$ has a root $w_0$ with depth $0$.
  \item Each node $u\in T$ with depth smaller than $i$ has successors $v^u_1,\dots,v^u_t$, where $v^u_i$ is a $\overrightarrow{\beta_i}$-successor of $u$. (Note that due to the minimality of $\overrightarrow{\beta_1},\dots\overrightarrow{\beta_t}$, the sequence consists of pairwise different vectors).
 \end{itemize}
 
 In this proof only, for a word $s=s_1\dots s_i\in\set{1,\dots,t}^*$, we say that a node $u\in T$ is an $s$-successor of a world $v$ if $s=\epsilon$ and $u=v$, or if there is an intermediate node $u'$ such that $u'$ is a $\overrightarrow{\beta_{s_1}}$-successor of $v$ and $u$ is (inductively) a $s_2\dots s_i$-successor of $u'$.
 
 We now define our models as follows:
 
 \begin{itemize}
  \item For each $s\in\set{1,\dots,t}^i$, let $A_s$ be the model obtained from the tree $T$, where in the unique world $w_s$ that is an $s$-successor of the root of $T$, the variable $p$ is true.
  \item The model $B$ is the model obtained from the tree $T$, where the variable $p$ is false in every world.
 \end{itemize}
 
 Additionally, if $u$ and $v$ are nodes with $\depth{v}=\depth{u}+1$ and one of the following is true:
 
 \begin{itemize}
  \item $u$ is a node of some $A_s$ and $v$ is a node of $B$, or
  \item $v$ is a node of some $A_s$ and $u$ is a node of $B$,
 \end{itemize}
 
 then $v$ is an $\alpha_j$-successor of $u$ for each $j\in\set{k+1,\dots,m}$.
 
 Let $\mathbb A=\set{A_s\ \vert\ s\in\set{1,\dots,t}^i}$, and $\mathbb B=\set{B}$. Then:
 
 \begin{itemize}
  \item $\mathbb A\models\varphi_i$, since the world $w_s$ satisfies the variable $p$, and $g(\overrightarrow{\beta_j})=1$ for all relevant $j$,
  \item $\mathbb B\models\neg\varphi_i$, since the model $B$ does not contain any world in which $p$ is true and which can be reached on a path adressed by $\paramBox{g}^i$.
 \end{itemize}
 
 We first show that no $T\in\mathcal T(\node AB)$ can contain a nontrivial node (i.e., a node labelled $\node CD$ with $\emptyset\notin\set{\mathbb C,\mathbb D}$) that is labelled with $\paramBox{f_j}$ for $j\in\set{k+1,\dots,m}$ (i.e., $f_j\in\mathcal F_2$). Recall from above that in this case, $g(\overrightarrow{\alpha_j})=0$, and $f_j(\overrightarrow{\alpha_j})=1$.
 Assume indirectly that such a node $u$ labelled with $\node CD$ exists. Let the successor node of $u$ be labelled with $\node{C_1}{D_1}$. 

 We make a case distinction:
 
 \begin{itemize}
  \item First assume that $\mathbb C$ corresponds to $\mathbb A$ and $\mathbb D$ corresponds to $\mathbb B$. Then $\mathbb C_1$ contains, in particular, all $\overrightarrow{\alpha_j}$-successors of all nodes in $\mathbb C$, which includes all nodes in $B$ of the corresponding depth. In particular, this includes the successor picked for the right-hand side in the model $B$. Therefore, we have a contradiction to Proposition~\ref{prop:closed game tree disjoint sets}.
  \item The second case is symmetric.
 \end{itemize}
 
 We theorefore know that the formulas $\psi_i$ do not contain any occurrance of an operator $\paramBox{f}$ for $f\in\mathcal F_2$, hence $\psi_i$ is in fact a $\lmplus{\mathcal F_1}$-formula. Hence to conclude the proof, it suffices to show that every leaf in a tree $T\in\mathcal T(\node AB)$ contains at most $(t-1)^i$ elements, the result then follows from Theorem~\ref{theorem:pigeonhole}, since $\card{\mathbb A}=t^i$. To show this, let $T\in\mathcal T(\node AB)$, and let $u$ be a leaf in $T$. Then $\boxlabels{u}=f_{j_1}\dots f_{j_i}$ (clearly, the modal depth of $\psi_i$ must be $i$), with $j_1,\dots,j_i\in\set{1,\dots,k}$. By construction for each $f_{j_l}$, there is one value $\overrightarrow{\beta_h}$ among $\overrightarrow{\beta_1},\dots,\overrightarrow{\beta_t}$ with $f_{j_l}(\overrightarrow{\beta_h})=0$. Therefore, for each of the $t$ successors of each node in each $A_s$, the application of $f_{j_l}$ covers at most $t-1$ many of them. Since the depth of the formula (and the tree $T$) is $i$, this implies that each leaf contains only at most $(t-1)^i$ many of the models $A_s$. 

\end{proof}

\subsubsection{Proof of Main Result of Single-Step Operators, Theorem~\ref{theorem:single step main result}}\label{section:proof of theorem:single step main result}

We now use the characterizations of expressiveness and succinctness obtained in Theorems~\ref{theorem:1 step expressiveness -- or equivalence} and~\ref{theorem:finite step succinctness} to prove our main result on single-step operators, Theorem~\ref{theorem:single step main result}.

\theoremsinglestepmainresult*

\begin{proof}
  Let $S=\set{s_1,\dots,s_n}$, with an ordering chosen such that if $s_i\leq_S s_j$, then $i\leq j$. Let $k=\lceil\log_2(\card S+1)\rceil$, then $2^k>\card S$. Hence there is an injective function $i\colon S\rightarrow\mathcal P(\set{1,\dots,k})$, such that $i(s)\neq\emptyset$ for all $s\in S$.
  
  We first prove the succinctness result, i.e., define the sets $\mathcal F_s$ for $s\in S$. For this, we use the $k$ modalities $\Box_1,\dots,\Box_k$. Let $\mathcal P$ contain all projections, i.e., all $k$-ary Boolean functions of the form $p_i(r_1,\dots,r_k)=r_i$ for some $i\in\set{1,\dots,k}$. We now define, for each $s\in S$, the function $f_s=\vee_{j\in i(s)}r_j$, and then define $\mathcal F_{s_i}$ inductively (recall that if $s_j\leq_S s_i$, then $j\leq i$) as
  $$\displaystyle\mathcal F_{s_i}=\mathcal P\cup\set{f_{s_i}}\cup\bigcup_{s_j\leq_S s_i}\mathcal F_{s_i}.$$
 
  Since all involved Boolean functions are disjunctions of functions in $\mathcal P$, and each $\mathcal F_{s_j}$ contains $\mathcal P$ as a subset, it follows from Theorem~\ref{theorem:1 step expressiveness -- or equivalence} that all $\lmplus{\mathcal F_s}$ are equally expressive as $\lmplus{\mathcal P}$. In particular, all $\lmplus{\mathcal F_s}$ are equally expressive. By construction, if $s\leq_S t$, then $\mathcal F_s\subseteq\mathcal F_t$, and hence in particular, every $\lmplus{\mathcal F_s}$-formula is also a $\lmplus{\mathcal F_t}$-formula as claimed.
 
  Now assume that $s\not\leq_S t$, and let $s=s_i$, $t=s_j$ for some $i,j\in\set{1,\dots,n}$. By construction, it follows that the function $f_s$ is an element of $\mathcal F_{s_i}$, but not an element of $\mathcal S_{s_j}$. Since $\lmplus{\mathcal F_{s_i}}$ and $\lmplus{\mathcal F_{s_j}}$ are equally expressive, Theorem~\ref{theorem:finite step succinctness} then implies that $\lmplus{\mathcal F_{s_i}}$ is exponentially more succinct than $\lmplus{\mathcal F_{s_j}}$. This completes the proof.
 
  For the expressiveness result, we use a very similar construction, but leave out the projections (as their role was to ensure that all logics have the same expressive power).  We define the function $g_s=\oplus_{j\in i(s)}r_j$, and define the sets $\mathcal G_s$ as follows (inductively as above):
  $$\displaystyle\mathcal G_{s_i}=\set{g_s}\cup\bigcup_{s_j\leq_S s_i}\mathcal G_{s_i}.$$
  The proof is identical to the succinctness case above, since Theorem~\ref{theorem:1 step expressiveness -- or equivalence} implies that $\paramBox{f_s}$ cannot be expressed with any number of opeators $\paramBox{f_{s'}}$ for $s'\neq s$.
\end{proof}

\subsection{Proofs of Results in Section~\ref{proofs:arbitrary step operators} (Arbitrary-Step Operators)}\label{appendix:arbitrary step proofs}

\subsubsection{Modal Depth}

For a node $v\in T$ where $\boxlabels{v}=L_1\dots L_m$ for length-uniform languages $L_1,\dots,L_m$, we say that the \emph{modal depth} of $v$ is the value $\dcard{L_1\circ\dots\circ L_m}$. We denote this value with $\md v$. A straight-forward induction on the path from the root to the node $v$ shows the following:

\begin{proposition}\label{prop:depth of covered worlds}
 Let $v\in T\in\mathcal T_\setSsf(\modelClassA,\modelClassB)$, where $\setSsf$ contains only length-uniform languages. Let $(\mathbb X,w_X)$ be covered by $v$, where $\mathbf X\in\set{\mathbf A,\mathbf B}$. Then $\depth{w_X}=\md v$.
\end{proposition}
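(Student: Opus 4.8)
The proof is the promised induction on the length of the path in $T$ from its root to $v$; it closely parallels (and could be read off from) the proof of Lemma~\ref{lemma:fsg path property}. The one structural input is a depth fact about the models: in each base model $\baseA$, $\baseB$ every single $R_1$- or $R_2$-edge leads from a world of depth $d$ to a world of depth $d+1$. This is immediate from the construction, since each such model is a concatenation of segments and each displayed $\altword j$-edge between two consecutive named worlds is realised by $\ell$ consecutive single-step edges through the hidden intermediate worlds $u_0,\dots,u_\ell$, so depths increase by exactly one at every step along a segment. Hence, if $w'$ is an $s$-successor of $w$ via a path avoiding $\wtrap$, then $\depth{w'}=\depth w+|s|$; and since $\wtrap$ is absorbing---its only out-edges are the two self-loops---such a path automatically avoids $\wtrap$ as soon as $w'\neq\wtrap$. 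As every $L\in\setSsf$ is length-uniform (all words of $L$ have length $\dcard L$), this specialises to: if $w'\in\ssf_L(M,w)$ and $w'\neq\wtrap$, then $\depth{w'}=\depth w+\dcard L$.

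\textbf{The induction.} If $v$ is the root, $\boxlabels v$ is empty, so $\md v=0$, and $v$ is labelled $\angNode{\modelClassA}{\modelClassB}$; since $\modelClassA$ and $\modelClassB$ consist only of pointed models whose root ($w^A_0$, resp.\ some $w^B_0$) has depth $0$, any covered $(\mathbb X,w_X)$ has $\depth{w_X}=0=\md v$. For the inductive step let $v_0$ be the predecessor of $v$. If $v_0$ carries a $\neg$- or $\vee$-label, then $\boxlabels v=\boxlabels{v_0}$, so $\md v=\md{v_0}$, and every pointed model covered by $v$ is covered by $v_0$ as well (the two classes are merely swapped, resp.\ one is replaced by a subset), and the claim follows from the induction hypothesis. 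If $v_0$ carries a $\paramBox{\ssf_L}$-label, then $\boxlabels v=\boxlabels{v_0}\,L$, so by length-uniformity $\md v=\md{v_0}+\dcard L$; and, by the rules of the $\paramBox{\ssf_L}$-move, every pointed model $(\mathbb X,w_X)$ covered by $v$ satisfies $w_X\in\ssf_L(\mathbb X,w)$ for some $(\mathbb X,w)$ covered by $v_0$. If $w_X\neq\wtrap$, the depth fact and the induction hypothesis give $\depth{w_X}=\depth w+\dcard L=\md{v_0}+\dcard L=\md v$.

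\textbf{The one delicate point.} The status of $\wtrap$ is the only place that needs care: it carries incoming edges from worlds of many different depths together with self-loops, hence has no well-defined $\depth{\cdot}$, and the statement is to be understood for covered worlds $w_X\neq\wtrap$. This is harmless: because $\wtrap$ is absorbing, any covered $w_X\neq\wtrap$ is reached from a root along a $\wtrap$-avoiding path, so the depth accounting above applies; and in every application of this proposition the tree is $\parseTree$ for a minimal formula $\psi$, where Lemma~\ref{lemma:minimal formula gives trap avoiding tree} already guarantees that no covered pointed model is rooted at $\wtrap$. Apart from that, the argument is pure bookkeeping with the lengths of concatenations of length-uniform languages, and there is no real obstacle.
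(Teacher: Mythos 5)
Your proof is correct and is exactly the ``straight-forward induction on the path from the root to $v$'' that the paper asserts but does not spell out: the base case at the root, the $\neg/\vee$ cases preserving $\boxlabels{\cdot}$, and the $\paramBox{\ssf_L}$ case adding $\dcard L$ to both the modal depth and the world depth. Your explicit handling of $\wtrap$ (for which $\depth{\cdot}$ is undefined, so the statement must be read for covered worlds in the base models, as guaranteed in applications by Lemma~\ref{lemma:minimal formula gives trap avoiding tree}) is a welcome clarification rather than a deviation.
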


We say that a model $M$ is \emph{complete}, if every world $w\in M$ has both a $1$- and a $2$-successor. Note that all models $\starA$ and $\starB$ are complete, but $\baseA$ and $\baseB$ are not. We say that a node $v$ \emph{covers} a pair of models $(M_A,M_B)$ if $v$ is labelled $\node CD$ and there are worlds $w_A\in M_A$ and $w_B\in M_B$ such that $(M_A,w_A)\in\mathbb C$ and $(M_B,w_B)\in\mathbb D$ or $(M_A,w_A)\in\mathbb D$ and $(M_B,w_B)\in\mathbb C$. In particular, then $v$ is a subformula of $\varphi$ that distinguishes $(M_A,w_A)$ and $(M_B,w_B)$ in the sense that $M_A,w_a\models\varphi$ if and only if $M_B,w_b\models\neg\varphi$.

\subsubsection{Proof of Lemma~\ref{lemma:minimal formula gives trap avoiding tree}}

\lemmaminimalformulagivestrapavoidingtree*

\begin{proof}
 Since $\psi$ is equivalent to $\paramBox{A_\ell}^ip$, we know that $\psi$ is $p$-monotone in the following sense: If $M$ and $M'$ are models where $M'$ is obtained from $M$ by making $p$ true in additional worlds and $M,w\models\psi$, then $M',w\models\psi$ holds as well. Since $\psi$ is minimal, this implies that $\psi$ only contains positive occurrences of $p$, i.e., the variable $p$ only occurs under an even numer of negations. Furthermode, $p$ is the only propositional variable appearing in $\psi$. Therefore, every leaf of $\parseTree$ is labelled with the variable $p$ and a class of models $\node CD$ such that $\mathbb C$ corresponds to $\modelClassA$, and $\mathbb D$ to $\modelClassB$. 
 
 We now show inductively that for every node $v\in\parseTree$ labelled with $\node CD$ or $\node DC$ such that $\mathbb C$ corresponds to $\modelClassA$ and $\mathbb D$ corresponds to $\modelClassB$, for every model $(\starA,w_A)\in\mathbb C$ there is a descendent of $w_A$ in $\starA$ where $p$ is true, and for every model $(\starB,w_B)\in\mathbb D$, there is a descendent of $w_B$ in $\starB$ where $p$ is false. Since the reflexive singleton of $\starA$ ($\starB$) does not have a successor where $p$ is true (false), this shows that $\parseTree$ does not contain a node that is labelled with a model $(\starA,\wtrap)$ or $(\starB,\wtrap)$.
 
 We prove this claim inductively over the tree structure. For the leaves, the claim follows from the above, as every leaf is labelled with $p$ and $\node CD$ where $\mathbb C$ ($\mathbb D$) corresponds to $\modelClassA$ ($\modelClassB$), hence every pointed models in $\mathbb C$ satisfy $p$, and all pointed models in $\mathbb B$ satisfy $\overline p$.
 
 Now let $v$ be a non-leaf node labelled $\node CD$ in $\parseTree$ such that $v$ is not a leaf and the above claim is true for all successors of $v$. We make a case distinction:
 
 \begin{itemize}
  \item If $v$ is labelelled $\neg$, then $v$ has a single successor $v'$ for which the claim holds by induction. The result for $v$ follows trivially, since $v'$ is labelled $\node DC$ or $\node CD$, where $\mathbb D$ corresponds to $\modelClassB$ and $\mathbb C$ to $\modelClassA$.
  \item If $v$ is labelled $\vee$, then $v$ has two successors labelled $\node{C_1}{D}$ and $\node{C_2}{D}$ with $\mathbb C_1\cup\mathbb C_2=\mathbb D$, and for which the claim is true. Since the claim is true for each model in $\mathbb C_1$ and $\mathbb C_2$, it is also true for their union, $\mathbb C$.
  \item If $v$ is laballed $\paramBox{L}$, then $v$ has a single successor $v'$ labelled $\node{C_1}{D_1}$, where $\mathbb C_1$ ($\mathbb D_1$) contains (at least) one descendent for each pointed model in $\mathbb C$ ($\mathbb D$). Since the claim is true for the sets $\mathbb C_1$ and $\mathbb D_1$ and the descendent relation is transitive, the claim for $\mathbb C$ and $\mathbb D$ follows.
 \end{itemize}

 This completes the proof.
\end{proof}

\subsubsection{Proof of Lemma~\ref{lemma:alternation:splitting only at multilpes of l}}

\lemmaalternationsplittingonlyatmultiplesofl*

\begin{proof}
 Assume that this is not the case, and let $d:=\dcard{L_1\circ\dots\circ L_{m-1}}$, let $v$ be labelled $\node CD$. By Proposition~\ref{prop:depth of covered worlds}, for every pointed model $(M,w)\in\mathbb C\cup\mathbb D$, we have that $\depth{w}=d$. By construction, if $d$ is not a multiple of $\ell$, then each world $w\in\starA$ or $\starB$ with $\depth{w}=d$ does not have both a $1$- and a $2$-successor in the base model $\baseA$ ($\baseB$). Since $A_{\ell'}$ contains the word $\variableLAltword{\ell'}{1}$ (starting with $1$) and $\variableLAltword{\ell'}{2}$ (starting with $2$), the successor node $v'$ of $v$, labelled with $\node{C_1}{D_1}$, contains a model $(\starA,w_A)$or $(\starB,w_B)$ in $\mathbb C_1$ where $w_A\notin\baseA$ or $w_B\notin\baseB$, i.e., $w_A=\wtrap$ or $w_B=\wtrap$. This is a contradiction to Lemma~\ref{lemma:minimal formula gives trap avoiding tree}.
\end{proof}

\subsubsection{Proof of Lemma~\ref{lemma:alternation: a-l-s always in language}}

\lemmaalternationalsalwaysinlanguage*

\begin{proof}
Due to Lemma~\ref{lemma:minimal formula gives trap avoiding tree}, we know that for every covered model $(\starB,w_B)$, $w_B$ is a world of the base model $\baseB$. Due to Proposition~\ref{prop:depth of covered worlds}, we know that for $(\starB,w_B)$ as above, $\depth{w_B}=d$.
Therefore, $w_B$ is the unique $\altword s[1\dots d]$-successor of $\starB$'s root in $\starB$. Hence, due to Lemma~\ref{lemma:fsg path property}, we know that $\altword s[1\dots d]\in L$.
\end{proof}

\subsubsection{Proof of Lemma~\ref{lemma:l' not multiple of l forces constant}}

\lemmalprimenotmultipleoflforcesconstant*

\begin{proof}
 Since $L_i=A_{f\cdot\ell+q}$ with $q\ge1$, Lemma~\ref{lemma:alternation:splitting only at multilpes of l} implies that $\dcard{L_1\circ\dots\circ L_{i-1}}=g\cdot\ell$ for some $g$. Let $u=\frac1\ell\cdot\dcard{L_1\circ\dots\circ L_{i-1}}+f+1=g+f+1$ and let $d':= \dcard{L_1\circ\dots\circ L_i}$, then $d'=g\cdot\ell + f\cdot\ell+q'=(g+f)\cdot\ell+q'$. Since this is not a multiple of $\ell$, Lemma~\ref{lemma:alternation:splitting only at multilpes of l}, implies that $L_{i+1}$ (and $i+1$ does exist, since otherwise the formula does not have the full modal depth) cannot be $A_{\ell'}$ for any $\ell'$, hence $L_{i+1}=\set{\alpha}$ for some $\alpha\in\set{1,2}$. Then it follows that for each word $x\in L$, we have that $x[d'+1]=x[(g+f)\cdot\ell+q'+1]=\alpha$. Due to Lemma~\ref{lemma:alternation: a-l-s always in language}, we know that $\altword{s_1},\altword{s_2}\in L$. With the above, this implies that $\altword{s_1}[(g+f)\cdot\ell+q'+1]=\altword{s_2}[(g+f)\cdot\ell+q'+1]=\alpha$.
 
 Now indirectly assume that $s_1[u]\neq s_2[u]$, i.e., $s_1[g+f+1]\neq s_2[g+f+1]$. In particular, then $\altword{s_1}[(g+f)\cdot\ell+q'+1]\neq a^\ell_{s_2}[(g+f)\cdot\ell+q'+1]$, which is a contradiction to the above.
\end{proof}

\subsubsection{Proof of Lemma~\ref{lemma:l' > l forces identities}}

\lemmalprimegreaterthanlforcesidentities*

\begin{proof}
 Withous loss of generality, we assume $\ell``>\ell$, since otherwise, there is no $j$ in the required interval and the claim is trivial. With $d_i$, we again denote $\dcard{L_1\circ\dots\circ L_i}$. Since $L_i=A_{\ell'}$ with $\ell'\ge1$, we know from Lemma~\ref{lemma:alternation:splitting only at multilpes of l} that $d_{i-1}$ is a multiple of $\ell$, and by choice of $u$ it follows that $d_{i-1}=u\cdot\ell$. In particular, $u$ is a natural number.
 
 Let $L=L_1\circ\dots\circ L_m$. Then, due to Lemma~\ref{lemma:alternation: a-l-s always in language}, we know that $\altword s[1\dots d_m]\in L$. 
 
 Since $L_i=A_{\ell'}$, we know that for each word $x\in L$, the subword $x[d_{i-1}+1\dots d_{i-1}+\ell']$ is alternating.
 
 Therefore, since $\altword s\in L$ and $d_{i-1}=u\cdot\ell$, we know that $\altword s[u\cdot\ell+1\dots u\cdot\ell+\ell']$ is alternating, i.e., for each position $i\in\set{u\cdot\ell+1,\dots,u\cdot\ell+\ell'-1}$, we have that $\altword s[i]\neq\altword s[i+1]$. 
 
 Now let $1\leq j<\frac{\ell'}\ell1$, and let $i=(u+j)\cdot\ell$. Then, since $j\ge1$ and $\ell\ge1$, it follows that $i=(u+j)\cdot\ell\ge(u+1)\cdot\ell\geq u\cdot\ell+1$, and since $j<\frac{\ell'}\ell$, we have that $i=(u+j)\cdot\ell<(u+\frac{\ell'}\ell)\cdot\ell=u\cdot\ell+\ell'$, and hence $i\leq u\cdot\ell+\ell'-1$. Therefore, $i$ is in the above interval, and hence $\altword s[i]\neq\altword s[i+1]$.
 
 Since $i=(u+j)\cdot\ell$, by the definition of $\altword s$, it follows that $\altword s[i]$ is the last symbol of $\altword{s[u+j]}$, and $\altword{s[i+1]}$ is the first symbol of $\altword{s[u+j+1]}$.
 
 Hence we know that the last symbol of $\altword{s[u+j]}$ is different from the first symbol of $\altword{s[u+j+1]}$.
 
 \begin{itemize}
   \item If $\ell$ is even, then for both $\alpha\in\set{1,2}$, the first symbol of $\altword\alpha$ is $\alpha$, and the last symbol of $\altword\alpha$ is $3-\alpha$. Hence $\alpha=s[u+j]\neq s[u+j+1]=3-\alpha$ would imply that the last symbol of $\altword{s[u+j]}$ (namely $3-\alpha$) is identical to the first symbol of $\altword{s[u+j+1]}$ (which is also $3-\alpha$), but from the above we know that the last symbol of $\altword{s[u+j]}$ is different from the first symbol of $\altword{s[u+j+1]}$. Hence in this case $s[u+j]=s[u+j+1]$.
   \item If $\ell$ is odd, then for both $\alpha\in\set{1,2}$, both the first and the last symbol of $\altword\alpha$ is $\alpha$. Since the last symbol of $\altword{s[u+j]}$ is different from the first symbol of $\altword{s[u+j+1]}$, this implies that $s[u+j]\neq s[u+j+1]$. Since only the symbols $1$ and $2$ appear, this means that $s[u+j]=3-s[u+j+1]$.
 \end{itemize}
\end{proof}

\subsubsection{Proof of Theorem~\ref{theorem:alternation expressiveness}}

\theoremalternationexpressiveness*

\begin{proof}
 Assume that such a formula exists, and let $\psi$ be one of minimal size. Define classes of models $\mathbb A=\oneModelClassA$ and $\mathbb B=\oneModelClassB$. Then $\mathbb A\models\psi$ and $\mathbb B\models\neg\psi$. Let $T=\parseTree$. Then $T$ only contains the operators available in $\lmA I$. Clearly, there is a leaf $v$ of $T$ that covers the string $s=1$. Clearly, every ancestor of $v$ covers the string $s$ as well. Let $\boxlabels{v}=L_1\dots L_m$. Then for each $i$, we have that $L_i$ is of the form $A_{\ell'}$ for some $\ell'\neq l$. In particular, we have that $L_1=A_{\ell'}$ for some $\ell'\neq\ell$. Clearly, we can without loss of generality assume that $\ell'<\ell$. From Lemma~\ref{lemma:alternation:splitting only at multilpes of l}, it then follows that $L_2$ cannot be of the form $A_{\ell'}$ for any $\ell'\in\mathbb N$. However, since in the logic $\lmA{I}$ with $\ell\notin I$, only languages of this form occur, we have a contradiction.
\end{proof}

\subsubsection{Proof of Lemma~\ref{lemma:restriction on sets by one branch}}

\lemmarestrictiononsetsbyonebranch*

\begin{proof}
 Let $\boxlabels{v}=L_1\dots L_m$. We consider each $j$ for which $\dcard{L_1\circ\dots\circ L_{j-1}}$ is a multiple of $\ell$, and show how the operator $L_i$ restricts the possible values of $s$.
 Hence let $\dcard{L_1\circ\dots\circ L_{j-1}}=g\cdot\ell$, we say that $g\cdot\ell$ is the depth in which this operator appears. There are three cases to consider.
 
 \begin{enumerate}
  \item If $L_i=\set\alpha$ for some $\alpha\in\set{1,2}$, then $L$ contains only words $x$ with $x[g\cdot\ell+1]=\alpha$. Due to Lemma~\ref{lemma:alternation: a-l-s always in language}, we know that $\altword s\in L$ for all $s\in S$, and hence $\altword s[g\cdot\ell+1]=\alpha$, which implies that $s[g+1]=\alpha$. Hence this operator rules out $\frac 12$ of all possible strings in $\set{1,2}^i$.
  Due to Lemma~\ref{lemma:alternation:splitting only at multilpes of l}, the languages $L_{i+1},\dots, L_{i+\ell-2}$ are not of the form $A_{\ell'}$ for some $\ell'$, hence the next restriction occurs at depth $(g+1)\cdot\ell$.
  \item If $L_i=A_{f\cdot\ell}$ for some $f>1$ (recall that $L_i\neq A_\ell$ for all $i$), then, by Lemma~\ref{lemma:l' > l forces identities}, the elements of $S$ must satisfy a sequence of $(f-1)$ equalities. Hence this operator rules out all but $\frac1{2^{(f-1)}}$ strings in $\set{1,2}^i$, and, again due to Lemma~\ref{lemma:alternation:splitting only at multilpes of l}, the next restriction appears at depth $(g+f)\cdot\ell$.
  \item If $L_i=A_{f\cdot\ell+q}$ for some $f\ge 0$ and $1\leq q<\ell$, then, by Lemma~\ref{lemma:l' > l forces identities}, the elements of $S$ must satisfy a sequence of $f$ identities (one identity for each $j\in\set{0,\dots,f-1}$).
  Hence all but $\frac1{2^f}$ elements of $\set{1,2}^i$ are ruled out, and, as above, the next restriction appears at the next multiple of $\ell$, i.e., at depth $(g+f+1)\cdot\ell$.
 \end{enumerate}
 
 Note that is is easy to see that the conditions required by $L_i$ at different indices are independent, as they refer to different indices of the strings $s$. Hence the following three operations appear:
 
 \begin{itemize}
  \item Increase depth by $\ell$, and add a restriction factor of $\frac12$,
  \item Increase depth by $f\cdot\ell$, and add a restriction factor of $\frac1{2^{f-1}}$,
  \item Increase depth by $(f+1)\cdot\ell$, and add a restriction factor of $\frac1{2^f}$.
 \end{itemize}
 
 In each case, increasing the depth by $2\cdot\ell$ adds a restrictin factor of at least $\frac12$. Since the complete depth must be $i\cdot\ell$, this means that the minimum restriction factor is at most $\frac{1}{2^{\frac i2}}$, i.e., we have that $\card S\leq 2^i\cdot\frac{1}{2^{\frac i2}}=2^{\frac i2}$ as claimed.
\end{proof}

\subsubsection{Proof of Theorem~\ref{theorem:alternation succinctness}}

\theoremalternationsuccinctness*

\begin{proof}
 Recall that $\modelClassB=\set{\starB\ \vert\ s\in\set{1,2}^i}$. In particular $\card{\mathbb B}=2^i$, and that $\modelClassA\models\psi$ and $\modelClassB\models\neg\psi$. Due to Theorem~\ref{theorem:pigeonhole}, it thus suffices to show that each leaf $u$ of a tree $T\in\mathcal T(\angNode\modelClassA\modelClassB)$ covers at most $2^{\frac i2}$ strings $s\in\set{1,2}^i$. Clearly it is enough to show the result for $T=\parseTree$, since $\psi$ is a formula equivalent to $\paramBox{A_\ell}^i$ of minimal size. Without loss of generality we can assume that the modal depth of each leaf ov $T$ is exactly $i\cdot\ell$. Clearly, for each $s\in\set{1,2}^i$, there is a leaf $v_s$ of $T$ that covers $s$.  Lemma~\ref{lemma:restriction on sets by one branch} states that each leaf $v_s$ can cover at most $2^{\frac i2}$ elements as claimed. Therefore, $T$ must have at least $\frac{2^i}{2^{\frac i2}}=2^{\frac i2}$ leaves, which concludes the proof.
\end{proof}

\subsubsection{Proof of Main Result on Alternation Languages, Theorem~\ref{theorem:main alternation result: subset equivalence}}

\theoremmainalternationresultsubsetequivalence*

\begin{proof}
  Since $I_1\nsubseteq I_2$, there is some $\ell\in I_1\setminus I_2$.
  \begin{enumerate}
   \item For each $i\in\mathbb N$, define $\varphi_i=\paramBox{A_\ell}^{2i}p$. Then clearly, the length of $\varphi_i$ is linear in $i$. For each $i$, let $\psi_i$ be the smallest formula in $\lmA{I_2}$ that is equivalent to $\varphi_i$. Then, due to Theorem~\ref{theorem:alternation succinctness}, we know that $\card{\psi_i}\ge 2\cdot2^{\frac i2}$, hence the length of $\psi_i$ is exponential in the length of $\varphi_i$ as claimed.
   \item The formula $\paramBox{A_\ell}p$ is a $\lmA{I_1}$-formula, and due to Theorem~\ref{theorem:alternation expressiveness}, there is no $\lmA{I_2}$-formula equivalent to $\paramBox{A_\ell}p$.
  \end{enumerate}
\end{proof}

\subsubsection{Proof of Main Result on Arbitrary-step Operators, Theorem~\ref{theorem:arbitrary step main result}}

\theoremarbitrarystepmainresult*

\begin{proof}
 The proof is very similar to the proof of Theorem~\ref{theorem:single step main result}. Let $S=(s_\ell)_{\ell\in\mathbb N}$, where $s_j\leq_S s_i$ implies $j\leq i$. Now, for $\ell\ge1$, inductively define $I_{s_\ell}$ as follows:
 
 $$I_{s_\ell}=\set{\ell}\cup\bigcup_{t\leq_S s}I_t.$$
 
 Clearly, we have that $I_s\subseteq I_t$ if and only if $s\leq_S t$. Therefore, the result follows from Theorem~\ref{theorem:main alternation result: subset equivalence} with the choice $\mathcal L_s=\set{A_\ell\ \vert\ \ell\in I_s}\cup\set{\set1,\set2}$ and $\mathcal K_s=\set{A_\ell\ \vert\ \ell\in I_s}$.
\end{proof}

\end{appendix}

\end{document}